\newcommand{\mca}{\mathcal}
\newcommand{\mb}{\mathbf}
\def\argmax{\mathop{\rm arg\,max}}%
\def\argmin{\mathop{\rm arg\,min}}%
\newtheorem{theorem}{Theorem}[section]
\newtheorem{lemma}[theorem]{Lemma}
\newtheorem{proposition}[theorem]{Proposition}
\newtheorem{definition}{Definition}[section]
\newtheorem{remark}[theorem]{Remark}
\title{No-Regret Learning in Dynamic Competition with Reference Effects Under Logit Demand}
\author{%
% \thanks{x}
  Mengzi Amy Guo \\
  IEOR Department\\
  UC Berkeley\\
  % Pittsburgh, PA 15213 \\
  \texttt{mengzi\_guo@berkeley.edu} \\
  \And
  Donghao Ying \\
  IEOR Department\\
  UC Berkeley\\
  % Pittsburgh, PA 15213 \\
  \texttt{donghaoy@berkeley.edu} \\
  \And
  Javad Lavaei \\
  IEOR Department\\
  UC Berkeley\\
  % Pittsburgh, PA 15213 \\
  \texttt{lavaei@berkeley.edu} \\
  \And
  Zuo-Jun Max Shen\\
  IEOR Department\\
  UC Berkeley\\
  \texttt{maxshen@berkeley.edu} \\
  % examples of more authors
  % \And
  % Coauthor \\
  % Affiliation \\
  % Address \\
  % \texttt{email} \\
  % \AND
  % Coauthor \\
  % Affiliation \\
  % Address \\
  % \texttt{email} \\
  % \And
  % Coauthor \\
  % Affiliation \\
  % Address \\
  % \texttt{email} \\
  % \And
  % Coauthor \\
  % Affiliation \\
  % Address \\
  % \texttt{email} \\
}
\begin{document}

\maketitle

\begin{abstract}
This work is dedicated to the algorithm design in a competitive framework, with the primary goal of learning a stable equilibrium. We consider the dynamic price competition between two firms operating within an opaque marketplace, where each firm lacks information about its competitor. The demand follows the multinomial logit (MNL) choice model, which depends on the consumers' observed price and their reference price, and consecutive periods in the repeated games are connected by reference price updates. We use the notion of stationary Nash equilibrium (SNE), defined as the fixed point of the equilibrium pricing policy for the single-period game, to simultaneously capture the long-run market equilibrium and stability. We propose the online projected gradient ascent algorithm (OPGA), where the firms adjust prices using the first-order derivatives of their log-revenues that can be obtained from the market feedback mechanism. Despite the absence of typical properties required for the convergence of online games, such as strong monotonicity and variational stability, we demonstrate that under diminishing step-sizes, the price and reference price paths generated by OPGA converge to the unique SNE, thereby achieving the no-regret learning and a stable market. Moreover, with appropriate step-sizes, we prove that this convergence exhibits a rate of $\mathcal{O}(1/t)$.
\end{abstract}
%We analyze the concurrence of Nash equilibrium and long-run market stability in a dynamic price competition between two firms operating within an opaque marketplace, where each firm lacks information about its competitor.
% Unlike the majority of existing literature on game theory, our problem does not possess desirable properties such as strong monotonicity or variational stability.
% By developing a novel analysis, we demonstrate that under diminishing step-sizes, the price and reference price paths generated by OPGA converge to the unique SNE, thereby achieving the no-regret learning and a stable market.

\section{Introduction}
The memory-based reference effect is a well-studied strategic consumer behavior in marketing and economics literature, which refers to the phenomenon that consumers shape their price expectations (known as reference prices) based on the past encounters and then use them to judge the current price (see \cite{mazumdar2005reference} for a review).
A substantial body of empirical research has shown that the current demand is significantly influenced by the historical prices through reference effects (see, e.g., \cite{hardie1993modeling, lattin1989reference, briesch1997comparative,kalyanaram1995empirical}).
Driven by the ubiquitous evidence, many studies have investigated various pricing strategies in the presence of reference effects \cite{popescu2005dynamic, fibich2003explicit,chen2017efficient, jiang2022intertemporal}.
However, the aforementioned works have all focused on the case with a monopolistic seller, leaving the understanding of how the reference effect functions in a competition relatively limited compared to its practical importance.
This issue becomes even more pronounced with the surge of e-commerce, as the increased availability of information incentivizes consumers to make comparisons among different retailers.

Moreover, we notice that in competitive markets, the pricing problem with reference effects is further complicated by the lack of transparency, where firms are cautious about revealing confidential information to their rivals.
Although the rise of digital markets greatly accelerates data transmission and promotes the information transparency, which is generally considered beneficial because of the improvement in marketplace efficiency \cite{tapscott2003naked},
many firms remain hesitant to fully embrace such transparency for fear of losing their informational advantages. 
This concern is well-founded in the literature, for example the work \cite{bloomfield2000can} demonstrates that
dealers with lower transparency levels generate higher profits than their more transparent counterparts, and 
\cite{wilson2000b2b} highlights that in reverse auctions, transparency typically enables buyers to drive the price down to the firm's marginal cost.
Hence, as recommended by \cite{granados2013transparency}, companies should focus on the strategic and selective disclosure of information to enhance their competitive edge, rather than pursuing complete transparency.

Inspired by these real-world practices, in this article, we study the \textit{duopoly competition with reference effects in an opaque market}, where each firm has access to its own information but does not possess any knowledge about its competitor, including their price, reference price, and demand. 
The consumer demand follows the multinomial logit (MNL) choice model, which naturally reflects the cross-product effects among substitutes.
Furthermore, given the intertemporal characteristic of the memory-based reference effect, we consider the game in a dynamic framework, i.e., the firms engage in repeated competitions with consecutive periods linked by reference price updates.
In this setting, it is natural to question \textit{whether the firms can achieve some notion of stable equilibrium by employing common online learning algorithms to sequentially set their prices.}
A vast majority of the literature on online games with incomplete information targets the problem of finding no-regret algorithms that can direct agents toward a Nash equilibrium, a stable state at which the agents have no incentive to revise their actions (see, e.g., \cite{bravo2018bandit,mertikopoulos2019learning,lin2020finite}).  
Yet, the Nash equilibrium alone is inadequate to determine the stable state for our problem of interest, given the evolution of reference prices.
For instance, even if an equilibrium price is reached in one period, the reference price update makes it highly probable that the firms will deviate from this equilibrium in subsequent periods.
As a result, to jointly capture the market equilibrium and stability, we consider the concept of \textit{stationary Nash equilibrium} (SNE), defined as the fixed point of the equilibrium pricing policy for the single-period game.
Attaining this long-term market stability is especially appealing for firms in a competitive environment, as a stable market fosters favorable conditions for implementing efficient planning strategies and facilitating upstream operations in supply chain management \cite{caves1978market}. 
In contrast, fluctuating demands necessitate more carefully crafted strategies for effective logistics management \cite{song1993inventory,aviv2001capacitated,treharne2002adaptive}.

The concept of SNE has also been investigated in \cite{golrezaei2020no} and \cite{guo2023duopoly}.
Specifically, \cite{golrezaei2020no} examines the long-run market behavior in a duopoly competition with linear demand and a common reference price for both products.
However, compared to the MNL demand in our work, linear demand models generally fall short in addressing interdependence among multiple products  \cite{briesch1997comparative, wang2018prospect, guo2022multi}.
On the other hand, \cite{guo2023duopoly} adopts the MNL demand with product-specific reference price formulation; yet, their analysis of long-term market dynamics relies on complete information and is inapplicable to an opaque market.
In contrast, our work accommodates both the partial information setting and MNL choice model.
We summarize our main contributions below:
\begin{enumerate}[leftmargin=*]
    \item \textit{Formulation.} 
    We introduce a duopoly competition framework with an opaque market setup that takes into account both cross-period and cross-product effects through consumer reference effects and the MNL choice model. 
    We use the notion of \textit{stationary Nash equilibrium} (SNE) to simultaneously depict the equilibrium price and market stability. 
    
    \item \textit{Algorithm and convergence.} We propose a no-regret algorithm, namely the \textit{Online Projected Gradient Ascent} (OPGA), where each firm adjusts its posted price using the first-order derivative of its log-revenue.
    When the firms execute OPGA with diminishing step-sizes, we show that their prices and reference prices converge to the unique SNE, leading to the long-run market equilibrium and stability.
    Furthermore, when the step-sizes decrease appropriately in the order of $\Theta(1/t)$, we demonstrate that the prices and reference prices converge at a rate of $\mca{O}(1/t)$.

    \item \textit{Analysis.} We propose a novel analysis for the convergence of OPGA by exploiting characteristic properties of the MNL demand model.
    General convergence results for online games typically require restrictive assumptions such as strong monotonicity \cite{tatarenko2018learning,bravo2018bandit,lin2020finite} or variational stability \cite{mertikopoulos2017convergence,mertikopoulos2019learning,hsieh2021adaptive,hsieh2022no}, as well as the convexity of the loss function \cite{abernethy2008optimal,gordon2008no}. 
    However, our problem lacks these favorable properties, rendering the existing techniques inapplicable. 
    Additionally, compared to standard games where the underlying environment is static, the reference price evolution in this work further perplexes the convergence analysis.

    \item \textit{Managerial insights.} 
    Our study illuminates a common issue in practice, where the firms are willing to cooperate but are reluctant to divulge their information to others.
    The OPGA algorithm can help address this issue by guiding the firms to achieve the SNE as if they had perfect information, while still preserving their privacy.
\end{enumerate}

\section{Related literature}\label{subsec: related_work}
\vspace{-3pt}
Our work on the dynamic competition with reference effects in an opaque market is related to the several streams of literature.

\textbf{Modeling of reference effect.} The concept of reference effects can be traced back to the adaptation-level theory proposed by \cite{helson1964adaptation}, which states that consumers evaluate prices against the level they have adapted to.
Extensive research has been dedicated to the formulation of reference effects in the marketing literature, where two mainstream models emerge: memory-based reference price and stimulus-based reference price \cite{briesch1997comparative}.
The memory-based reference model, also known as the internal reference price, leverages historical prices to form the benchmark
 (see, e.g., \cite{lattin1989reference,kalyanaram1994empirical, kalyanaram1995empirical}).  
On the contrary, the stimulus-based reference model, or the external reference price, asserts that the price judgment is established at the moment of purchase utilizing current external information such as the prices of substitutable products, rather than drawing on past memories (see, e.g., \cite{lynch1982memory,hardie1993modeling}).
According to the comparative analysis by \cite{briesch1997comparative}, among different reference models, the memory-based model that relies on a product's own historical prices offers the best fit and strongest predictive power in multi-product settings.
Hence, our paper adopts this type of reference model, referred to as the brand-specific past prices formulation in \cite{briesch1997comparative}.

\textbf{Dynamic pricing in monopolist market with reference effect.}
The research on reference effects has recently garnered increasing attention in the field of operations research, particularly in monopolist pricing problems.
As memory-based reference effect models give rise to the intertemporal nature, the price optimization problem is usually formulated as a dynamic program. Similar to our work, their objectives typically involve determining the long-run market stability under the optimal or heuristic pricing strategies.
The classic studies by \cite{fibich2003explicit} and \cite{popescu2005dynamic} show that in either discrete- or continuous-time framework, the optimal pricing policy converges and leads to market stabilization under both loss-neutral and loss-averse reference effects. 
More recent research on reference effects primarily concentrates on the piecewise linear demand in single-product contexts and delves into more comprehensive characterizations of myopic and optimal pricing policies (see, e.g., \cite{hu2016dynamic,chen2017efficient}).
Deviating from the linear demand, \cite{jiang2022intertemporal} and \cite{guo2022multi} employ the logit demand and analyze the long-term market behaviors under the optimal pricing policy, where the former emphasizes on consumer heterogeneity and the later innovates in the multi-product setting. 

While the common assumption in the aforementioned studies is that the firm knows the demand function, another line of research tackles the problem under uncertain demand, where they couple monopolistic dynamic pricing with reference effects and online demand learning \cite{den2022dynamic,ji2023online}. 
Although these works include an online learning component, our paper distinguishes itself from them in two aspects. 
Firstly, the uncertainty that needs to be learned is situated in different areas. 
The works by \cite{den2022dynamic} and \cite{ji2023online} assume that the seller recognizes the structure of the demand function (i.e., linear demand) but requires to estimate the model's responsiveness parameters.
By contrast, in our competitive framework, the firms are aware of their own demands but lack knowledge about their rivals that should be learned.
Secondly, the objectives of \cite{den2022dynamic} and \cite{ji2023online} are to design algorithms to boost the total revenues from a monopolist perspective, whereas our algorithm aims to guide firms to reach Nash equilibrium and market stability concurrently.

\textbf{Price competition with reference effects.}
Our work is pertinent to the studies on price competition with reference effects \cite{coulter2014pricing,federgruen2016price,golrezaei2020no,colombo2021dynamic,guo2023duopoly}.
In particular, \cite{federgruen2016price} is concerned with single-period price competitions under various reference price formulations. 
The paper \cite{coulter2014pricing} expands the scope to a dynamic competition with symmetric reference effects and linear demand, though their theoretical analysis on unique subgame perfect Nash equilibrium is confined to a two-period horizon. 
The more recent work \cite{colombo2021dynamic} further extends the game to the multi-stage setting, where they obtain the Markov perfect equilibrium for consumers who are either loss-averse or loss-neutral. 
Nonetheless, unlike the discrete-time framework in \cite{coulter2014pricing} and this paper, \cite{colombo2021dynamic} employs the continuous-time framework, which significantly differs from its discrete-time counterpart in terms of analytical techniques.
The recent article \cite{guo2023duopoly} also studies the long-run market behavior and bears similarity to our work in model formulations.
However, a crucial difference exists: \cite{guo2023duopoly} assumes a transparent market setting, whereas we consider the more realistic scenario where the market is opaque and firms cannot access information of their competitors.

More closely related to our work, the work
\cite{golrezaei2020no} also looks into the long-run market stability of the duopoly price competition in an opaque marketplace. 
However, there are two notable differences between our paper and theirs. 
The key distinction lies in the selection of demand function. We favor the logit demand over the linear demand used in \cite{golrezaei2020no}, as the logit demand exhibits superior performance in the presence of reference effects \cite{wang2018prospect, jiang2022intertemporal}.
However, the logit model imposes challenges for convergence result since a crucial part of the analysis in \cite{golrezaei2020no} hinges on the demand linearity \cite[Lemma 9.1]{golrezaei2020no}, which is not satisfied by the logit demand.
Second, \cite{golrezaei2020no} assumes a uniform reference price for both products, whereas we consider the product-specific reference price, which has the best empirical performance as illustrated in \cite{briesch1997comparative}. 
These adaptations, even though beneficial to the expressiveness and flexibility of the model, render the convergence analysis in \cite{golrezaei2020no} not generalizable to our setting.

\textbf{General convergence results for online games.}
Our paper is closely related to the study of online games, where a typical research question is whether online learning algorithms can achieve the Nash equilibrium for multiple agents who aim to minimize their local loss functions.
In this section, we review a few recent works in this field. 
For games with continuous actions, \cite{bravo2018bandit} and \cite{lin2021optimal} show that the online mirror descent converges to the Nash equilibrium in strongly monotone games.
The work \cite{lin2020finite} further relaxes the strong monotonicity assumption and examines the last-iterate convergence for games with unconstrained action sets that satisfy the so-called ``cocoercive'' condition.
In addition, \cite{mertikopoulos2017convergence} and \cite{mertikopoulos2019learning} establish the convergence of the dual averaging method under a more general condition called global variational stability, which encompasses the cocoercive condition as a subcase.
More recently, \cite{golowich2020tight, hsieh2021adaptive,hsieh2022no} demonstrate that extra-gradient approaches, such as the optimistic gradient method, can achieve faster convergence to the Nash equilibrium in monotone and variationally stable games. 
We point out that in the cited literature above, either the assumption itself implies the convexity of the local loss function such as the strong monotonicity or the convergence to the Nash equilibrium additionally requires the loss function to be convex.
By contrast, in our problem, the revenue function of each firm is not concave in its price and does not satisfy either of the properties listed above.
Additionally, incorporating the reference effect further complicates the analysis, as the standard notion of Nash equilibrium is insufficient to characterize convergence due to the dynamic nature of reference price.

\section{Problem formulation}\label{sec: model} 
\subsection{MNL demand model with reference effects}
We study a duopoly price competition with reference price effects, where two firms each offer a substitutable product, labeled as $H$ and $L$, respectively. 
Both firms set prices simultaneously in each period throughout an infinite-time horizon. 
To accommodate the interaction between the two products, we employ a multinomial logit (MNL) model, which inherently captures such cross-product effects. 
The consumers' utility at period $t$, which depends on the posted price $p_i^t$ and reference price $r_i^t$, is defined as:
\begin{equation}\label{eq: def_u}
    U_i\left(p_i^t, r_i^t\right)=u_i\left(p_i^t, r_i^t\right)+\epsilon_i^t=a_i-b_i \cdot p_i^t+c_i \cdot\left(r_i^t-p_i^t\right)+\epsilon_i^t, \quad \forall i \in\{H, L\},
\end{equation}
where $u_i(p_i^t, r_i^t)$ is the deterministic component, and $(a_i, b_i, c_i)$ are the given parameters. In addition, the notation $\epsilon_i^t$ denotes the random fluctuation following the i.i.d. standard Gumbel distribution. 
According to the random utility maximization theory \cite{mcfadden1973conditional},  the demand/market share at period $t$ with the posted price $\mb{p}^t = (p_H^t, p_L^t)$ and reference price $\mb{r}^t = (r_H^t, r_L^t)$ for product $i \in \{H, L\}$ is given by
\begin{equation}\label{eq: demand}
    d_i\big(\mathbf{p}^t, \mathbf{r}^t\big)=d_i\big((p_i^t, p_{-i}^t),(r_i^t, r_{-i}^t)\big)=\frac{\exp \big(u_i(p_i^t, r_i^t)\big)}{1+\exp \big(u_i(p_i^t, r_i^t)\big)+\exp \big(u_{-i}(p_{-i}^t, r_{-i}^t)\big)},
\end{equation}
where the subscript $-i$ denotes the other product besides product $i$.
Consequently, the expected revenue for each firm/product at period $t$ can be expressed as
\begin{equation}\label{eq: def_Pi}
\Pi_i(\mathbf{p}^t, \mathbf{r}^t) = \Pi_i\big((p_i^t, p_{-i}^t), (r_i^t, r_{-i}^t)\big) = p_i^t \cdot d_i(\mathbf{p}^t, \mathbf{r}^t), \quad \forall i \in \{H, L\}.
\end{equation}

The interpretation of parameters $(a_i, b_i, c_i)$ in Eq. \eqref{eq: def_u} is as follows. For product $i \in \{H, L\}$, $a_i$ refers to product's intrinsic value, $b_i$ represents consumers’ responsiveness to price, also known as price sensitivity, and $c_i$ corresponds to the reference price sensitivity. 
When the offered price exceeds the internal reference price $(r_i^t < p_i^t)$, consumers perceive it as a loss or surcharge, whereas the price below this reference price $(r_i^t > p_i^t)$ is regarded as a gain or discount.
These sensitivity parameters are assumed to be positive, i.e., $b_i, c_i > 0$ for $i \in \{H, L\}$, which aligns with consumer behaviors towards substitutable products and has been widely adopted in similar pricing problems (see, e.g., \cite{hu2016dynamic,chen2017efficient, golrezaei2020no, guo2022multi,guo2023duopoly}). Precisely, this assumption guarantees that an increase in $p_i^t$ would result in a lower consumer utility for product $i$, which in turn reduces its demand $d_i(\mb{p}^t,\mb{r}^t)$ and increases the demand of the competing product $d_{-i}(\mb{p}^t,\mb{r}^t)$.
Conversely, a rise in the reference price $r_i^t$ increases the utility for product $i$, consequently influencing the consumer demands in the opposite direction.

We stipulate the feasible range for price and reference price to be $\mca{P} = [\underline{p}, \overline{p}]$, where $\underline{p},\overline{p}>0$ denote the price lower bound and upper bound, respectively.
This boundedness of prices is in accordance with real-world price floors or price ceilings, whose validity is further reinforced by its frequent use in the literature on price optimization with reference effects  (see, e.g., \cite{hu2016dynamic,chen2017efficient,golrezaei2020no}).

We formulate the reference price using the brand-specific past prices (\textit{PASTBRSP}) model proposed by \cite{briesch1997comparative}, which posits that the reference price is product-specific and memory-based.
This model is preferred over other reference price models evaluated in \cite{briesch1997comparative}, as it exhibits superior performance in terms of fit and prediction. 
In particular, the reference price for product $i$ is constructed by applying exponential smoothing to its own historical prices, where the memory parameter $\alpha\in [0,1]$ governs the rate at which the reference price evolves.
Starting with an initial reference price $r_0$, the reference price update for each product at period $t$ can be described as
\begin{equation}\label{eq: reference price update}
r_i^{t+1} = \alpha \cdot r_i^t + (1-\alpha) \cdot p_i^t, \quad \forall i \in \{H, L\}, \quad t\geq 0.
\end{equation}
The exponential smoothing technique is among the most prevalent and empirically substantiated reference price update mechanism in the existing literature (see, e.g., \cite{ fibich2003explicit, mazumdar2005reference, popescu2005dynamic, hu2016dynamic, chen2017efficient}). 
We remark that the theories established in this work are readily generalizable to the scenario of time-varying memory parameters $\alpha$, and we present the static setting only for the sake of brevity.

\subsection{Opaque market setup} 
In this study, we consider the partial information setting where the firms possess no knowledge of their competitors, their own reference prices, as well as the reference price update scheme. 
Each firm $i$ is only aware of its own sensitivity parameters $b_i, c_i$, and its previously posted price.
Under this configuration, the firms cannot directly compute their demands from the expression in Eq. \eqref{eq: demand}.
However, it is legitimate to assume that each firm can access its own last-period demands through market feedback, i.e., determining the demand as the received revenue divided by the posted price.
We point out that, in this non-transparent and non-cooperative market, the presence of feedback mechanisms is crucial for price optimization.

\subsection{Market equilibrium and stability}\label{subsec: market_stability}
The goal of our paper is to find a simple and intuitive pricing mechanism for firms so that the market equilibrium and stability can be achieved in the long-run while protecting firms' privacy.
Before introducing the equilibrium notion considered in this work, we first define the \textit{equilibrium pricing policy} denoted by $\mb{p}^\star(\mb{r}) = \big(p_H^\star(\mb{r}), p_L^\star(\mb{r})\big)$, which is a function that maps reference price to price and achieves the pure strategy Nash equilibrium in the single-period game.  
Mathematically, the equilibrium pricing policy satisfies that
\begin{equation}\label{eq: equilibrium_policy}
\begin{aligned}
p_i^\star(\mathbf{r}) &= \argmax_{p_i \in \mca{P}}\ p_i\cdot d_i\big((p_i, p_{-i}^\star(\mathbf{r})), \mb{r}\big), \quad \forall i \in \{H, L\}.
\end{aligned}
\end{equation}

Next, we formally introduce the concept of {stationary Nash equilibrium}, which is utilized to jointly characterize the market  equilibrium and stability.
\begin{definition}[Stationary Nash equilibrium]\label{def:SNE} 
A point $\mathbf{p^{\star\star}}$ is considered a stationary Nash equilibrium (SNE) if $\mathbf{p}^\star(\mathbf{p}^{\star\star}) = \mathbf{p}^{\star\star}$, i.e., the equilibrium price is equal to its reference price.
\end{definition}
The notion of SNE has also been studied in \cite{golrezaei2020no,guo2023duopoly}.
From Eq. \eqref{eq: equilibrium_policy} and Definition \ref{def:SNE}, we observe that an SNE possesses the following two properties:
\begin{itemize}[leftmargin = *]
    \item \textbf{Equilibrium:} The revenue function for each firm $i\in \{H,L\}$ satisfies $\Pi_i\big((p_i,p_{-i}^{\star\star}),\mb{p}^{\star\star}\big) \leq \Pi_i\big(\mb{p}^{\star\star},\mb{p}^{\star\star}\big)$ for all $p_i\in \mca{P}$, i.e., when the reference price and firm $-i$'s price are equal to the SNE price, the best-response price for firm $i$ is the SNE price $p_{i}^{\star\star}$.
    \item \textbf{Stability:} If the price and the reference price attain the SNE at some period $t$, the reference price remains unchanged in the following period, i.e., $\mb{p}^t=\mb{r}^t = \mb{p}^{\star\star}$ implies that $\mb{r}^{t+1} = \mb{p}^{\star\star}$.
\end{itemize}
As a result, when the market reaches the SNE, the firms have no incentive to deviate, and the market remains stable in subsequent competitions.
The following proposition states the uniqueness of the SNE and characterizes the boundedness of the SNE.
\begin{proposition}\label{prop: SNE}
There exists a unique stationary Nash equilibrium, denoted by $\mb{p}^{\star\star} = (p_H^{\star \star}, p_L^{\star\star})$. In addition, it holds that
\begin{equation}
    \dfrac{1}{b_i+c_i} < p_i^{\star\star} < \dfrac{1}{b_i+c_i} + \dfrac{1}{b_i}{W}\left(\dfrac{b_i}{b_i+c_i}\exp\Big(a_i - \dfrac{b_i}{b_i+c_i}\Big)\right),\quad \forall i\in \{H,L\},
\end{equation} 
where $W(\cdot)$ is the Lambert $W$ function (see definition in Eq. \eqref{eq: lambert_def}).
\end{proposition}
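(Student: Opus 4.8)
The plan is to reduce the stationary Nash equilibrium to a system of first-order conditions, prove that system has a unique solution by collapsing it to a single scalar equation, and read off the bounds from the same conditions.

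\emph{Characterization.} First I would record the elementary MNL identity $\partial d_i/\partial p_i = -(b_i+c_i)\,d_i(1-d_i)$, which gives $\partial \log \Pi_i/\partial p_i = 1/p_i - (b_i+c_i)(1-d_i)$. Differentiating once more yields $-1/p_i^2 - (b_i+c_i)^2 d_i(1-d_i) < 0$, so $\log \Pi_i$ is strictly concave in $p_i$; hence each firm's best response is unique, and with compactness of $\mathcal{P}$ and continuity the single-period game has a unique Nash equilibrium, making $\mathbf{p}^\star(\mathbf{r})$ well-defined. At a fixed point the gain/loss term $c_i(r_i-p_i)$ vanishes, so $\mathbf{p}^{\star\star}$ is an SNE if and only if, for $i\in\{H,L\}$, $(b_i+c_i)p_i^{\star\star}(1-d_i)=1$ with $d_i=e^{a_i-b_ip_i^{\star\star}}/(1+e^{a_H-b_Hp_H^{\star\star}}+e^{a_L-b_Lp_L^{\star\star}})$, provided the solution is interior to $\mathcal{P}$ (which the bounds below certify).

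\emph{Uniqueness by a scalar reduction.} This is the crux, since the game is neither monotone nor a potential game. I would introduce the no-purchase probability $q := 1/(1+e^{a_H-b_Hp_H}+e^{a_L-b_Lp_L})\in(0,1)$. For fixed $q$, the $i$-th condition reads $1-1/\big((b_i+c_i)p_i\big)=q\,e^{a_i-b_ip_i}$; the left side is strictly increasing in $p_i$ and the right side strictly decreasing, with the left below the right at $p_i=1/(b_i+c_i)$ and above it as $p_i\to\infty$, so there is a unique root $p_i(q)$, and the implicit function theorem shows $p_i(q)$ is strictly increasing in $q$. Substituting back gives the single consistency equation $1/\big((b_H+c_H)p_H(q)\big)+1/\big((b_L+c_L)p_L(q)\big)=1+q$, whose left side is strictly decreasing and right side strictly increasing, so it has at most one root; since the left side tends to $2>1$ as $q\to0^+$ and falls strictly below $2$ as $q\to1^-$, a unique $q^\star\in(0,1)$ exists, yielding a unique SNE. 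The same reduction applied with a general reference price justifies the well-definedness of $\mathbf{p}^\star(\mathbf{r})$ used above.

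\emph{Bounds.} Both follow from $d_i=1-1/\big((b_i+c_i)p_i^{\star\star}\big)$. Since $d_i\in(0,1)$, immediately $p_i^{\star\star}>1/(b_i+c_i)$. For the upper bound I would use $d_i/(1-d_i)=e^{a_i-b_ip_i^{\star\star}}/\big(1+e^{a_{-i}-b_{-i}p_{-i}^{\star\star}}\big)<e^{a_i-b_ip_i^{\star\star}}$, together with $d_i/(1-d_i)=(b_i+c_i)p_i^{\star\star}-1$, to obtain $(b_i+c_i)p_i^{\star\star}-1<e^{a_i-b_ip_i^{\star\star}}$. As $x\mapsto(b_i+c_i)x-1-e^{a_i-b_ix}$ is strictly increasing, $p_i^{\star\star}$ lies strictly below the root of $(b_i+c_i)x-1=e^{a_i-b_ix}$; the substitution $x=\frac{1}{b_i+c_i}+\frac{1}{b_i}w$ recasts this as $w\,e^{w}=\frac{b_i}{b_i+c_i}\exp\!\big(a_i-\frac{b_i}{b_i+c_i}\big)$, whose solution is exactly the stated Lambert-$W$ value.

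The main obstacle is the uniqueness step: absent monotonicity or convexity of the game, I lean entirely on the logit-specific collapse to the scalar variable $q$ and the strict monotonicity of $p_i(q)$. A secondary point is verifying that the unconstrained first-order solution is interior to $\mathcal{P}=[\underline{p},\overline{p}]$, so the best responses are not clipped at the boundary; the derived bounds are precisely what certify this.
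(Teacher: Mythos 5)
Your proposal is correct, and the two halves of it relate to the paper's proof differently. For the bounds, you follow essentially the same route: both arguments extract $(b_i+c_i)p_i^{\star\star}-1 = e^{a_i-b_ip_i^{\star\star}}/\bigl(1+e^{a_{-i}-b_{-i}p_{-i}^{\star\star}}\bigr) \in \bigl(0, e^{a_i-b_ip_i^{\star\star}}\bigr)$ from the stationarity system, use monotonicity to bound $p_i^{\star\star}$ by the root of $(b_i+c_i)x-1=e^{a_i-b_ix}$, and apply the same affine substitution to land on the Lambert-$W$ expression. For existence and uniqueness, however, the paper does not give an argument at all: it cites the system of first-order conditions from \cite{guo2023duopoly} (its Appendix B.1) and imports the well-definedness of the equilibrium pricing policy and the uniqueness of its fixed point from that reference. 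Your scalar reduction through the no-purchase probability $q$ --- showing $p_i(q)$ is uniquely determined and strictly increasing in $q$, then that the consistency equation $1/\bigl((b_H+c_H)p_H(q)\bigr)+1/\bigl((b_L+c_L)p_L(q)\bigr)=1+q$ pits a strictly decreasing left side against a strictly increasing right side with the correct boundary ordering --- is a genuinely different and self-contained replacement for that citation, and I checked that each monotonicity and limit claim in it holds. What your route buys is a proof that stands on its own and simultaneously delivers existence (via the intermediate value theorem on $q$) rather than assuming it; what it costs is length, and it shares with the paper the same mild looseness about interiority: both treatments identify the SNE with the solution of the unconstrained first-order system and rely on the standing assumption that $\mathcal{P}$ contains it, which you at least flag explicitly.
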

Without loss of generality, we assume that the feasible price range $\mca{P}^2 = [\underline{p},\overline{p}]^2$ is sufficiently large to contain the unique SNE, i.e., $\mb{p}^{\star\star} \in [\underline{p}, \overline{p}]^2$. 
Proposition \ref{prop: SNE} provides a quantitative characterization for this assumption:
it suffices to choose the price lower bound $\underline{p}$ to be any real number between $\big(0,\min_{i\in \{H,L\}}\{1/(b_i+c_i)\}\big]$, and the price upper bound 
$\overline{p}$ can be any value such that 
\begin{equation}\label{eq: p_bar_bound}
\overline{p}\geq \max_{i\in \{H,L\}}\left\{ \dfrac{1}{b_i+c_i} + \dfrac{1}{b_i}{W}\left(\dfrac{b_i}{b_i+c_i}\exp\Big(a_i - \dfrac{b_i}{b_i+c_i}\Big)\right) \right\}.
\end{equation}
This assumption is mild as the bound in Eq. \eqref{eq: p_bar_bound} is independent of both the price and reference price, and it does not grow exponentially fast with respect to any parameters. Hence, there is no need for $\overline{p}$ to be excessively large.
For example, when $a_H= a_L = 10$ and $b_H=b_L= c_H=c_L = 1$, Eq. \eqref{eq: p_bar_bound} becomes $\overline{p}\geq 7.3785$.
We refer the reader to Appendix \ref{sec: app_equilibrium_SNE} for further discussions on the structure and computation of the equilibrium pricing policy and SNE, as well as the proof of Proposition \ref{prop: SNE}.

\section{No-regret learning: Online Projected Gradient Ascent}\label{sec: learning}
In this section, we examine the long-term dynamics of the the price and reference price paths to determine if the market stabilizes over time. 
As studied in \cite{guo2023duopoly}, under perfect information, the firms operating with full rationality will follow the equilibrium pricing policy in each period, while those functioning with bounded rationality will adhere to the best-response pricing policy throughout the planning horizon.
In both situations, the market would stabilize in the long run.
However, with partial information, the firms are incapable of computing either the equilibrium or the best-response policies due to the unavailability of reference prices and their competitor's price.
Thus, one viable strategy to boost firms’ revenues is to dynamically modify prices in response to market feedback.

In light of the success of gradient-based algorithms in the online learning literature (see, e.g., \cite{zinkevich2003online,hazan2007logarithmic,abernethy2011blackwell}), we propose the Online Projected Gradient Ascent (OPGA) method, as outlined in Algorithm \ref{alg: OPGA}.
Specifically, in each period, both firms update their current prices using the first-order derivatives of their log-revenues with the same learning rate (see Eq. \eqref{eq: def_D_i}).
It is noteworthy that the difference between the derivatives of the log-revenue and the standard revenue is a scaling factor equal to the revenue itself, i.e.,
\begin{equation}
\dfrac{\partial\log\big(\Pi_i (\mathbf{p}, \mathbf{r})\big)}{\partial p_i} = \dfrac{1}{\Pi_i (\mathbf{p}, \mathbf{r})}\cdot  \dfrac{\partial\big(\Pi_i (\mathbf{p}, \mathbf{r})\big)}{\partial p_i},\quad \forall i\in \{H,L\}.
\end{equation}
Therefore, the price update in Algorithm \ref{alg: OPGA} can be equivalently viewed as an adaptively regularized gradient ascent using the standard revenue function, where the regularizer at period $t$ is $\big(1/\Pi_H(\mathbf{p}^t, \mathbf{r}^{t}),1/\Pi_L(\mathbf{p}^t, \mathbf{r}^{t})\big)$.

We highlight that by leveraging the structure of MNL model, each firm $i$ can obtain the derivative of its log-revenue $D_i^t$ in Eq. \eqref{eq: def_D_i} through the market feedback mechanism, i.e., last-period demand. 
The firms do not need to know their own reference price when executing the OPGA algorithm, and the reference price update in Line 6 is automatically performed by the market.
In fact, computing the derivative $D_i$ is identical to querying the first-order oracle, which is a common assumption in the optimization literature \cite{nesterov2003introductory}.
Further, this derivative can also be acquired through a minor perturbation of the posted price, even when the firms lack access to historical prices and market feedback, making the algorithm applicable in a variety of scenarios.

\begin{algorithm}[tb]
  \caption{Online Projected Gradient Ascent (OPGA)\label{alg: OPGA}}
\begin{algorithmic}[1]
  \State {\bfseries Input:} Initial reference price $\mathbf{r}^0 = (r_H^0, r_L^0)$, initial price $\mathbf{p}^0 = (p_H^0, p_L^0)$, and step-sizes $\{\eta^t\}_{t\geq 0}$.
  \For {$t=0, 1,2,\dots$}
  \For {$i \in \{H, L\}$}
  \State Compute derivative $D_i^t$ from price and demand of firm $i$ at period $t$: 
  \begin{equation}\label{eq: def_D_i}
      D_i^t \leftarrow \dfrac{\partial\log\big(\Pi_i (\mathbf{p}^t, \mathbf{r}^{t})\big)}{\partial p_i} = \dfrac{1}{p_i^t} + (b_i+c_i) \cdot d_i(\mb{p}^t, \mb{r}^t) - (b_i + c_i).
  \end{equation}
  \State Update posted price: $p_i^{t+1} \leftarrow \operatorname{Proj}_\mathcal{P} \left(p_i^t + \eta^t D_i^t \right)$.
  \EndFor
  \State Reference price update: $\mb{r}^{t+1} \leftarrow \alpha \mb{r}^t + (1-\alpha)\mb{p}^t$.
  \EndFor
\end{algorithmic}
\end{algorithm}

There are two potential ways to analyze Algorithm \ref{alg: OPGA} using well-established theories.
Below, we briefly introduce these methods and the underlying challenges, while referring readers to Appendix \ref{sec: app_formulation} for a more detailed discussion.
\begin{itemize}[leftmargin = *]
    \item First, by adding two virtual firms to represent the reference price, the game can be converted into a standard four-player online game, effectively eliminating the evolution of the underlying state (reference price).
However, the challenge in analyzing this four-player game arises from the fact that, while real firms have the flexibility to dynamically adjust their step-sizes, the learning rate for virtual firms is fixed to the constant $(1-\alpha)$, where $\alpha$ is the memory parameter for reference price updates.
This disparity hinders the direct application of the existing results from multi-agent online learning literature, as these results typically require the step-sizes of all agents either diminish at comparable rates or remain as a small enough constant \cite{nagurney1995projected,scutari2010convex,bravo2018bandit,mertikopoulos2019learning}.

\item The second approach involves translating the OPGA algorithm into a discrete nonlinear system by treating $(\mb{p}^{t+1},\mb{r}^{t+1})$ as a vector-valued function of $(\mb{p}^t,\mb{r}^t)$, i.e., $(\mb{p}^{t+1},\mb{r}^{t+1}) = \mb{f}(\mb{p}^{t},\mb{r}^{t})$ for some function $\mb{f}(\cdot)$.
In this context, analyzing the convergence of Algorithm \ref{alg: OPGA} is equivalent to examining the stability of the fixed point of $\mathbf{f}(\cdot)$, which is related to the spectral radius of the Jacobian matrix $\nabla \mb{f}(\mb{p}^{\star\star},\mb{p}^{\star\star})$ \cite{arrowsmith1990introduction,olver2015nonlinear}.
However, the SNE lacks a closed-form expression, making it difficult to calculate the eigenvalues of $\nabla \mb{f}(\mb{p}^{\star\star},\mb{p}^{\star\star})$.
In addition, the function $\mathbf{f}(\cdot)$ is non-smooth due to the presence of the projection operator and can become non-stationary when the firms adopt time-varying step-sizes, such as diminishing step-sizes.
Moreover, typical results in dynamical systems only guarantee local convergence \cite{khalil2002nonlinear}, i.e., the asymptotic stability of the fixed point, whereas our goal is to establish the global convergence of both the price and reference price.
\end{itemize}

In the following section, we show that the OPGA algorithm with diminishing step-sizes converges to the unique SNE by exploiting characteristic properties of our model.
This convergence result indicates that the OPGA algorithm provably achieves the no-regret learning, i.e., in the long run, the algorithm performs at least as well as the best fixed action in hindsight.
However, it is essential to note that the reverse is not necessarily true: being no-regret does not guarantee the convergence at all, let alone the convergence to an equilibrium (see, e.g., \cite{mertikopoulos2018cycles,mertikopoulos2018optimistic}).  
In fact, beyond the finite games, the agents may exhibit entirely unpredictable and chaotic behaviors under a no-regret policy \cite{palaiopanos2017multiplicative}.

\section{Convergence results}
In this section, we investigate the convergence properties of the OPGA algorithm. First, in Theorem \ref{thm: gradient_convergence}, we establish the global convergence of the price path and reference price to the unique SNE under diminishing step-sizes.
Subsequently, in Theorem \ref{thm: convergence_rate}, we further show that this convergence exhibits a rate of $\mca{O}(1/t)$, provided that the step-sizes are selected appropriately.
The primary proofs for this section can be found in Appendices \ref{sec: app_prove4.1} and \ref{sec: app_prove4.2}, while supporting lemmas are located in Appendix \ref{sec: app_lemmas}.
\begin{theorem}[Global convergence]\label{thm: gradient_convergence}
     Let the step-sizes $\{\eta^t\}_{t \geq 0}$ be a non-increasing sequence such that $\lim_{t\rightarrow \infty} \eta^t = 0$ and $\sum_{t=0}^{\infty} \eta^t = \infty$
     hold. 
     Then, the price paths and reference price paths generated by Algorithm \ref{alg: OPGA} converge to the unique stationary Nash equilibrium.
\end{theorem}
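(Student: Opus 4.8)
The plan is to exhibit a Lyapunov function measuring the distance of the price path to $\mb{p}^{\star\star}$ and to show that it contracts at a rate dictated by the step-sizes. First I would record the easy structural facts: both paths stay in the compact box $\mca{P}^2$ — the price by the projection, the reference price because $r_i^{t+1}$ is a convex combination of $r_i^t,p_i^t\in\mca{P}$ — so the gradients $\mb{D}^t$ are uniformly bounded; and by Proposition~\ref{prop: SNE} the unique SNE lies strictly inside $\mca{P}^2$, so the projection is inactive in a neighborhood of it and the stationarity condition reads $\mb{D}(\mb{p}^{\star\star},\mb{p}^{\star\star})=\mb{0}$.

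The heart of the argument is to manufacture stability out of the MNL structure, the key elementary identity being $d_H+d_L<1$ (from the outside option). Two consequences matter. For a \emph{fixed} reference $\mb{r}$, the field $\mb{p}\mapsto\mb{D}(\mb{p},\mb{r})$ is the gradient of $\Phi(\mb{p};\mb{r})=\sum_i(\log p_i-(b_i+c_i)p_i)-\log(1+e^{u_H}+e^{u_L})$, so the single-period game is an exact potential game in the log-revenues, and $d_H+d_L<1$ makes $\Phi(\cdot;\mb{r})$ strictly concave — re-proving uniqueness of $\mb{p}^\star(\mb{r})$ and showing the gradient always points toward it. For the \emph{effective} field $\mb{g}(\mb{p}):=\mb{D}(\mb{p},\mb{p})$ obtained by setting the reference equal to the price, a direct computation shows $\nabla\mb{g}$ has strictly negative diagonal and strictly positive off-diagonal entries, and $d_H+d_L<1$ forces $\det\nabla\mb{g}>0$; equivalently $-\nabla\mb{g}$ is a nonsingular $M$-matrix and the best-response slopes of the effective game multiply to less than one. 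I would leverage this to establish the target inequality $\langle\mb{g}(\mb{p}),\mb{p}-\mb{p}^{\star\star}\rangle_W\le-\mu\|\mb{p}-\mb{p}^{\star\star}\|^2$ over $\mca{P}^2$ in a suitably weighted inner product $\langle\cdot,\cdot\rangle_W$.

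With these in hand I would analyze the two updates as a two-timescale system. Since $\eta^t\to0$ while the reference price moves at the fixed rate $1-\alpha$, the reference is the \emph{fast} variable tracking the \emph{slow} price: from $\mb{r}^{t+1}-\mb{p}^{t+1}=\alpha(\mb{r}^t-\mb{p}^t)-\eta^t\mb{D}^t$ (plus inactive projection) and $\alpha<1$, the lag obeys $\|\mb{r}^t-\mb{p}^t\|=\mca{O}(\eta^t)$ for large $t$. Substituting $\mb{r}^t=\mb{p}^t+\mca{O}(\eta^t)$ turns the price update into a perturbed gradient step for $\mb{g}$, and expanding $\|\mb{p}^{t+1}-\mb{p}^{\star\star}\|_W^2$ via nonexpansiveness of the projection together with the weighted monotonicity above yields a recursion $\|\mb{p}^{t+1}-\mb{p}^{\star\star}\|_W^2\le(1-2\mu\eta^t)\|\mb{p}^t-\mb{p}^{\star\star}\|_W^2+C(\eta^t)^2$. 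A standard recursion lemma (with $a_t=2\mu\eta^t$, $b_t=C(\eta^t)^2$, hence $b_t/a_t\to0$ and $\sum_t a_t=\infty$) then forces $\|\mb{p}^t-\mb{p}^{\star\star}\|\to0$ precisely under $\eta^t\to0$ and $\sum_t\eta^t=\infty$; the reference recursion then gives $\mb{r}^t\to\mb{p}^{\star\star}$ as well.

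The decisive obstacle is the second step. The effective field $\mb{g}$ is neither monotone in the Euclidean metric nor a gradient field, so no off-the-shelf concave-game, cocoercivity, or variational-stability result applies; everything hinges on upgrading the \emph{pointwise} $M$-matrix/contraction structure — available only through the MNL identity $d_H+d_L<1$ — into a single, \emph{global} weighted norm in which $\mb{g}$ is strictly monotone toward $\mb{p}^{\star\star}$ across all of $\mca{P}^2$. A secondary difficulty is that the reference-price error contracts only at the fixed rate $1-\alpha$ rather than with the diminishing $\eta^t$, so one must carefully justify the $\mca{O}(\eta^t)$ tracking bound — through the non-smooth projection and the non-stationarity induced by time-varying step-sizes — in order to relegate the price–reference coupling to a higher-order term.
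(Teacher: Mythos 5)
Your route is genuinely different from the paper's, but it has a gap at exactly the step you yourself flag as "the decisive obstacle": the global inequality $\langle\mb{g}(\mb{p}),\mb{p}-\mb{p}^{\star\star}\rangle_W\le-\mu\|\mb{p}-\mb{p}^{\star\star}\|^2$ on all of $\mca{P}^2$ is asserted as a target, not established. Everything downstream (the perturbed-gradient recursion, the Robbins--Monro-type conclusion under $\eta^t\to0$, $\sum_t\eta^t=\infty$) is routine \emph{given} that inequality, and your tracking bound for $\|\mb{r}^t-\mb{p}^t\|$ is fine (though for general non-increasing step-sizes you only get $o(1)$, e.g.\ $\mca{O}(\alpha^{t/2}+\eta^{\lfloor t/2\rfloor})$, which suffices). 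But pointwise $M$-matrix structure of $-\nabla\mb{g}(\mb{p})$ does not by itself yield strong monotonicity in a single fixed weighted norm: the diagonal weight that certifies positive definiteness of the symmetrized weighted Jacobian at a point generally depends on that point. The paper's authors explicitly take the position that no global variational-stability-type property holds for the MNL model, and for that reason they abandon a single Lyapunov norm altogether: their Part 1 uses the sign-weighted quantity $\mca{G}(\mb{p})=\sum_i\operatorname{sign}(p_i^{\star\star}-p_i)G_i(\mb{p},\mb{p})$, shown positive globally by quadrant-by-quadrant monotonicity (Lemma \ref{lemma: gradient_based}), driving a weighted $\ell_1$ descent that only works while the iterate stays in one quadrant --- hence the elaborate oscillation/jumping-period analysis; their Part 2 uses an $\ell_2$ quadratic lower bound on $\mca{H}(\mb{p})=\langle\mb{D}(\mb{p},\mb{p}),\mb{p}^{\star\star}-\mb{p}\rangle$ that is only \emph{local}, obtained from positive definiteness of $\nabla^2\mca{H}(\mb{p}^{\star\star})$ (Lemma \ref{lemma: hessian}). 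So as written, your proof rests on a claim the paper's own proof is designed to avoid, and you have not supplied the argument that would close it.

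That said, the gap may be closable in this specific two-product setting, by an observation you did not make: the off-diagonal entries of $\nabla\mb{g}$ are $(b_H+c_H)b_L\,d_Hd_L$ and $(b_L+c_L)b_H\,d_Hd_L$, so their \emph{ratio} is a constant independent of $\mb{p}$. Choosing the diagonal weight $W$ so that $w_H(b_H+c_H)b_L=w_L(b_L+c_L)b_H$ makes the arithmetic--geometric equality in the $2\times2$ positive-definiteness test hold at every point simultaneously, and reduces $W(-\nabla\mb{g})+(-\nabla\mb{g})^\top W\succ0$ everywhere to $\det(-\nabla\mb{g}(\mb{p}))>0$ everywhere --- which does follow from $d_H+d_L<1$ together with the $1/p_i^2$ terms, uniformly on the compact box. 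If you carry out that computation explicitly (and check non-expansiveness of the box projection in the $W$-norm, which holds since $W$ is diagonal), your argument becomes a legitimate and considerably shorter alternative to the paper's two-part proof. Without it, the proposal reduces the theorem to an unproven global monotonicity claim; note also that this trick is special to the duopoly case and the constant-ratio structure, so it would not generalize the way the paper's sign-based argument might.
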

Theorem \ref{thm: gradient_convergence} demonstrates the global convergence of the OPGA algorithm to the SNE, thus ensuring the market equilibrium and stability in the long-run.
Compared to \cite{guo2023duopoly} that establishes the convergence to SNE under the perfect information setting, Theorem \ref{thm: gradient_convergence} ensures that such convergence can also be achieved in an opaque market where the firms are reluctant to share the information with their competitors.
The only condition we need for the convergence is diminishing step-sizes such that $\lim_{t\rightarrow \infty} \eta^t = 0$ and $\sum_{t=0}^{\infty} \eta^t = \infty$.
This assumption is widely adopted in the study of online games (see, e.g., \cite{mertikopoulos2019learning,bravo2018bandit,lin2021optimal}).
Since the firms will likely become more familiar with their competitors through repeated competitions, it is reasonable for the firms to gradually become more conservative in adjusting their prices and decrease their learning rates.

As discussed in Sections \ref{subsec: related_work} and \ref{sec: learning}, the existing methods in the online game and dynamical system literatures are not applicable to our problem due to the absence of standard structural properties and the presence of the underlying dynamic state, i.e., reference price.
Consequently, we develop a novel analysis to prove Theorem \ref{thm: gradient_convergence}, leveraging the characteristic properties of the MNL demand model.
We provide a proof sketch below, with the complete proof deferred to Appendix \ref{sec: app_prove4.1}.
Our proof consists of two primary parts:
\begin{itemize}[leftmargin = *]
    \item In \textbf{Part 1} (Appendix \ref{subsec: app_part1}), we show that the price path $\left\{\mb{p}^t\right\}_{t\geq 0}$ would enter the neighborhood $N^1_{\epsilon}:= \big\{\mb{p}\in \mca{P}^2\mid \varepsilon(\mb{p})<{\epsilon} \big\}$ infinitely many times for any ${\epsilon} >0$, where $\varepsilon(\cdot)$ is a weighted $\ell_1$ distance function defined as $\varepsilon(\mb{p}):= {|p_H^{\star\star} - p_H|}/{(b_H + c_H)} +{|p_L^{\star\star} - p_L|}/{(b_L + c_L)}$.
To prove Part 1, we divide $\mca{P}^2$ into four quadrants with $\mb{p}^{\star\star}$ as the origin. 
Then, employing a contradiction-based argument, we suppose that the price path only visits $N^1_{\epsilon}$ finitely many times. Yet, we show that this forces the price path to oscillate between adjacent quadrants and ultimately converge to the SNE, which violates the initial assumption.

    \item In \textbf{Part 2} (Appendix \ref{subsec: app_part2}), we show that when the price path $\left\{\mb{p}^t\right\}_{t\geq 0}$ enters the $\ell_2$-neighborhood $N^2_{\epsilon}:= \big\{\mb{p}\in \mca{P}^2\mid \|\mb{p}-\mb{p}^{\star \star}\|_2<{\epsilon} \big\}$ for some sufficiently small ${\epsilon}>0$ and with small enough step-sizes, the price path will remain in $N^2_{\epsilon}$ in subsequent periods.
    The proof of Part 2 relies on a local property of the MNL demand model around the SNE, with which we demonstrate that the price update provides adequate ascent to ensure the price path stays within the target neighborhood. 
\end{itemize}

Owing to the equivalence between different distance functions in Euclidean spaces \cite{rudin1987real}, these two parts jointly imply the convergence of the price path to the SNE.
Since the reference price is equal to an exponential smoothing of the historical prices, the convergence of the reference price path follows that of the price path.

It is worth noting that \cite[Theorem 5.1]{golrezaei2020no} also employs a two-part proof and shows the asymptotic convergence of online mirror descent \cite{zhou2017mirror,hoi2021online} to the SNE in the linear demand setting.
However, their analysis heavily depends on the linear structure of the demand, which ensures that a property similar to the variational stability is globally satisfied \cite[Lemma 9.1]{golrezaei2020no}.
In contrast, such properties no longer hold for the MNL demand model considered in this paper. Therefore, we come up with two distinct distance functions for the two parts of the proof, respectively.
Moreover, the proof by \cite{golrezaei2020no} relies on an assumption concerning the relationship between responsiveness parameters, whereas our convergence analysis only requires the minimum assumption that the responsiveness parameters are positive (otherwise, the model becomes counter-intuitive for substitutable products).

\begin{theorem}[Convergence rate]\label{thm: convergence_rate} 
When both firms adopt Algorithm \ref{alg: OPGA}, there exists a sequence of step-sizes $\{\eta^t\}_{t\geq 0}$ with $\eta^t = \Theta(1/t)$ and constants $d_p,d_r >0$ such that 
\begin{equation}
\big\|\mb{p}^{\star\star} - \mb{p}^t\big\|_2^2 \leq \dfrac{d_p}{t},\quad \big\|\mb{p}^{\star\star} - \mb{r}^t\big\|_2^2 \leq \dfrac{d_r}{t},\quad \forall t\geq 1.
\end{equation}
\end{theorem}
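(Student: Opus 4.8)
The plan is to localize the analysis around the SNE and reduce OPGA to a two--time--scale recursion that a single Lyapunov function can control. Since $\eta^t = \Theta(1/t)$ satisfies $\eta^t\to 0$ and $\sum_t \eta^t=\infty$, Theorem \ref{thm: gradient_convergence} applies: the iterates converge to $\mb{p}^{\star\star}$ and, by the second part of that proof, there is a time $t_0$ after which $\mb{p}^t$ remains in an $\ell_2$--ball $N^2_{\epsilon}$ that we take strictly inside $\mca{P}^2$ (Proposition \ref{prop: SNE} places $\mb{p}^{\star\star}$ in the interior), so the projection in Line 5 is inactive for $t\ge t_0$. I would then set $\mb{x}^t = \mb{p}^t-\mb{p}^{\star\star}$ and $\mb{z}^t = \mb{r}^t-\mb{p}^t$, so that $\mb{r}^t-\mb{p}^{\star\star}=\mb{x}^t+\mb{z}^t$, and the dynamics become $\mb{x}^{t+1}=\mb{x}^t+\eta^t \mb{D}^t$ and $\mb{z}^{t+1}=\alpha \mb{z}^t-\eta^t \mb{D}^t$, where $\mb{D}^t=(D_H^t,D_L^t)$. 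Because $\mb{p}^{\star\star}$ is interior and the equilibrium pricing policy is first--order optimal there, $\mb{D}(\mb{p}^{\star\star},\mb{p}^{\star\star})=\mb{0}$, so a first--order expansion gives $\mb{D}^t = M\mb{x}^t + J_r \mb{z}^t + \mb{e}^t$, with $J_p,J_r$ the derivatives of $\mb{D}$ in price and reference price, $M=J_p+J_r$, and remainder $\|\mb{e}^t\|_2 = o(\|\mb{x}^t\|_2+\|\mb{z}^t\|_2)$ on $N^2_{\epsilon}$.

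The key structural step is to show $M$ is Hurwitz at the SNE. Differentiating Eq.~\eqref{eq: def_D_i} with the MNL identities $\partial d_i/\partial u_i = d_i(1-d_i)$ and $\partial d_i/\partial u_{-i}=-d_i d_{-i}$, I would verify that $M$ has strictly negative diagonal and strictly positive off--diagonal entries, and that the MNL inequality $1-d_i\ge d_{-i}$ forces $|M_{HH}|\,|M_{LL}| > M_{HL}M_{LH}$, so that $\mathrm{tr}\,M<0$ and $\det M>0$; for a real $2\times 2$ matrix this gives two eigenvalues with negative real part. By continuity this persists on a small enough $N^2_{\epsilon}$, so the Lyapunov equation $M_0^\top P + P M_0 = -I$ for the central matrix $M_0$ furnishes a $P\succ 0$ with $M^\top P + PM \preceq -qI$ uniformly on $N^2_{\epsilon}$ for some $q>0$, while $J_p,J_r,\mb{D}$ stay uniformly bounded.

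I would then use the weighted Lyapunov function $W^t = (\mb{x}^t)^\top P\mb{x}^t + \beta\|\mb{z}^t\|_2^2$. Expanding both updates, the price term yields $\eta^t(\mb{x}^t)^\top(M^\top P+PM)\mb{x}^t \le -q\eta^t\|\mb{x}^t\|_2^2$ plus an $\mb{x}$--$\mb{z}$ cross term dispatched by Young's inequality, whereas $\beta\|\mb{z}^{t+1}\|_2^2 \le \beta\alpha^2\|\mb{z}^t\|_2^2 + (\text{terms of order }\eta^t)$ drains $\mb{z}$ at the constant factor $\alpha^2<1$. The crucial point is that $\|\mb{z}^t\|_2^2$ is killed at an $O(1)$ rate while all the positive cross terms it generates carry a vanishing factor $\eta^t$; hence, choosing $\beta$ small and $t_0$ large, I can absorb every cross term together with the $O((\eta^t)^2)$ and $o(\cdot)$ remainders into the dominant drifts, obtaining $W^{t+1}\le (1-\kappa\eta^t)W^t$ for all $t\ge t_0$, with $\kappa>0$ independent of the step--size scale. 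Note that the recursion is homogeneous (deterministic gradients with $\mb{D}=\mb{0}$ at the fixed point), so there is no additive floor.

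To conclude, I would fix $\eta^t=\theta/t$ with $\theta$ large enough that $\kappa\theta\ge 1$, and only then choose $t_0$ so that the step--sizes beyond $t_0$ are small enough for the three steps above; telescoping $W^{t+1}\le(1-\kappa\theta/t)W^t$ gives $W^t \le W^{t_0}(t_0/t)^{\kappa\theta}=\mca{O}(1/t)$. Since $\|\mb{x}^t\|_2^2 \le W^t/\lambda_{\min}(P)$ and $\|\mb{z}^t\|_2^2 \le W^t/\beta$, this yields $\|\mb{p}^{\star\star}-\mb{p}^t\|_2^2 \le d_p/t$ and $\|\mb{p}^{\star\star}-\mb{r}^t\|_2^2 \le 2(\|\mb{x}^t\|_2^2+\|\mb{z}^t\|_2^2)\le d_r/t$. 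The main obstacle is precisely the mismatch flagged in Section~\ref{sec: learning}: the reference price contracts at the fixed rate $1-\alpha$ while the price moves at the diminishing rate $\eta^t$, so no single--speed contraction exists on $(\mb{p}^t,\mb{r}^t)$ directly. The remedy is the change of variables $\mb{z}=\mb{r}-\mb{p}$, which isolates the fast contracting mode, combined with the weighted Lyapunov function $W$ that reconciles the fast mode $\mb{z}$ with the slow mode $\mb{x}$; the enabling structural fact is the Hurwitzness of $M=J_p+J_r$ for the MNL gradient, which replaces the missing global strong monotonicity with a purely local one.
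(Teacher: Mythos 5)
Your proposal is correct in its essentials, but it takes a genuinely different route from the paper's proof. The paper works directly with the Euclidean distance $\|\mb{p}^t-\mb{p}^{\star\star}\|_2^2$ and splits the argument into two inductions: it first proves the sharper intermediate estimate $\|\mb{r}^t-\mb{p}^t\|_2^2=\mca{O}(1/t^2)$ from the $\alpha$-contraction of the reference update (so that the reference-price perturbation enters the price recursion as a summable $\mca{O}(1/t^2)$ term), and then runs a one-dimensional induction on $\|\mb{p}^t-\mb{p}^{\star\star}\|_2^2$ using Lemma \ref{lemma: hessian}, which is precisely the statement that $\mca{H}(\mb{p})=-\langle \mb{D}(\mb{p},\mb{p}),\mb{p}-\mb{p}^{\star\star}\rangle\geq\gamma\|\mb{p}-\mb{p}^{\star\star}\|_2^2$ locally, i.e.\ that the symmetrized Jacobian $M+M^\top$ is negative definite at the SNE. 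You instead establish only Hurwitzness of $M$ (via trace and determinant, using the same MNL inequality $1-d_H^{\star\star}-d_L^{\star\star}>0$ that drives the paper's determinant computation), manufacture the metric $P$ from a Lyapunov equation, and couple the slow mode $\mb{x}^t$ and the fast mode $\mb{z}^t$ into a single weighted Lyapunov function, never needing the explicit $\mca{O}(1/t^2)$ rate for $\mb{z}^t$. Both arguments hinge on the same two ingredients -- a local curvature constant at the SNE and a step-size constant large enough relative to it ($\gamma d_\eta>1$ in the paper, $\kappa\theta\geq 1$ for you) -- and your two-time-scale absorption of the $\eta^t$-scaled cross terms into the $O(1)$ contraction of $\mb{z}$ is sound. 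Your approach is arguably more portable (Hurwitz is weaker than negative-definite symmetric part, and the Lyapunov-function coupling generalizes beyond two players), while the paper's is more elementary and yields the quantitatively stronger byproduct $\|\mb{r}^t-\mb{p}^t\|_2^2=\mca{O}(1/t^2)$. Two small points you leave implicit: the final constants $d_p,d_r$ must also cover $1\leq t<t_0$, which follows trivially from the boundedness of $\mca{P}^2$ exactly as in the paper; and you do not actually need the projection to be inactive, since non-expansiveness of $\operatorname{Proj}_{\mca{P}}$ suffices for the $\mb{x}$-recursion (the paper uses it this way).
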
 
Theorem \ref{thm: convergence_rate} improves the result of Theorem \ref{thm: gradient_convergence} by demonstrating the non-asymptotic convergence rate for the price and reference price.
Although non-concavity usually anticipates slower convergence, our rate of $\mca{O}(1/t)$ matches \cite[Theorem 5.2]{golrezaei2020no} that assumes linear demand and exhibits concavity.

The proof of Theorem \ref{thm: convergence_rate} is primarily based on the local convergence rate when the price vector remains in a specific neighborhood $N^2_{\epsilon_0}$ of the SNE after some period $T_{\epsilon_0}$.
As the choice $\eta^t = \Theta(1/t)$ ensures that $\lim_{t\rightarrow \infty} \eta^t = 0$ and $\sum_{t=0}^{\infty} \eta^t = \infty$, the existence of such a period $T_{\epsilon_0}$ is guaranteed by Theorem \ref{thm: gradient_convergence}.
Utilizing an inductive argument, we first show that the difference between price and reference price decreases at a faster rate of $\mca{O}(1/t^2)$, i.e., $\big\|\mb{r}^t - \mb{p}^t\big\|_2^2 = \mca{O} \left({1}/{t^2}\right)$, $\forall t\geq 1$ (see Eq. \eqref{eq: bound_on_r-p}).
Then, by exploiting a local property around the SNE (Lemma \ref{lemma: hessian}), we use another induction to establish the convergence rate of the price path after it enters the neighborhood $N^2_{\epsilon_0}$.
In the meanwhile, the convergence rate of the reference price path can be determined through a triangular inequality: $\big\|\mb{p}^{\star\star} - \mb{r}^t\big\|_2^2 = \big\|\mb{p}^{\star\star} -\mb{p}^t + \mb{p}^t -\mb{r}^t\big\|_2^2\leq 2\big\|\mb{p}^{\star\star} -\mb{p}^t\big\|_2^2 + 2\big \|\mb{p}^t -\mb{r}^t \big\|_2^2$.
Finally, due to the boundedness of the feasible price range $\mca{P}^2$, we can obtain the global convergence rate for all $t\geq 1$ by choosing sufficiently large constants $d_p$ and $d_{r}$.

\begin{remark}[Constant step-sizes]\label{rmk: constant}
The convergence analysis presented in this work can be readily extended to accommodate constant step-sizes. 
Particularly, our theoretical framework supports the selection of $\eta^t \equiv \mca{O}(\epsilon_0)$, where $\epsilon_0$ refers to the size of the neighborhood utilized in Theorem \ref{thm: convergence_rate}.
When $\eta^t \equiv \mca{O}(\epsilon_0)$, an approach akin to \textbf{Part 1} can be employed to demonstrate that the price path enters the neighborhood $N^1_\epsilon$ infinitely many times for any $\epsilon\geq \mca{O}(\epsilon_0)$.
Subsequently, by exploiting the local property in a manner similar to \textbf{Part 2}, we can establish that once the price path enters the neighborhood $N^2_\epsilon$ for some $\epsilon = \mca{O}(\epsilon_0)$, it will remain within that neighborhood.
\end{remark}

\begin{figure*}[htb]
    % \centering
    \hspace{-0.73cm}
    \begin{subfigure}[b]{0.36\textwidth}
      \includegraphics[width=\textwidth]{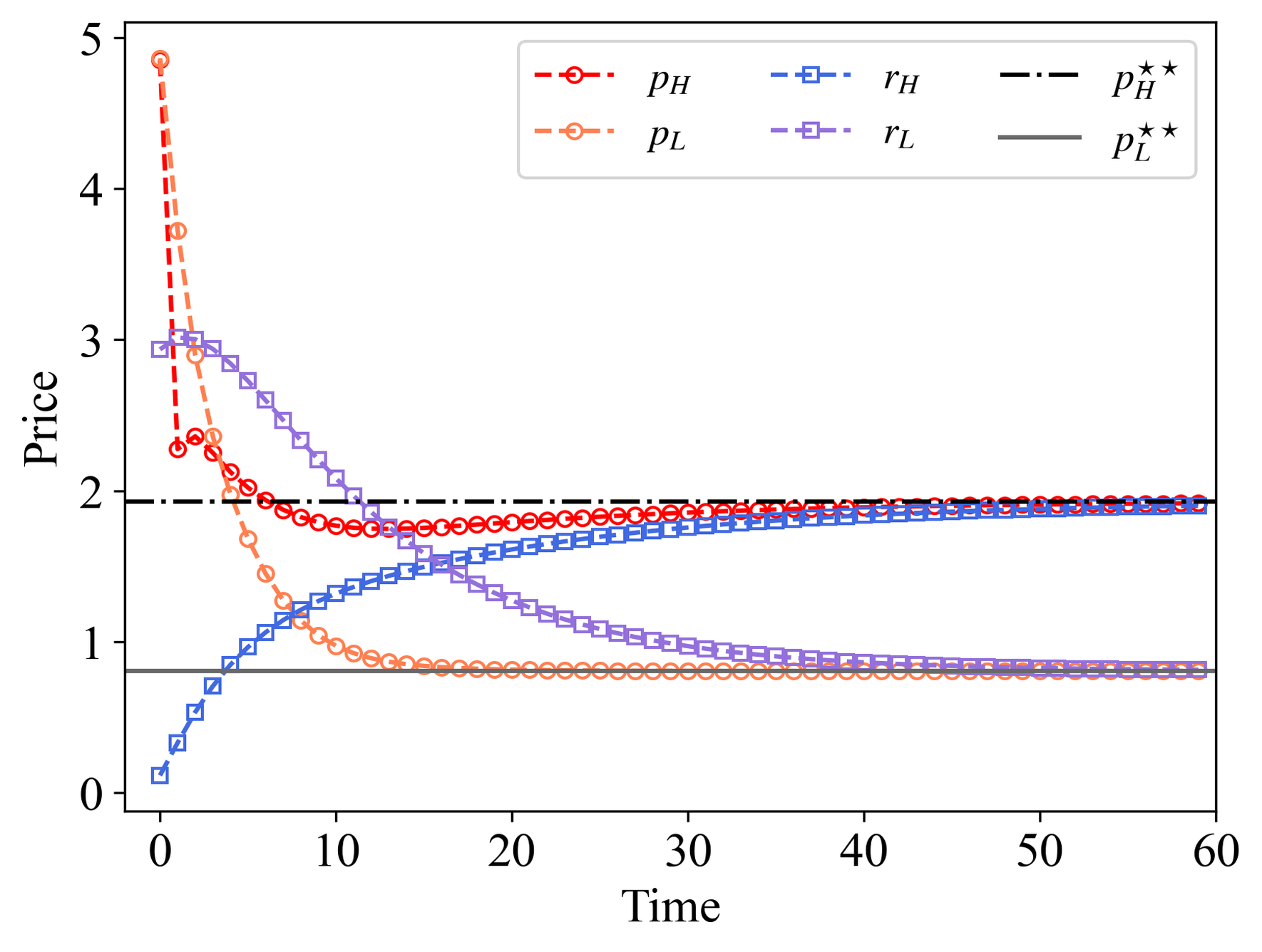}
      \caption{Convergence with $\eta^t = 1/\sqrt{t}$.\label{subfig: 1}}
    \end{subfigure}
    \hspace{-0.36cm}
    \begin{subfigure}[b]{0.36\textwidth}
    \includegraphics[width=\textwidth]{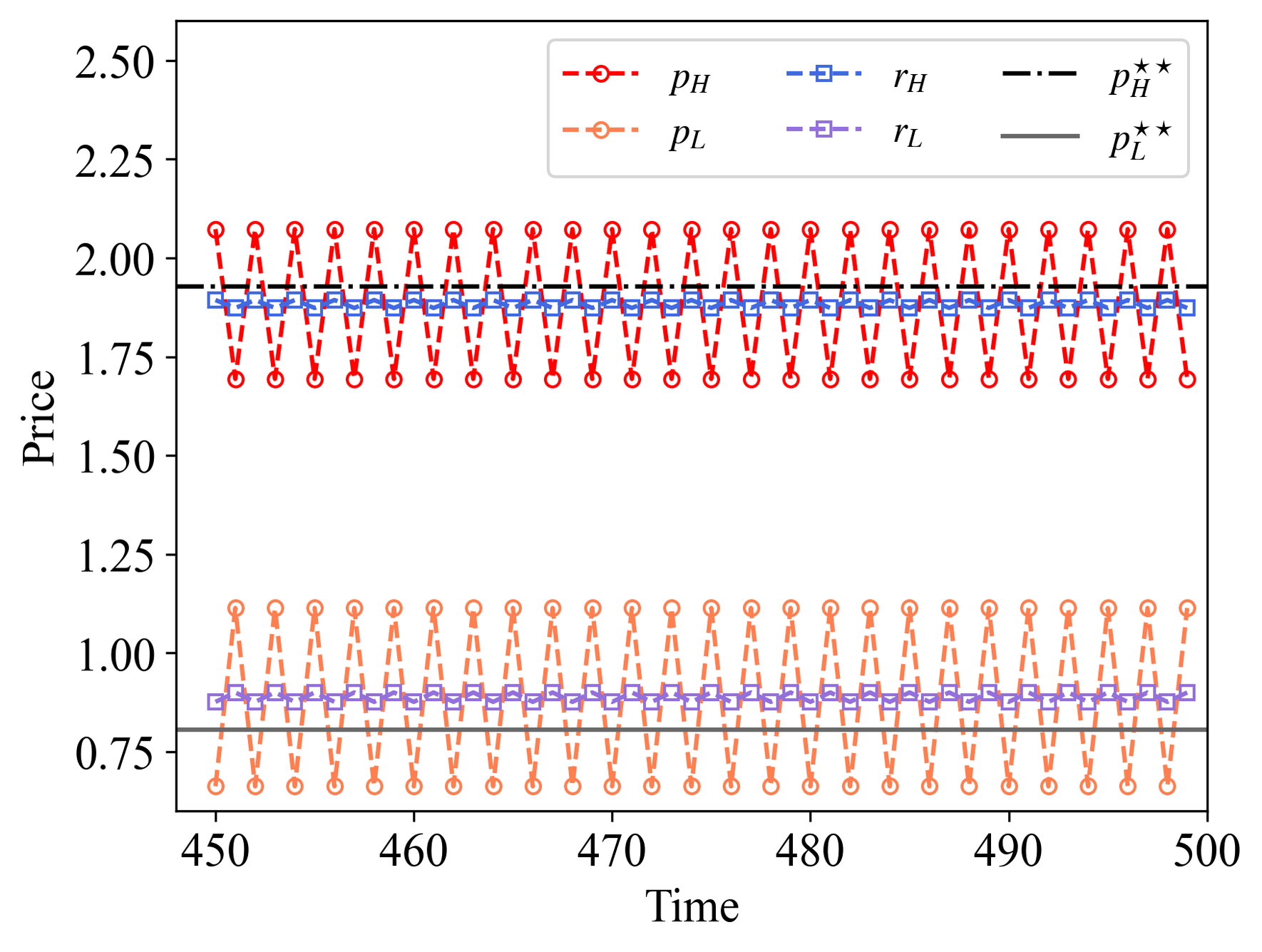}
    \caption{Cyclic pattern with $\eta^t = 1$.\label{subfig: 2}}
    \end{subfigure}
    \hspace{-0.37cm}
    \begin{subfigure}[b]{0.36\textwidth}
      \includegraphics[width=\textwidth]{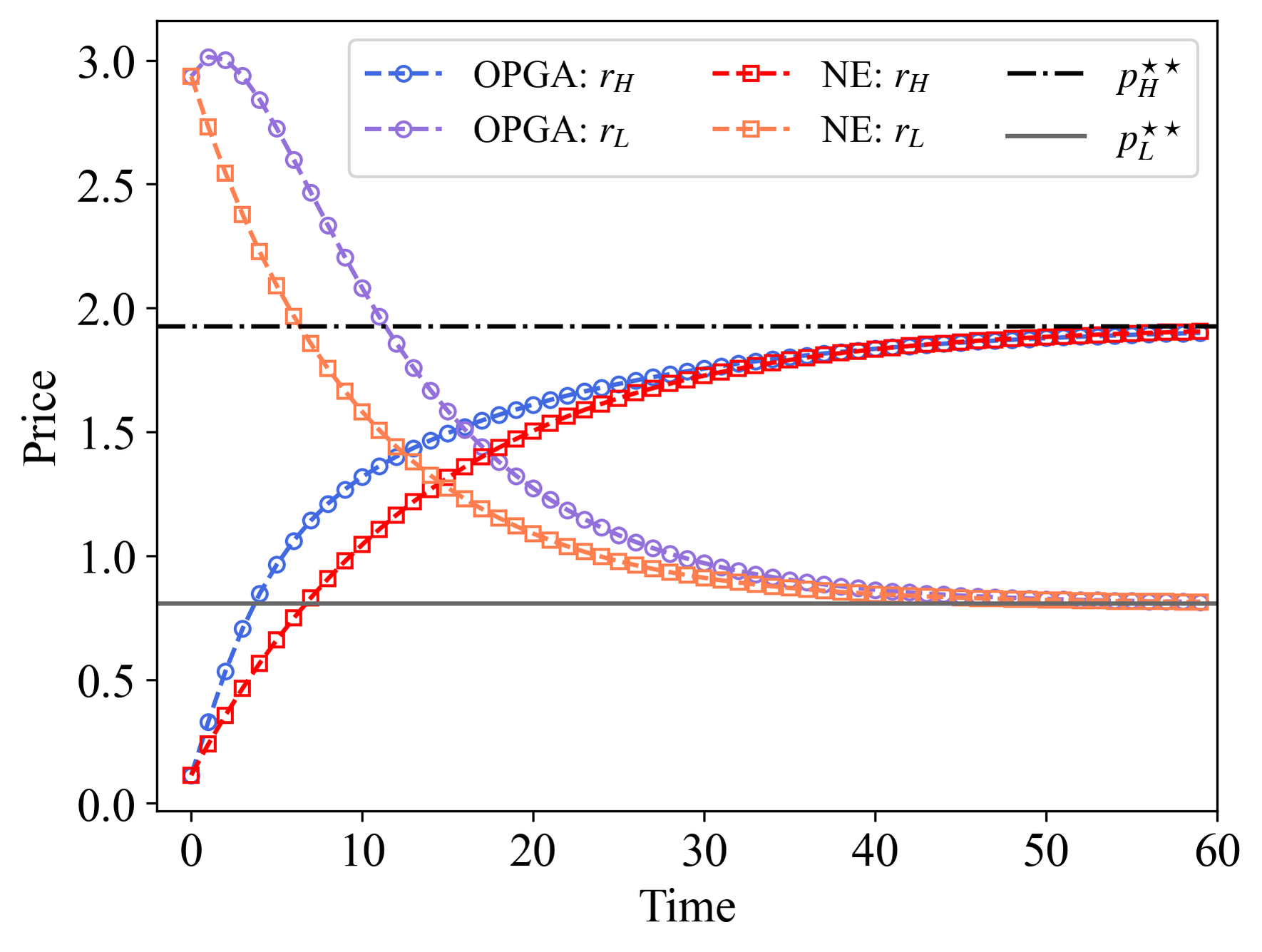}
      \caption{OPGA vs. equilibrium policy.\label{subfig: 3}}
    \end{subfigure}
    \hspace{-0.5cm}
    \caption{Price and reference price paths for Examples 1, 2, and 3, where the parameters are  $(a_H, b_H, c_H) = (8.70, 2.00, 0.82)$, $(a_L, b_L, c_L) = (4.30, 1.20, 0.32)$, $(r_H^0, r_L^0) = (0.10, 2.95)$, $(p_H^0, p_L^0) = (4.85, 4.86)$,  and $\alpha = 0.90$. 
    \label{fig: paths}}
\end{figure*}
We perform two numerical experiments, differentiated solely by their sequences of step-sizes, to highlight the significance of step-sizes in achieving the convergence.
In particular, Example 1 (see Figure \ref{subfig: 1}) corroborates Theorem \ref{thm: gradient_convergence} by demonstrating that the price and reference price trajectories converge to the unique SNE when we choose diminishing step-sizes that fulfill the criteria specified in Theorem \ref{thm: gradient_convergence}.
By comparison, the over-large constant step-sizes employed in Example 2 (see Figure \ref{subfig: 2}) fails to ensure convergence, leading to cyclic patterns in the long run. 
Moreover, in Example 3 (see Figure \ref{subfig: 3}), we compare the OPGA algorithm and the repeated application of the equilibrium pricing policy in Eq. \eqref{eq: equilibrium_policy} by plotting the reference price trajectories produced by both approaches.
Figure \ref{subfig: 3} conveys that the two algorithms reach the SNE at a comparable rate, which indicates that the OPGA algorithm allows firms to attain the market equilibrium and stability as though they operate under perfect information, while preventing excessive information disclosure.

\section{Conclusion and future work}
This article considers the price competition with reference effects in an opaque market, where we formulate the problem as an online game with an underlying dynamic state, known as reference price.
We accomplish the goal of simultaneously capturing the market equilibrium and stability via proposing the no-regret algorithm named OPGA and providing the theoretical guarantee for its global  convergence to the SNE. 
In addition, under appropriate step-sizes, we show that the OPGA algorithm has the convergence rate of $\mca{O}(1/t)$.

In this paper, we focus on the symmetric reference effects, where consumers exhibit equal  sensitivity to the gains and losses. 
As empirical studies suggest that consumers may display asymmetric reactions \cite{kalwani19900a,kalyanaram1994empirical,slonim2009similarities}, one possible extension involves accounting for asymmetric reference effects displayed by loss-averse and gain-seeking consumers. 
Additionally, while our study considers the duopoly competition between two firms, it would be valuable to explore the more general game involving $n$ players.
Lastly, our research is based on a deterministic setting and pure strategy Nash equilibrium. Another intriguing direction for future work is to investigate mixed-strategy learning \cite{perkins2015mixed} in dynamic competition problems. 

% \input{sec/5_supporting}

%%%%%%%%%%%%%%%%%%%%%%%%%%%%%%%%%%%%%%%%%%%%%%%%%%%%%%%%%%%%
\newpage
\bibliographystyle{unsrt}
\bibliography{ref}

\newpage
\begin{appendices}
\section{Discussion about alternative methods}\label{sec: app_formulation}
As we briefly mentioned in Section \ref{sec: model}, our problem can also be translated into a four-player online game or a dynamical system. In this section, we will discuss these alternative methods and explain why existing tools from the literature of online game and dynamical system cannot be applied.
\subsection{Four-player online game formulation} 
By viewing the underlying state variables $(r_H, r_L)$ as virtual players that undergo deterministic transitions, we are able to convert our problem into a general four-player game.
Specifically, we can construct revenue functions $R_i(p_i, r_i)$ for $i \in\{H, L\}$ and a common sequence of step-sizes $\{\eta_r^t\}_{t\geq 0}$ for the two virtual players such that when firms $H$, $L$ and virtual players implement OPGA using the corresponding step-sizes, the resulting price path $\{\mb{p}^t\}_{t\geq 0}$ and reference price path $\{\mb{r}^t\}_{t\geq 0}$ recover those generated by Algorithm \ref{alg: OPGA}.
The functions $R_i(\cdot,\cdot), \forall i \in \{H, L\}$ and step-sizes $\{\eta_r^t\}_{t \geq 0}$ are specified as follows
\begin{equation}
\begin{aligned}
    R_i (p_i,r_i)&= -\dfrac{1}{2} r_i^2 +r_i \cdot p_i, \quad \forall i \in \{H, L\};\\
    \eta_r^t &\equiv 1-\alpha, \quad \forall t \geq 0. 
\end{aligned}
\end{equation}
Then, given the initial reference price $\mb{r}^0$ and initial price $\mb{p}^0$, the four players respectively update their states through the online projected gradient ascent specified as follows, for $t \geq 0$:
\begin{equation}\label{eq: 4_player_dynamics}
\left\{\begin{array}{l}
p_H^{t+1} = \operatorname{Proj}_\mathcal{P} \left( p_H^t + \eta^t \cdot \dfrac{\partial \log\big(\Pi_H(\mb{p}^t, \mb{r}^t)\big)}{\partial p_H}\right) =  \operatorname{Proj}_\mathcal{P} \Big(p_H^t + \eta^t D_H^t\Big),\\[13pt]
p_L^{t+1} = \operatorname{Proj}_\mathcal{P} \left( p_L^t + \eta^t \cdot \dfrac{\partial \log\big(\Pi_L(\mb{p}^t, \mb{r}^t)\big)}{\partial p_L}\right) =  \operatorname{Proj}_\mathcal{P} \Big(p_L^t + \eta^t D_L^t\Big),\\[13pt]
r_H^{t+1} = \operatorname{Proj}_\mathcal{P} \left( r_H^t + \eta_r^t \cdot \dfrac{\partial R_H(p_H^t, r_H^t)}{\partial r_H}\right) =  \operatorname{Proj}_\mathcal{P} \Big(\alpha r_H ^t+ (1 - \alpha)p_H^t\Big),\\[13pt]
r_L^{t+1} = \operatorname{Proj}_\mathcal{P} \left( r_L^t + \eta_r^t \cdot \dfrac{\partial R_L(p_L^t, r_L^t)}{\partial r_L}\right) =  \operatorname{Proj}_\mathcal{P} \Big(\alpha r_L^t + (1 - \alpha)p_L^t\Big). \\
\end{array}\right.
\end{equation}
It can be easily seen that the pure strategy Nash equilibrium of this four-player static game is equivalent to the SNE of the original two-player dynamic game, as defined in Definition \ref{def:SNE}.
However, even after converting to the static game, no general convergence results are readily applicable in this setting. One obstacle on the convergence analysis comes from the absence of critical properties, such as monotonicity of the static game \cite{bravo2018bandit,lin2020finite} or variational stability at its Nash equilibrium \cite{mertikopoulos2017convergence,mertikopoulos2019learning}.

Meanwhile, anther obstacle stems from the asynchronous updates for real firms (price players) and virtual firms (reference price players). 
While the real firms have the flexibility in adopting time-varying step-sizes, the updates for the two virtual firms in Eq. \eqref{eq: 4_player_dynamics} stick to the constant step-size of $(1-\alpha)$. 
As a result, this rigidity of virtual players perplexes the analysis, given that the typical convergence results of online games call for the step-sizes of multiple players to have the same pattern (all diminishing or constant step-sizes) \cite{nagurney1995projected,scutari2010convex,bravo2018bandit,mertikopoulos2019learning}.

\subsection{Dynamical system formulation}
The study of the limiting behavior of a competitive gradient-based learning algorithm is related to dynamical system theories \cite{mazumdar2020gradient}.
In fact, the update of Algorithm \ref{alg: OPGA} can be viewed as a nonlinear dynamical system. Assume a constant step-size is employed, i.e., $\eta^t\equiv \eta$, $\forall t\geq 0$.
Then, Lines 5 and 6 in Algorithm \ref{alg: OPGA} are equivalent to the dynamical system 
\begin{equation}\label{eq: dynamical_system}
(\mb{p}^{t+1},\mb{r}^{t+1}) = \mb{f}(\mb{p}^{t},\mb{r}^{t}),\quad \forall t\geq 0,
\end{equation}
where $\mb{f}(\cdot)$ is a vector-valued function defined as
\begin{equation}
\mb{f}(\mb{p},\mb{r}) := \left(\begin{array}{c}
 \operatorname{Proj}_\mca{P}\left( p_H+ \eta\left(\dfrac{1}{p_H}+(b_H+c_H)\cdot d_H(\mb{p},\mb{r})-(b_H+c_H)\right) \right)     \\[10pt]
\operatorname{Proj}_\mca{P}\left( p_L+ \eta\left(\dfrac{1}{p_L}+(b_L+c_L)\cdot d_L(\mb{p},\mb{r})-(b_L+c_L)\right) \right) \\[10pt]
\operatorname{Proj}_\mca{P}\left(\alpha r_H + (1-\alpha)p_H\right)\\[10pt]
\operatorname{Proj}_\mca{P}\left(\alpha r_L + (1-\alpha)p_L\right)
\end{array} \right).
\end{equation}
Under the assumption that $\mb{p}^{\star\star}\in \mca{P}^2$, it is evident that $\mb{p}^{\star\star}$ is the unique fixed point of the system in Eq. \eqref{eq: dynamical_system}.
Generally, fixed points can be categorized into three classes:
\begin{itemize}[leftmargin=*]
    \item \textit{asymptotically stable}, when all nearby solutions converge to it,
    \item \textit{stable}, when all nearby solutions remain in close proximity,
    \item \textit{unstable}, when almost all nearby solutions diverge away from the fixed point.
\end{itemize}
Hence, if we can demonstrate the asymptotic stability of $\mb{p}^{\star\star}$, we can at least prove the local convergence of the price and reference price.

Standard dynamical systems theory 
 \cite{arrowsmith1990introduction,olver2015nonlinear} states that $\mb{p}^{\star\star}$ is asymptotically stable if the spectral radius of the Jacobian matrix $\nabla \mb{f}(\mb{p}^{\star\star},\mb{p}^{\star\star})$ is strictly less than one.
However, computing the spectral radius is not straightforward.
The primary challenge stems from the fact that, while the entries of $\nabla \mb{f}(\mb{p}^{\star\star},\mb{p}^{\star\star})$ contain $\mb{p}^{\star\star}$ and $d_i(\mb{p}^{\star\star},\mb{p}^{\star\star})$, there is no closed-form expression for $\mb{p}^{\star\star}$.

Apart from the above issue, it is worth noting that the function $f(\cdot)$ is not globally smooth due to the presence of the projection operator. Furthermore, the function $f(\cdot)$ also depends on the step-size $\eta$. When firms adopt time-varying step-sizes, the dynamical system in Eq. \eqref{eq: dynamical_system} becomes non-stationary, i.e., $(\mathbf{p}^{t+1},\mathbf{r}^{t+1}) = \mathbf{f}^t(\mathbf{p}^{t},\mathbf{r}^{t})$. Although the sequence of functions ${\mathbf{f}^t}_{t\geq 0}$ shares the same fixed point, verifying the convergence (stability) of the system requires examining the spectral radius of  $\nabla \mb{f}^t (\mb{p}^{\star\star},\mb{p}^{\star\star})$ for all $t\geq 0$.

Most importantly, even if asymptotic stability holds, it can only guarantee local convergence of Algorithm \ref{alg: OPGA}.
Our goal, however, is to prove global convergence, such that both the price and reference price converge to the SNE for arbitrary initializations.

\section{More details about SNE}\label{sec: app_equilibrium_SNE}
\subsection{Connection between equilibrium pricing policy and SNE}\label{subsec: connection}
First, we further elaborate on the connection between equilibrium pricing policy and SNE. 
The work \cite{guo2023duopoly} investigates the properties of equilibrium pricing policy under the full information setting. They demonstrate that the policy $\mb{p}^\star(\mb{r})$ is a well-defined function since it outputs a unique equilibrium price for every given reference price $\mb{r}$. 
Further, for any given reference price $\mb{r}$, the equilibrium price $\big(p_H^\star(\mb{r}), p_L^\star(\mb{r})\big)$ is the unique solution to the following system of first-order equations \cite[Eq. (B.10)]{guo2023duopoly}:
\begin{equation}\label{eq: system_ne}
\left\{\begin{array}{l}
p_H^{\star}(\mb{r}) = \dfrac{1}{\left(b_H+c_H\right) \cdot\big(1-d_H(\mb{p}^\star(\mb{r}), \mb{r})\big)},\\[13pt]
p_L^{\star}(\mb{r}) \ = \dfrac{1}{\left(b_L+c_L\right) \cdot\big(1-d_L(\mb{p}^\star(\mb{r}), \mb{r})\big)}.\\
\end{array}\right. 
\end{equation}
The intertemporal aspect of the memory-based reference effect prompts researchers to examine the long-term consequence of repeatedly applying the equilibrium pricing policy in conjunction with reference price updates. Theorem 4 in \cite{guo2023duopoly} confirms that in the complete information setting, the price paths and reference price paths generated by the equilibrium pricing policy will converge to the SNE in the long run. 
Moreover, based on Eq. \eqref{eq: system_ne} for the equilibrium pricing policy, the SNE $\mb{p}^{\star\star}$ is the unique solution of the following system of equations:
\begin{equation}\label{eq: system_sne}
\left\{\hspace{-3pt}\begin{array}{l}
p_H^{\star\star} = \dfrac{1}{\left(b_H+c_H\right) \cdot\big(1-d_H(\mb{p}^{\star\star}, \mb{p}^{\star\star})\big)} = \dfrac{1+\exp(a_H-b_H\cdot p_H^{\star\star})+\exp(a_L-b_L\cdot p_L^{\star\star})}{(b_H+c_H)\cdot(1+\exp(a_L-b_L\cdot p_L^{\star\star})) },\\[13pt]
p_L^{\star\star}  = \dfrac{1}{\left(b_L+c_L\right) \cdot\big(1-d_L(\mb{p}^{\star\star}, \mb{p}^{\star\star})\big)}=\dfrac{1+\exp(a_H-b_H\cdot p_H^{\star\star})+\exp(a_L-b_L\cdot p_L^{\star\star})}{(b_L+c_L)\cdot(1+\exp(a_H-b_H\cdot p_H^{\star\star})) }.\\
\end{array}\right. 
\end{equation}
It can be seen from Eq. \eqref{eq: system_sne} that the value of SNE depends only on the model parameters.

\subsection{Proof of Proposition \ref{prop: SNE}}\label{subsec: app_discussion_P}
\begin{proof}
Firstly, the uniqueness of SNE has been established in the discussion in Appendix \ref{subsec: connection}. 
Below, we show the boundedness of the SNE.
By performing a transformation on Eq. \eqref{eq: system_sne}, we obtain the following inequality for $\mb{p}^{\star\star}$:
\begin{equation}
\dfrac{\exp(a_i-b_i\cdot p_i^{\star\star})}{(b_i+c_i)p_i^{\star\star}-1} = 1+ \exp(a_{-i}-b_{-i}\cdot p_{-i}^{\star\star})>1,\quad \forall i\in \{H,L\}.
\end{equation}
To make the above inequality valid, it immediately follows that
\begin{equation}\label{eq: sne_ineq}
     \dfrac{1}{b_i+c_i} < p_i^{\star\star} < \dfrac{1+ \exp(a_i - b_i \cdot p_i^{\star\star})}{b_i+c_i},\quad \forall i\in \{H,L\}.
\end{equation}

Now, we derive the upper bound for $p_i^{\star\star}$ from the second inequality in Eq. \eqref{eq: sne_ineq}. Since the quantity on the right-hand side of Eq. \eqref{eq: sne_ineq} is monotone decreasing in $p_i^{\star\star}$, the value of $p_i^{\star\star}$ must be upper-bounded by the unique solution to the following equation with respect to $p_i$:
\begin{equation}
p_i = \dfrac{1+ \exp(a_i - b_i \cdot p_i)}{b_i+c_i}.
\end{equation}
Define $x:= -b_i/(b_i+c_i) + b_i \cdot p_i $. Then, one can easily verify that the above equation can be converted into
\begin{equation}
x\exp(x) = \dfrac{b_i}{b_i+c_i}\exp\left(a_i - \dfrac{b_i}{b_i+c_i}\right),
\end{equation}
which implies that
\begin{equation}
x = {W}\left(\dfrac{b_i}{b_i+c_i}\exp\left(a_i - \dfrac{b_i}{b_i+c_i}\right)\right),
\end{equation}
where $W(\cdot)$ is known as the Lambert $W$ function \cite{weisstein2002lambert}. For any value $y \geq 0$, $W(y)$ is defined as the unique real solution to the equation
\begin{equation}\label{eq: lambert_def}
W(y) \cdot \exp\big(W(y)\big) = y.
\end{equation}
Hence, we have 
\begin{equation}\label{eq: W_p}
    p_i^{\star\star} < p_i = \dfrac{1}{b_i+c_i} + \dfrac{1}{b_i} {W}\left(\dfrac{b_i}{b_i+c_i}\exp\left(a_i - \dfrac{b_i}{b_i+c_i}\right)\right).
\end{equation}
Together with the lower bound provided in Eq. \eqref{eq: sne_ineq}, this completest the proof. 
\end{proof}

\section{Proof of Theorem \ref{thm: gradient_convergence}}\label{sec: app_prove4.1}
\begin{proof}
In this section, we present the proof of Theorem \ref{thm: gradient_convergence}.
By Lemma \ref{lemma: convergence_of_price_paths}, when the step-sizes in Algorithm \ref{alg: OPGA} are non-increasing and $\lim_{t\rightarrow \infty} \eta^t = 0$, the difference between reference price and price converges to zero as $t$ goes to infinity, i.e., $\lim_{t\rightarrow \infty}\big(\mb{r}^t - \mb{p}^t\big) = 0$. 
Thus, we are left to verify that the price path $\{\mb{p}^t\}_{t\geq 0}$ converges to the unique SNE, denoted by  $\mb{p}^{\star\star}$ (see Definition \ref{def:SNE}).
Recall that the uniqueness of the SNE is established in Proposition \ref{prop: SNE}.

Consider the following weighted $\ell_1$-distance between a point $\mb{p}$ and the SNE $\mb{p}^{\star\star}$:
\begin{equation}\label{eq: ell_1_distance}
\varepsilon(\mb{p}):= \dfrac{|p_H^{\star\star} - p_H|}{b_H + c_H} +\dfrac{|p_L^{\star\star} - p_L|}{b_L + c_L}.
\end{equation}
In a nutshell, our proof contains the following two parts:
\begin{itemize}[leftmargin=*]
    \item In \textbf{Part 1} (Appendix \ref{subsec: app_part1}), we show that the price path $\left\{\mb{p}^t\right\}_{t\geq 0}$ would enter the $\ell_1$-neighborhood $N^1_{\epsilon_0}:= \big\{\mb{p}\in \mca{P}^2\mid \varepsilon(\mb{p})<{\epsilon_0} \big\}$ infinitely many times for any ${\epsilon_0} >0$.
    \item In \textbf{Part 2} (Appendix \ref{subsec: app_part2}), we show that when the price path $\left\{\mb{p}^t\right\}_{t\geq 0}$ enters the $\ell_2$-neighborhood $N^2_{\epsilon_0}:= \big\{\mb{p}\in \mca{P}^2\mid \|\mb{p}-\mb{p}^{\star \star}\|_2<{\epsilon_0} \big\}$ for some sufficiently small ${\epsilon_0}>0$ with small enough step-sizes, the price path will stay in $N^2_{\epsilon_0}$ during subsequent periods.
\end{itemize}
Due to the diminishing step-sizes and the equivalence between $\ell_1$ and $\ell_2$ norms, \textbf{Part 1} guarantees that the price path would enter any $\ell_2$-neighborhood $N^2_{\epsilon_0}$ of $\mb{p}^{\star\star}$ with arbitrarily small step-sizes. 
Together with \textbf{Part 2}, this proves that for any sufficiently small $\epsilon_0 >0$, there exists some $T_{\epsilon_0} >0$ such that $\|\mb{p}^t -\mb{p}^{\star\star}\|_2\leq \epsilon_0$ for every $t\geq T_{\epsilon_0}$, which implies the convergence to the SNE.

\subsection{Proof of Part 1}\label{subsec: app_part1}
    We argue by contradiction. Suppose there eixsts $ {\epsilon_0} >0$ such that $\left\{ \mb{p}\right\}_{t\geq 0}$ only visits $N^1_{\epsilon_0}$ finitely many times.
    This is equivalent to say that $\exists T^{\epsilon_0}$ such that $\left\{\mb{p}^t\right\}_{t\geq 0}$ never visits $N_{\epsilon_0}^1$ if $t \geq T^{\epsilon_0}$.

    Firstly, let $G_i(\mb{p},\mb{r})$ be the scaled partial derivative of the log-revenue, defined as 
    \begin{equation}\label{eq: G_i_def}
       G_i(\mb{p},\mb{r}):= \dfrac{1}{b_i+c_i}\cdot \dfrac{\partial\log\big(\Pi_i (\mathbf{p}, \mathbf{r})\big)}{\partial p_i} = \dfrac{1}{(b_i+c_i)p_i } + d_i(\mb{p},\mb{r}) -1, \quad \forall i \in\{H, L\}.
    \end{equation}
    For the ease of notation, we denote $\mathcal{P}_i:= \left\{p/(b_i+c_i)\mid p\in \mca{P} \right\}$ as the scaled price range.
    Then, the price update in Line 5 of Algorithm \ref{alg: OPGA} is equivalent to 
    \begin{equation}\label{eq: equivalent_update}
    \dfrac{p_i^{t+1}}{b_i+c_i} = \operatorname{Proj}_{\mathcal{P}_i}\left(\dfrac{p_i^t}{b_i+c_i} + \eta^t \dfrac{D_i^t}{b_i+c_i} \right) = \operatorname{Proj}_{\mathcal{P}_i}\left(\dfrac{p_i^t}{b_i+c_i} + \eta^t G_i(\mb{p}^t,\mb{r}^t) \right).
    \end{equation}
    Let $\operatorname{sign}(\cdot)$ be the sign function defined as
    \begin{equation}\label{eq: sign}
    \operatorname{sign}(x) :=  \left\{
    \begin{array}{ll}
        1,  & \text{ if } x>0, \\[3pt]
        0,  & \text{ if } x=0, \\[3pt]
        -1,  &\text{ if } x<0. 
    \end{array}
    \right.
    \end{equation}
    Then, an essential observation from Eq. \eqref{eq: equivalent_update} is that: if $\operatorname{sign}(p_i^{\star\star}-p_i^t)\cdot G_i(\mb{p}^t,\mb{r}^t) >0$, we have that $\operatorname{sign}(p_i^{t+1} - p_i^{t}) = \operatorname{sign}\big(G_i(\mb{p}^t,\mb{r}^t)\big) = \operatorname{sign}(p_i^{\star\star} - p_i^{t})$, i.e., the update from $p_i^t$ to $p_i^{t+1}$ is toward the direction of the SNE price $p_i^{\star\star}$.
    Conversely, if $\operatorname{sign}(p_i^{\star\star}-p_i^t)\cdot G_i(\mb{p}^t,\mb{r}^t) <0$, the update from $p_i^t$ to $p_i^{t+1}$ is deviating from $p_i^{\star\star}$.

    We separate the whole feasible price range into four quadrants with $\mb{p}^{\star \star}$ being the origin:
    \begin{equation}\label{case_2}
    \arraycolsep=1.5pt\def\arraystretch{1.5}
    \begin{array}{ll}
     N_1 := \left\{\mb{p} \in \mca{P}^2 \mid p_H > p^{\star \star}_H  \text{ and  }  p_L \geq p_L^{\star \star}\right\}; \
     N_2 := \left\{\mb{p} \in \mca{P}^2 \mid p_H \leq p^{\star \star}_H  \text{ and  }  p_L > p_L^{\star \star}\right\};\\[3pt]
     N_3 := \left\{\mb{p} \in \mca{P}^2 \mid p_H < p^{\star \star}_H  \text{ and  }  p_L \leq p_L^{\star \star}\right\}; \
     N_4 := \left\{\mb{p} \in \mca{P}^2 \mid p_H \geq p^{\star \star}_H  \text{ and  }  p_L < p_L^{\star \star}\right\}.\\[3pt]
    \end{array}
    \end{equation}

     Below, we first show that {\it when $t$ is sufficiently large, prices in consecutive periods cannot always stay within a same region.} We prove it by a contradiction argument. 
    Suppose that there exists some period $T^{\epsilon_0}_1 >T^{\epsilon_0}$, after which the price path $\{\mb{p}^t\}_{t\geq T^{\epsilon_0}_1}$ stays within a same region, i.e., $\operatorname{sign}\left(p_i^{\star\star} - p_i^{t_1}\right) = \operatorname{sign}\left(p_i^{\star\star} - p_i^{t_2}\right)$, $\forall t_1,t_2\geq T^{\epsilon_0}_1$, $\forall i \in \{H, L\}$. Then, for $t \geq T^{\epsilon_0}_1$, it holds that
    \begin{equation}\label{eq: same_region_contradict}
    \begin{aligned}
     \varepsilon(\mb{p}^{t+1})
        =\ \ & \dfrac{|p_H^{\star\star} - p_H^{t+1}|}{b_H + c_H} + \dfrac{|p_L^{\star\star} - p_L^{t+1}|}{b_L + c_L}\\[10pt]
        =\ \ &\left|\dfrac{p_H^{\star\star}}{b_H+c_H} - \operatorname{Proj}_{\mathcal{P}_H}\left(\eta^t G_H(\mb{p}^t,\mb{r}^t) + \dfrac{p_H^{t}}{b_H+c_H}\right)\right|\\[3pt]
        & + \left|\dfrac{p_L^{\star\star}}{b_L+c_L} - \operatorname{Proj}_{\mathcal{P}_L} \left(\eta^t G_L(\mb{p}^t,\mb{r}^t) + \dfrac{p_L^t}{b_L+c_L}\right)\right|\\[10pt]
        \overset{(\Delta_1)}{\leq}& \left|\dfrac{p_H^{\star\star}}{b_H + c_H} - \eta^t G_H(\mb{p}^t,\mb{r}^t) - \dfrac{p_H^{t}}{b_H+c_H}\right| + \left|\dfrac{p_L^{\star\star}}{b_L+c_L} - \eta^t G_L(\mb{p}^t,\mb{r}^t) - \dfrac{p_L^t}{b_L+c_L}\right|\\[10pt]
        \overset{(\Delta_2)}{=}&\operatorname{sign}\left(p_H^{\star\star} - p_H^{t}\right)\cdot\left(\dfrac{p_H^{\star\star}}{b_H + c_H} - \eta^t G_H(\mb{p}^t,\mb{r}^t) - \dfrac{p_H^{t}}{b_H+c_H}\right)\\[3pt]
        & +\operatorname{sign}\left(p_L^{\star\star} - p_L^{t}\right)\cdot\left(\dfrac{p_L^{\star\star}}{b_L + c_L} - \eta^t G_L(\mb{p}^t,\mb{r}^t) - \dfrac{p_L^{t}}{b_L+c_L}\right)\\[10pt]
        {=}\ \  &\operatorname{sign}\left(p_H^{\star\star} - p_H^{t}\right)\cdot \left(\dfrac{p_H^{\star\star}}{b_H+c_H} - \eta^t G_H(\mb{p}^t,\mb{p}^t) - \dfrac{p_H^t}{b_H + c_H}\right)\\[3pt]
        & + \operatorname{sign}\left(p_L^{\star\star} - p_L^{t}\right)\cdot \left(\dfrac{p_L^{\star\star}}{b_L+c_L} - \eta^t G_L(\mb{p}^t,\mb{p}^t) - \dfrac{p_L^t}{b_L + c_L}\right)\\[3pt]
        & + \eta^t \bigg(\Big|G_H(\mb{p}^t,\mb{r}^t) - G_H(\mb{p}^t,\mb{p}^t)\Big| + \Big|G_L(\mb{p}^t,\mb{r}^t) - G_L(\mb{p}^t, \mb{p}^t)\Big|\bigg)\\[10pt]
        \overset{(\Delta_3)}{\leq}& \operatorname{sign}\left(p_H^{\star\star} - p_H^{t}\right)\cdot \left(\dfrac{p_H^{\star\star} - p_H^t}{b_H+c_H}\right) + \operatorname{sign}\left(p_L^{\star\star} - p_L^{t}\right)\cdot \left(\dfrac{p_L^{\star\star} - p_L^t}{b_L+c_L}\right)\\[3pt]
        & - \eta^t\Big(\underbrace{\operatorname{sign}\left(p_H^{\star\star} - p_H^{t}\right)\cdot G_H(\mb{p}^t,\mb{p}^t) + \operatorname{sign}\left(p_L^{\star\star} - p_L^{t}\right)\cdot  G_L(\mb{p}^t,\mb{p}^t)}_{\mca{G}(\mb{p}^t)}\Big) \\[3pt]
        & + 2\eta^t \cdot l_r \|\mb{r}^t - \mb{p}^t\|_2 \\[10pt]
        \overset{(\Delta_4)}{=}& 
        \dfrac{|p_H^{\star\star} - p_H^{t}|}{b_H + c_H} + \dfrac{|p_L^{\star\star} - p_L^{t}|}{b_L + c_L} - \eta^t \mca{G}(\mb{p}^t) + 2\eta^t \cdot l_r \|\mb{r}^t - \mb{p}^t\|_2 \\[3pt]
        =\ \ &\varepsilon(\mb{p}^t) - \eta^t \Big( \mca{G}(\mb{p}^t) - 2l_r \|\mb{r}^t - \mb{p}^t\|_2\Big)\\[3pt]
        \overset{(\Delta_5)}{\leq}& 
        \varepsilon(\mb{p}^t) - \eta^t \big(M_{\epsilon_0} - 2l_r\|\mb{r}^t - \mb{p}^t\|_2 \big), \quad \forall t \geq T^{\epsilon_0}_1,
    \end{aligned}
    \end{equation}
    where we use the property of the projection operator in $(\Delta_1)$.
    The equality $(\Delta_2)$ follows since $\operatorname{sign}\left(p_i^{\star\star} - p_i^{t+1}\right) = \operatorname{sign}\left(p_i^{\star\star} - p_i^{t}\right)$ for $i \in \{H, L\}$, and the projection does not change the sign, then we have $$\operatorname{sign}\left(\dfrac{p_i^{\star\star}}{b_i+c_i}- \eta^t G_i(\mb{p}^t,\mb{r}^t) - \dfrac{p_i^t}{b_i+c_i} \right) = \operatorname{sign}\left(p_i^{\star\star} - p_i^t \right), \quad \forall i \in \{H, L\}.$$
    The inequality $(\Delta_3)$ results from Lemma \ref{lemma: constant}. In particular, since $\big\|\nabla_{\mb{r}} G_i(\mb{p}, \mb{r})\big\|_2 \leq l_r$, where $l_r=(1/4)\sqrt{c_H^2 + c_L^2}$, it follows that
    \begin{equation}
        \big|G_i(\mb{p}^t,\mb{r}^t) - G_i(\mb{p}^t,\mb{p}^t)\big| \leq \big\|\nabla_{\mb{r}}G_i(\mb{p},\mb{r})\big\|_2 \cdot \big\|\mb{r}^t - \mb{p}^t\big\|_2 \leq l_r\big\|\mb{r}^t - \mb{p}^t\big\|_2, \quad \forall i \in \{H, L\}.
    \end{equation}
    In step $(\Delta_4)$, we introduce the function $\mca{G}(\mb{p})$, which is defined as
    \begin{equation}\label{def: mca_G}
        \mca{G}(\mb{p}):=\operatorname{sign}(p_H^{\star \star} - p_H) \cdot G_H(\mb{p}, \mb{p}) + \operatorname{sign}(p_L^{\star \star} - p_L) \cdot G_L(\mb{p}, \mb{p}), \quad \forall \mb{p} \in \mca{P}^2.
    \end{equation}
    Finally, the last step $(\Delta_5)$ is due to Lemma \ref{lemma: gradient_based}, which states that  $\exists M_{\epsilon_0} > 0$ such that $\mca{G}(\mb{p}) \geq M_{\epsilon_0}$ for every $\mb{p}\in \mca{P}^2$ with $\varepsilon(\mb{p}) > {\epsilon_0}$. Since for $t \geq T^{\epsilon_0}_1 >T^{\epsilon_0}$, we have $\varepsilon(\mb{p}^t) \geq {\epsilon_0}$ by the premise, it follows that $\mca{G}(\mb{p}^t) \geq M_{\epsilon_0}$.

    By Lemma \ref{lemma: convergence_of_price_paths}, the difference $\{\mb{r}^t-\mb{p}^t\}_{t\geq 0}$ converges to $0$ as $t$ approaches to infinity .
    As a result, we can always find $T^{\epsilon_0}_2>0$ to ensure $\|\mb{r}^t - \mb{p}^t\|_2 \leq M_{\epsilon_0}/(4l_r)$ for all $t \geq T^{\epsilon_0}_2$.
    Then, the last line in Eq. \eqref{eq: same_region_contradict} can be further upper-bounded as 
    \begin{equation}
    \begin{aligned}
        \varepsilon(\mb{p}^{t+1})  &\leq \varepsilon(\mb{p}^t) - \eta^t \Big(M_{\epsilon_0} - 2l_r \|\mb{r}^t - \mb{p}^t\|_2\Big)\\[3pt]
        &\leq \varepsilon(\mb{p}^t) - \eta^t \Big(M_{\epsilon_0} - \dfrac{1}{2}M_{\epsilon_0} \Big)\\[3pt]
        & \leq \varepsilon(\mb{p}^t) - \dfrac{1}{2} \eta^t M_{\epsilon_0}, \quad \forall t \geq \widehat{T}^{\epsilon_0}:= \max\{T^{\epsilon_0}_1, T^{\epsilon_0}_2\}.\\[3pt]
    \end{aligned}
    \end{equation}
    By a telescoping sum from any period $T-1\geq \widehat{T}^{\epsilon_0}$ down to period $\widehat{T}^{\epsilon_0}$, we have that
    \begin{equation}
    \begin{aligned}
      \varepsilon(\mb{p}^{T}) \leq \varepsilon(\mb{p}^{\widehat{T}^{\epsilon_0}}) - \dfrac{M_{\epsilon_0}}{2} \sum_{t=\widehat{T}^{\epsilon_0}}^{T-1} \eta^t.
    \end{aligned}
    \end{equation}
    Since the step-sizes satisfy $\sum_{t=0}^\infty \eta^t = \infty$, we have that $\lim_{T \rightarrow \infty} (M_{\epsilon_0}/2)\sum_{t = {\widehat{T}}^{\epsilon_0}}^{T-1} \eta^t = \infty$, which further implies the contradiction that $\lim_{T \rightarrow \infty}\varepsilon(\mb{p}^{T}) \rightarrow -\infty$.
    Thus, {\it the prices in consecutive period will not always stay in the same region, i.e., the price path $\{\mb{p}^t\}_{t\geq 0}$ keeps oscillating between different regions $N_j$ for $j\in \{1,2,3,4\}$.}

    Next, we prove that {\it $\exists T^{\epsilon_0}_3 \geq T^{\epsilon_0}$ such that the price path $\left\{\mb{p}^t\right\}_{t\geq T_3^{\epsilon_0}}$ only oscillates between adjacent quadrants.} Arguing by contradiction, we assume $\eta^{T^{\epsilon_0}_3} < {\epsilon_0} /(2M_G)$, where $M_G$ is the upper bound of $\big|G_i(\mb{p}, \mb{r})\big|$  defined in Eq. \eqref{eq: M_G_l_r} in Lemma \ref{lemma: constant}.
    If $\mb{p}^t \in N_1$ (or $N_2$) and $\mb{p}^{t+1} \in N_3$ (or $N_4$), i.e., $\operatorname{sign}\left(p_i^{\star\star} - p_i^{t}\right) \neq \operatorname{sign}\left(p_i^{\star\star} - p_i^{t+1}\right)$ for both $i\in \{H,L\}$, then it automatically follows from the update rule (see Eq. \eqref{eq: equivalent_update}) that
    \begin{equation}
    \begin{aligned}
    \varepsilon(\mb{p}^{t+1})
        &\leq \eta^t \cdot \big(\left|G_H(\mb{p}^t, \mb{r}^t)\right| + \left|G_L(\mb{p}^t, \mb{r}^t)\right|\big)\\[3pt]
        &\leq \dfrac{{\epsilon_0}}{2M_G} \cdot 2M_G = {\epsilon_0},   
    \end{aligned}
    \end{equation}
    which again contradicts the premise that $\{\mb{p}^t\}_{t\geq T^{\epsilon_0}}$ never visits $N_{\epsilon_0}^1$.
    Thus, {\it when $t\geq T_3^{\epsilon_0}$, the price path $\left\{\mb{p}^t\right\}_{t\geq T_3^{\epsilon_0}}$ only oscillates between adjacent quadrants.}
    
    Because of the following two established facts: $(i)$ when $t \geq \max\{T_2^{\epsilon_0}, T_3^{\epsilon_0}\}$, the price path will not remain in the same quadrant but can only oscillate between adjacent quadrants, and $(ii)$ the step-sizes $\{\eta^t\}_{t\geq 0}$ decrease to $0$ as $t\rightarrow \infty$, we conclude that the price path $\{\mb{p}\}_{t\geq 0}$ must visit the set
    \begin{equation}
    \widehat{N}_{\epsilon_1}:= \bigg\{\mb{p}\in \mca{P}^2 \ \bigg\vert\ \dfrac{|p_H^{\star\star}-p_H|}{b_H+c_H} < \epsilon_1 \text{ or } \dfrac{|p_L^{\star\star}-p_L|}{b_L+c_L} < \epsilon_1  \bigg\}
    \end{equation} 
    infinitely many times for any $\epsilon_1 >0$.
    Intuitively, set $\widehat{N}_{\epsilon_1}$ can be understood as the boundary regions between adjacent quadrants.
    We define $\mca{T}_{\epsilon_1}$ as the set of ``jumping periods'' from $(N_1\cup N_3)\cap \widehat{N}_{\epsilon_1}$ to $(N_2\cup N_4)\cap \widehat{N}_{\epsilon_1}$ after $T^{\epsilon_0}$, i.e.,
    \begin{equation}\label{eq: T_epsilon_1_def}
        \mca{T}_{\epsilon_1}:=\big\{t\geq T^{\epsilon_0} \mid \mb{p}^t \in (N_1\cup N_3)\cap \widehat{N}_{\epsilon_1} \text{ and } \mb{p}^{t+1}\in (N_2\cup N_4)\cap \widehat{N}_{\epsilon_1}\big\}.
    \end{equation}
    Then, the above fact $(i)$ implies that $|\mca{T}_{\epsilon_1}| = \infty$ for any $\epsilon_1 > 0$.
    
    The key idea of the following proof is that {\it if $\left\{\mb{p}^t\right\}_{t \geq 0}$ visits $\widehat{N}_{\epsilon_1}$ with some sufficiently small $\epsilon_1$ and a relatively smaller step-size, it will stay in $\widehat{N}_{\epsilon_1}$ and converge to $N_{\epsilon_0}^1$ concurrently.} This results in a contradiction with our initial assumption that $\left\{\mb{p}^t\right\}_{t \geq 0}$ never visits $N_{\epsilon_0}^1$ when $t\geq T^{\epsilon_0}$.
    
    Let $T_{1}^{\epsilon_1}\in \mca{T}_{\epsilon_1}$ be a jumping period with $\epsilon_1 \ll \epsilon_0$.
    Without loss of generality, suppose that $\big|p_H^{\star\star} - p_H^{T_{1}^{\epsilon_1}}\big|\big/(b_H + c_H) < \epsilon_1$. 
    Then, since $\mb{p}^{T_{1}^{\epsilon_1}}\not\in N_{\epsilon_0}^1$, we accordingly have that $\big|p_L^{\star\star} - p_L^{T_{1}^{\epsilon_1}}\big|\big/(b_L + c_L) > \epsilon_0-\epsilon_1$.
    By the update rule of price in Eq. \eqref{eq: equivalent_update}, it holds that
    \begin{equation}\label{eq: L_improve_0}
    \begin{aligned}
            \dfrac{\big|p_L^{\star\star} - p_L^{T^{\epsilon_1}_1 +1}\big|}{b_L+c_L} &= \operatorname{sign}\big(p_L^{\star\star} - p_L^{T^{\epsilon_1}_1+1} \big) \cdot \dfrac{p_L^{\star\star} - p_L^{T^{\epsilon_1}_1+1}}{b_L+c_L} \\[3pt]
        &= \operatorname{sign}\big(p_L^{\star\star} - p_L^{T^{\epsilon_1}_1+1} \big) \cdot 
        \left(\dfrac{p_L^{\star\star}}{b_L + c_L} - \operatorname{Proj}_{\mathcal{P}_L}\Big(\dfrac{p_L^{T^{\epsilon_1}_1}}{b_L+c_L} + \eta^{T^{\epsilon_1}_1} G_L\big(\mb{p}^{T^{\epsilon_1}_1},\mb{r}^{T^{\epsilon_1}_1}\big) \Big)\right)\\[3pt]
         &\leq \operatorname{sign}\big(p_L^{\star\star} - p_L^{T^{\epsilon_1}_1+1} \big) \cdot 
        \left(\dfrac{p_L^{\star\star} - p_L^{T^{\epsilon_1}_1}}{b_L + c_L} - \eta^{T^{\epsilon_1}_1} G_L\big(\mb{p}^{T^{\epsilon_1}_1},\mb{r}^{T^{\epsilon_1}_1}\big)\right)\\[3pt]
        &=\operatorname{sign}\big(p_L^{\star\star} - p_L^{T^{\epsilon_1}_1} \big)\cdot 
        \dfrac{p_L^{\star\star} - p_L^{T^{\epsilon_1}_1}}{b_L + c_L} - \operatorname{sign}\big(p_L^{\star\star} - p_L^{T^{\epsilon_1}_1} \big)\cdot 
        \left(\eta^{T^{\epsilon_1}_1} G_L\big(\mb{p}^{T^{\epsilon_1}_1},\mb{r}^{T^{\epsilon_1}_1}\big)\right).
    \end{aligned}
    \end{equation}
    We note that, the last step above is due to the premise that the step-size is sufficiently small: since $\big|p_L^{\star\star} - p_L^{T_{1}^{\epsilon_1}}\big|\big/(b_L + c_L) > \epsilon_0-\epsilon_1$, the equality $\operatorname{sign}\big(p_L^{\star\star} - p_L^{T^{\epsilon_1}_1+1} \big) = \operatorname{sign}\big(p_L^{\star\star} - p_L^{T^{\epsilon_1}_1} \big)$ holds if $\eta^{T^{\epsilon_1}_1} < (\epsilon_0-\epsilon_1)/M_G$.
    Next, we show that the second term on the right-hand side of Eq. \eqref{eq: L_improve_0} is upper-bounded away from zero:
    \begin{equation}\label{eq: sign_p_G_L}
    \begin{aligned}
        &\operatorname{sign}\big(p_L^{\star\star} - p_L^{T^{\epsilon_1}_1}\big) \cdot G_L\big(\mb{p}^{T^{\epsilon_1}_1},\mb{r}^{T^{\epsilon_1}_1}\big)\\[3pt]
        \geq\ \ & \operatorname{sign}\big(p_L^{\star\star} - p_L^{T^{\epsilon_1}_1}\big) \cdot G_L\big(\mb{p}^{T^{\epsilon_1}_1},\mb{p}^{T^{\epsilon_1}_1}\big) -  \left|G_L\big(\mb{p}^{T^{\epsilon_1}_1},\mb{r}^{T^{\epsilon_1}_1}\big) - G_L\Big(\mb{p}^{T^{\epsilon_1}_1},\mb{p}^{T^{\epsilon_1}_1}\Big)\right| \\[3pt]
        \overset{(\Delta_1)}{\geq}& \operatorname{sign} \big(p_L^{\star\star} - p_L^{T^{\epsilon_1}_1}\big) \cdot G_L\big(\mb{p}^{T^{\epsilon_1}_1},\mb{p}^{T^{\epsilon_1}_1}\big) - l_r\big\|\mb{r}^{T^{\epsilon_1}_1} - \mb{p}^{T^{\epsilon_1}_1}\big\|_2\\[10pt]
        =\ \ & \operatorname{sign}\big(p_L^{\star\star} - p_L^{T^{\epsilon_1}_1}\big) \cdot \left[\dfrac{1}{(b_L+c_L) p_L^{T^{\epsilon_1}_1}} + d_L(\mb{p}^{T^{\epsilon_1}_1}, \mb{p}^{T^{\epsilon_1}_1})-1 \right] - l_r \big\|\mb{r}^{T^{\epsilon_1}_1} - \mb{p}^{T^{\epsilon_1}_1}\big\|_2\\[10pt]
        \geq\ \ & \operatorname{sign}\big(p_L^{\star\star} - p_L^{T^{\epsilon_1}_1}\big) \cdot \left[\dfrac{1}{(b_L+c_L)p_L^{T^{\epsilon_1}_1}} + d_L\Big(\big(p_H^{\star\star}, p_L^{T^{\epsilon_1}_1}\big),\big(p_H^{\star\star}, p_L^{T^{\epsilon_1}_1}\big)\Big) -1 \right]  \\[3pt]
        & -  \left|d_L(\mb{p}^{T^{\epsilon_1}_1}, \mb{p}^{T^{\epsilon_1}_1})- d_L\Big((p_H^{\star\star}, p_L^{T^{\epsilon_1}_1}),(p_H^{\star\star}, p_L^{T^{\epsilon_1}_1})\Big)\right| - l_r\big\|\mb{r}^{T^{\epsilon_1}_1} - \mb{p}^{T^{\epsilon_1}_1}\big\|_2\\[10pt]
        \overset{(\Delta_2)}{\geq}& \operatorname{sign}\cdot\big(p_L^{\star\star} - p_L^{T^{\epsilon_1}_1}\big) \cdot \left[\dfrac{1}{(b_L+c_L)p_L^{T^{\epsilon_1}_1}} + d_L\Big((p_H^{\star\star}, p_L^{T^{\epsilon_1}_1}),(p_H^{\star\star}, p_L^{T^{\epsilon_1}_1})\Big) -1 \right]  - l_r\big\|\mb{r}^{T^{\epsilon_1}_1} - \mb{p}^{T^{\epsilon_1}_1}\big\|_2 \\[3pt]
        & - \max_{\mb{p}\in\mca{P}}\left\{\left|\dfrac{\partial d_L(\mb{p},\mb{p})}{\partial p_H}\right|\right\}\cdot \big|p_H^{\star\star}-p_H^{T^{\epsilon_1}_1} \big|\\[10pt]
        \overset{(\Delta_3)}{\geq}&\operatorname{sign}\big(p_L^{\star\star} - p_L^{T^{\epsilon_1}_1}\big) \cdot \left[\dfrac{1}{(b_L+c_L)p_L^{T^{\epsilon_1}_1}} + d_L\Big((p_H^{\star\star}, p_L^{T^{\epsilon_1}_1}),(p_H^{\star\star}, p_L^{T^{\epsilon_1}_1})\Big) -1 \right]  - l_r\big\|\mb{r}^{T^{\epsilon_1}_1} - \mb{p}^{T^{\epsilon_1}_1}\big\|_2 \\[3pt]
        & - \dfrac{1}{4}b_H(b_H+c_H)\epsilon_1\\[10pt]
        =\ \ & \mca{G}\big((p_H^{\star\star}, p_L^{T^{\epsilon_1}_1})\big)- l_r\big\|\mb{r}^{T^{\epsilon_1}_1} - \mb{p}^{T^{\epsilon_1}_1}\big\|_2- \dfrac{1}{4}b_H(b_H+c_H)\epsilon_1,
    \end{aligned}
    \end{equation}
    where $(\Delta_1)$ holds because of the mean value theorem and the fact that $\left\|\nabla_{\mb{r}} G_L(\mb{p},\mb{r})\right\|_2 \leq l_r$ for $\mb{p}, \mb{r} \in \mca{P}^2$ by Lemma \ref{lemma: constant}, inequality $(\Delta_2)$ applies the mean value theorem again to the demand function. In $(\Delta_3)$, we use the assumption that $\big|p_H^{\star\star}-p_H^{T^{\epsilon_1}_1} \big| <(b_H+c_H)\epsilon_1$ and apply a similar argument as Eq. \eqref{eq: partial_G_p} to derive that 
    \begin{equation}
    \left|\dfrac{\partial d_L(\mb{p},\mb{p})}{\partial p_H}\right| = \left|b_H\cdot d_L(\mb{p},\mb{p})\cdot d_H(\mb{p},\mb{p})\right| \leq \dfrac{1}{4}b_H, \quad \forall \mb{p} \in \mca{P}^2.
    \end{equation}

    Since $\varepsilon\big((p_H^{\star\star}, p_L^{T^{\epsilon_1}_1})\big)>\epsilon_0-\epsilon_1$, by Lemma \ref{lemma: gradient_based}, there exists a constant $M_{\epsilon_0-\epsilon_1} >0$, such that $\mca{G}\big((p_H^{\star\star}, p_L^{T^{\epsilon_1}_1})\big)\geq M_{\epsilon_0-\epsilon_1}$.
    Meanwhile, we can choose $\epsilon_1$ sufficiently smaller than $\epsilon_0$ to ensure that $ (1/2)M_{\epsilon_0 - \epsilon_1} - (1/4)b_H (b_H + c_H) \epsilon_1 > 0$.
    Furthermore, recall that $T_{1}^{\epsilon_1}$ can be arbitrarily chosen from $\mca{T}_{\epsilon_1}$, where $|\mca{T}_{\epsilon_1}|=\infty$.
    Thus, we can always find a sufficiently large $T^{\epsilon_1}_1\in \mca{T}_{\epsilon_1}$ to guarantee that $l_r \big\|\mb{r}^{T^{\epsilon_1}_1} - \mb{p}^{T^{\epsilon_1}_1}\big\|_2 \leq (1/2) M_{\epsilon_0 - \epsilon_1} - (1/4)b_H (b_H + c_H) \epsilon_1$ by Lemma \ref{lemma: convergence_of_price_paths}.
    Back to Eq. \eqref{eq: sign_p_G_L}, it follows that 
    \begin{equation}\label{eq: 1/2M_epsilon}
    \begin{aligned}
        &\quad\operatorname{sign}\big(p_L^{\star\star} - p_L^{T^{\epsilon_1}_1}\big) \cdot G_L\big(\mb{p}^{T^{\epsilon_1}_1},\mb{r}^{T^{\epsilon_1}_1}\big)\\[3pt] 
        &\geq \mca{G}\big((p_H^{\star\star}, p_L^{T^{\epsilon_1}_1})\big)- l_r\big\|\mb{r}^{T^{\epsilon_1}_1} - \mb{p}^{T^{\epsilon_1}_1}\big\|_2- \dfrac{1}{4}b_H(b_H+c_H)\epsilon_1\\[3pt]
        &\geq M_{\epsilon_0-\epsilon_1} - \left(\dfrac{1}{2} M_{\epsilon_0-\epsilon_1} -\dfrac{1}{4}b_H(b_H+c_H)\epsilon_1\right)-\dfrac{1}{4}b_H(b_H+c_H)\epsilon_1\\[3pt]
        &= \dfrac{1}{2} M_{\epsilon_0-\epsilon_1}.
    \end{aligned}
    \end{equation}
    By substituting Eq. \eqref{eq: 1/2M_epsilon} into Eq. \eqref{eq: L_improve_0}, we further derive that
    \begin{equation}\label{eq: L_improve}
    \begin{aligned}
        \dfrac{\big|p_L^{\star\star} - p_L^{T^{\epsilon_1}_1 +1}\big|}{b_L+c_L} 
        &\leq \operatorname{sign}\big(p_L^{\star\star} - p_L^{T^{\epsilon_1}_1} \big)\cdot 
        \dfrac{p_L^{\star\star} - p_L^{T^{\epsilon_1}_1}}{b_L + c_L} - \operatorname{sign}\big(p_L^{\star\star} - p_L^{T^{\epsilon_1}_1} \big)\cdot 
        \left(\eta^{T^{\epsilon_1}_1} G_L\big(\mb{p}^{T^{\epsilon_1}_1},\mb{r}^{T^{\epsilon_1}_1}\big)\right)\\[3pt]
        &\leq \dfrac{\big|p_L^{\star\star} - p_L^{T^{\epsilon_1}_1}\big|}{b_L+c_L} -  \dfrac{1}{2} \eta^{T^{\epsilon_1}_1}  M_{\epsilon_0-\epsilon_1}.
    \end{aligned}
    \end{equation}
    This indicates that the price update from $p_L^{T^{\epsilon_1}_1}$ to $p_L^{T^{\epsilon_1}_1 +1}$ is towards the SNE price $p_L^{\star\star}$ when $\big|p_H^{\star\star}-p_H^{T^{\epsilon_1}_1} \big| <(b_H+c_H)\epsilon_1$.

    Now, we use a strong induction to show that {\it $\big|p_H^{\star\star}-p_H^{t} \big|/(b_H+c_H) <\epsilon_1$ holds for all $t\geq T^{\epsilon_1}_1$ provided that $\epsilon_1$ is sufficiently small and $T^{\epsilon_1}_1$ is sufficiently large.}
    Suppose there exists $T^{\epsilon_1}_2 \geq T^{\epsilon_1}_1$ such that $\big|p_H^{\star\star}-p_H^{t} \big|/(b_H+c_H) <\epsilon_1$ holds for all $t = T^{\epsilon_1}_1,T^{\epsilon_1}_1+1,\dots, T^{\epsilon_1}_2$,  we aim to show that $\big|p_H^{\star\star}-p_H^{T^{\epsilon_1}_2+1} \big|/(b_H+c_H) <\epsilon_1$.
    Following the same derivation from Eq. \eqref{eq: L_improve_0} to Eq. \eqref{eq: L_improve}, it holds that
    \begin{equation}\label{eq: L_improve_general}
    \dfrac{\big|p_L^{\star\star} - p_L^{t +1}\big|}{b_L+c_L} \leq \dfrac{\big|p_L^{\star\star} - p_L^{t}\big|}{b_L+c_L} -  \dfrac{1}{2} \eta^{t}  M_{\epsilon_0-\epsilon_1},\quad t =  T^{\epsilon_1}_1,T^{\epsilon_1}_1+1,\dots, T^{\epsilon_1}_2.
    \end{equation}
    By telescoping the above inequality from $T^{\epsilon_1}_2-1$ down to $T^{\epsilon_1}_1$, we have that
    \begin{equation}\label{eq: p_T2_less_p_T1}
    \begin{aligned}
        \dfrac{\big|p_L^{\star\star} - p_L^{T^{\epsilon_1}_2}\big|}{b_L+c_L} \leq \dfrac{\big|p_L^{\star\star} - p_L^{T^{\epsilon_1}_1}\big|}{b_L+c_L} -  \dfrac{M_{\epsilon_0-\epsilon_1}}{2}\sum_{t = T^{\epsilon_1}_1}^{T^{\epsilon_1}_2-1}\eta^{t} \leq \dfrac{\big|p_L^{\star\star} - p_L^{T^{\epsilon_1}_1}\big|}{b_L+c_L}.
    \end{aligned}
    \end{equation}
Additionally, we note that $\big|p_H^{\star\star}-p_H^{t} \big|/(b_H+c_H) <\epsilon_1$ and $\mb{p}^t\not\in N^1_{\epsilon_0}$ together imply that $\big|p_L^{\star\star} - p_L^{t}\big|\big/(b_L + c_L) > (\epsilon_0-\epsilon_1)$ for all $t = T^{\epsilon_1}_1,T^{\epsilon_1}_1+1,\dots, T^{\epsilon_1}_2$.
As the step-size is sufficiently small, we conclude that the price path $\{p_L^t\}_{t = T^{\epsilon_1}_1}^{T^{\epsilon_1}_2}$ lies on the same side of the SNE price $p_L^{\star\star}$, and thus $\operatorname{sign}\big(p_L^{\star\star} - p_L^{T^{\epsilon_1}_2}\big) = \operatorname{sign}\big(p_L^{\star\star} - p_L^{T^{\epsilon_1}_1}\big)$.
Below, we discuss two cases based on the value of $\big|p_H^{\star\star} - p_H^{T^{\epsilon_1}_2}\big|$:
     
\textbf{Case 1:} $\big|p_H^{\star\star} - p_H^{T^{\epsilon_1}_2}\big|/(b_H+c_H) < \epsilon_1 - \eta^{T^{\epsilon_1}_1} M_G$.

Following a similar derivation as Eq.\eqref{eq: L_improve_0}, we have that
    \begin{equation}
    \begin{aligned}
         \dfrac{\big|p_H^{\star\star} - p_H^{T^{\epsilon_1}_2 +1}\big|}{b_H+c_H} &\ \ \leq \operatorname{sign}\big(p_H^{\star\star} - p_H^{T^{\epsilon_1}_2}\big) \cdot \dfrac{p_H^{\star\star} - p_H^{T^{\epsilon_1}_2}}{b_H + c_H}  - \eta^{T^{\epsilon_1}_2} \operatorname{sign}\big(p_H^{\star\star} - p_H^{T^{\epsilon_1}_2}\big) \cdot G_H\big(\mb{p}^{T^{\epsilon_1}_2},\mb{r}^{T^{\epsilon_1}_2}\big).\\[3pt]
         &\ \ \leq \dfrac{\big|p_H^{\star\star} - 
         p_H^{T^{\epsilon_1}_2}\big|}{b_H+c_H} + \eta^{T^{\epsilon_1}_2} \Big|G_H\big(\mb{p}^{T^{\epsilon_1}_2},\mb{r}^{T^{\epsilon_1}_2}\big)\Big| \\[3pt]
         &\overset{(\Delta_1)}{<} \left(\epsilon_1 - \eta^{T^{\epsilon_1}_1} M_G \right)+ \eta^{T^{\epsilon_1}_2} M_G\\[3pt]
         &\overset{(\Delta_2)}{\leq} \epsilon_1,
    \end{aligned}
    \end{equation}
where $(\Delta_1)$ is from Lemma \ref{lemma: constant} that $\big|G_H(\mb{p},\mb{r})\big| \leq M_G$ for $\mb{p}, \mb{r} \in \mca{P}^2$, and $(\Delta_2)$ is because the sequence of step-sizes is non-increasing, hence $\eta^{T^{\epsilon_1}_1} \geq \eta^{T^{\epsilon_1}_2}$. 
        
    \textbf{Case 2:}  $\big|p_H^{\star\star} - p_H^{T^{\epsilon_1}_2}\big|/(b_H+c_H)\in \big[\epsilon_1 - \eta^{T^{\epsilon_1}_1} M_G, \epsilon_1\big)$. 
    
    It suffices to show that $\big|p_H^{\star\star}-p_H^{T^{\epsilon_1}_2+1}\big|
    \leq \big|p_H^{\star\star}-p_H^{T^{\epsilon_1}_2}\big|$.
    According to the observation below Eq. \eqref{eq: sign}, this is equivalent to showing that
    \begin{equation}
    \operatorname{sign}\big(p_H^{\star\star} - p_H^{T^{\epsilon_1}_2}\big)\cdot G_H\big(\mb{p}^{T^{\epsilon_1}_2},\mb{r}^{T^{\epsilon_1}_2}\big)\geq 0.
    \end{equation}
    
    We further discuss two scenarios depending on whether $\mb{p}^{T^{\epsilon_1}_2}\in N_1\cup N_3$ or $\mb{p}^{T^{\epsilon_1}_2} \in N_2\cup N_4$.
    \begin{enumerate}[leftmargin=*]
    \item Suppose that $\mb{p}^{T^{\epsilon_1}_2}\in N_1\cup N_3$.
    According to the definition of set $\mca{T}_{\epsilon_1}$ in Eq. \eqref{eq: T_epsilon_1_def}, we have that $\mb{p}^{T_1^{\epsilon_1}}\in N_1\cup N_3$ since $T_1^{\epsilon_1} \in \mca{T}_{\epsilon_1}$ is a jumping period.
    Together with the established fact that $\operatorname{sign} \big(p_L^{\star\star} - p_L^{T^{\epsilon_1}_2}\big) = \operatorname{sign} \big(p_L^{\star\star} - p_L^{T^{\epsilon_1}_1}\big)$ (see the argument below Eq. \eqref{eq: p_T2_less_p_T1}), we conclude that $\mb{p}^{T^{\epsilon_1}_2}$ and $\mb{p}^{T^{\epsilon_1}_1}$ are in the same quadrant, and $\operatorname{sign} \big(p_H^{\star\star} - p_H^{T^{\epsilon_1}_2}\big) = \operatorname{sign} \big(p_H^{\star\star} - p_H^{T^{\epsilon_1}_1}\big)$.
    Then, we can write that
     \begin{equation}\label{eq: caseN13}
     \begin{aligned}
        &\operatorname{sign}\big(p_H^{\star\star} - p_H^{T^{\epsilon_1}_2}\big)\cdot G_H\big(\mb{p}^{T^{\epsilon_1}_2},\mb{r}^{T^{\epsilon_1}_2}\big) \\[3pt]
        \geq \  &\operatorname{sign} \big(p_H^{\star\star} - p_H^{T^{\epsilon_1}_2}\big) \cdot G_H\big(\mb{p}^{T^{\epsilon_1}_2}, \mb{p}^{T^{\epsilon_1}_2}\big) - l_r\|\mb{r}^{T^{\epsilon_1}_2}-\mb{p}^{T^{\epsilon_1}_2}\|_2\\[3pt]
        \overset{(\Delta)}{\geq} &\operatorname{sign} \big(p_H^{\star\star} - p_H^{T^{\epsilon_1}_2}\big) \cdot G_H\big((p_H^{T^{\epsilon_1}_2},p_L^{T^{\epsilon_1}_1}), (p_H^{T^{\epsilon_1}_2},p_L^{T^{\epsilon_1}_1})\big) - l_r\|\mb{r}^{T^{\epsilon_1}_2}-\mb{p}^{T^{\epsilon_1}_2}\|_2\\[10pt]
        = \ &\operatorname{sign} \big(p_H^{\star\star} - p_H^{T^{\epsilon_1}_1}\big) \cdot G_H\big(\mb{p}^{T^{\epsilon_1}_1}, \mb{p}^{T^{\epsilon_1}_1}\big) - l_r\|\mb{r}^{T^{\epsilon_1}_2}-\mb{p}^{T^{\epsilon_1}_2}\|_2\\[3pt]
        \quad& +\operatorname{sign} \big(p_H^{\star\star} - p_H^{T^{\epsilon_1}_1}\big)\cdot \big(G_H\big((p_H^{T^{\epsilon_1}_2},p_L^{T^{\epsilon_1}_1}), (p_H^{T^{\epsilon_1}_2},p_L^{T^{\epsilon_1}_1})\big)- G_H\big(\mb{p}^{T^{\epsilon_1}_1}, \mb{p}^{T^{\epsilon_1}_1}\big)\big),
     \end{aligned}
     \end{equation}
    where we apply Lemma \ref{lemma: Ghat} in $(\Delta)$ by utilizing the relation $\big|p_L^{\star\star} - p_L^{T^{\epsilon_1}_2}\big| \leq \big|p_L^{\star\star} - p_L^{T^{\epsilon_1}_1}\big|$ proved in Eq. \eqref{eq: p_T2_less_p_T1}.
    Next, as $T^{\epsilon_1}_1$ is a jumping period, it follows from the price update rule in Eq. \eqref{eq: equivalent_update} that $\big|p_H^{\star\star} - p_H^{T^{\epsilon_1}_1}\big|/(b_H+c_H)\leq \big |\eta^{T^{\epsilon_1}_1} G_H\big(\mb{p}^{T^{\epsilon_1}_1}, \mb{p}^{T^{\epsilon_1}_1}\big)\big) \big|\leq \eta^{T^{\epsilon_1}_1}M_G$.
    Thus, when the chosen jumping period $T_1^{\epsilon_1}$ is sufficiently large so that the step-size $\eta^{T^{\epsilon_1}_1}$ is considerably smaller than $\epsilon_1$, we further derive that
\begin{equation}\label{eq: p_HT2_greater_p_HT1}
\big|p_H^{\star\star} - p_H^{T^{\epsilon_1}_2}\big|\geq (b_H+c_H)\cdot(\epsilon_1 - \eta^{T^{\epsilon_1}_1} M_G)\geq (b_H+c_H)\cdot\eta^{T^{\epsilon_1}_1} M_G\geq \big|p_H^{\star\star} - p_H^{T^{\epsilon_1}_1}\big|.
\end{equation}
With Eq. \eqref{eq: p_HT2_greater_p_HT1}, we use Lemma \ref{lemma: Ghat} again to upper-bound the last term on the right-hand side of Eq. \eqref{eq: caseN13}:
\begin{equation}\label{eq: sign_times_G_diff}
\begin{aligned}
&\quad\ \ \operatorname{sign} \big(p_H^{\star\star} - p_H^{T^{\epsilon_1}_1}\big)\cdot \big(G_H\big((p_H^{T^{\epsilon_1}_2},p_L^{T^{\epsilon_1}_1}), (p_H^{T^{\epsilon_1}_2},p_L^{T^{\epsilon_1}_1})\big)- G_H\big(\mb{p}^{T^{\epsilon_1}_1}, \mb{p}^{T^{\epsilon_1}_1}\big)\big)\\[10pt]
&\ =\operatorname{sign} \big(p_H^{\star\star} - p_H^{T^{\epsilon_1}_2}\big)\cdot G_H\big((p_H^{T^{\epsilon_1}_2},p_L^{T^{\epsilon_1}_1}), (p_H^{T^{\epsilon_1}_2},p_L^{T^{\epsilon_1}_1})\big) - \operatorname{sign} \big(p_H^{\star\star}-p_H^{T^{\epsilon_1}_1}\big)\cdot G_H\big(\mb{p}^{T^{\epsilon_1}_1}, \mb{p}^{T^{\epsilon_1}_1}\big)\\[10pt]
&\overset{(\Delta_1)}{=} \Bigg|\left(\dfrac{1}{(b_H+c_H)p_H^{T^{\epsilon_1}_2}} + d_H\big((p_H^{T^{\epsilon_1}_2},p_L^{T^{\epsilon_1}_1}), (p_H^{T^{\epsilon_1}_2},p_L^{T^{\epsilon_1}_1})\big)-1\right) \\[3pt]
&\qquad-\left( \dfrac{1}{(b_H+c_H)p_H^{T^{\epsilon_1}_1}}+d_H\big(\mb{p}^{T^{\epsilon_1}_1}, \mb{p}^{T^{\epsilon_1}_1}\big) -1 \right) \Bigg|\\[10pt]
&\ = \Bigg|\underbrace{\dfrac{1}{b_H+c_H}\left(\dfrac{1}{p_H^{T^{\epsilon_1}_2}}-\dfrac{1}{p_H^{T^{\epsilon_1}_1}}\right)}_{\operatorname{diff}_1} + \underbrace{\left[d_H\big((p_H^{T^{\epsilon_1}_2},p_L^{T^{\epsilon_1}_1}), (p_H^{T^{\epsilon_1}_2},p_L^{T^{\epsilon_1}_1})\big)- d_H\big(\mb{p}^{T^{\epsilon_1}_1}, \mb{p}^{T^{\epsilon_1}_1}\big) \right]}_{\operatorname{diff}_2} \Bigg|\\[3pt]
&\overset{(\Delta_2)}{\geq} \dfrac{1}{b_H+c_H}\left|\dfrac{1}{p_H^{T^{\epsilon_1}_2}}-\dfrac{1}{p_H^{T^{\epsilon_1}_1}}\right|,
\end{aligned}
\end{equation}
where $(\Delta_1)$ holds since the difference in the previous step is non-negative by Lemma \ref{lemma: Ghat}.
Furthermore, it is straightforward to observe that the terms $\operatorname{diff}_1$ and $\operatorname{diff}_2$ have the same sign, which results in the final inequality $(\Delta_2)$.
We substitute Eq. \eqref{eq: sign_times_G_diff} back into Eq. \eqref{eq: caseN13}:
\begin{equation}
\begin{aligned}
     &\operatorname{sign}\big(p_H^{\star\star} - p_H^{T^{\epsilon_1}_2}\big)\cdot G_H\big(\mb{p}^{T^{\epsilon_1}_2},\mb{r}^{T^{\epsilon_1}_2}\big)\\[3pt]
        \geq\ \ & \operatorname{sign} \big(p_H^{\star\star} - p_H^{T^{\epsilon_1}_1}\big) \cdot G_H\big(\mb{p}^{T^{\epsilon_1}_1}, \mb{p}^{T^{\epsilon_1}_1}\big) - l_r\|\mb{r}^{T^{\epsilon_1}_2}-\mb{p}^{T^{\epsilon_1}_2}\|_2+\dfrac{1}{b_H+c_H}\left|\dfrac{1}{p_H^{T^{\epsilon_1}_2}}-\dfrac{1}{p_H^{T^{\epsilon_1}_1}}\right|\\[10pt]
        \geq\ \ &\operatorname{sign} \big(p_H^{\star\star} - p_H^{T^{\epsilon_1}_1}\big) \cdot G_H\big(\mb{p}^{T^{\epsilon_1}_1}, \mb{r}^{T^{\epsilon_1}_1}\big) - l_r\big(\|\mb{r}^{T^{\epsilon_1}_2}-\mb{p}^{T^{\epsilon_1}_2}\|_2+\|\mb{r}^{T^{\epsilon_1}_1}-\mb{p}^{T^{\epsilon_1}_1}\|_2\big)\\[3pt]
        \quad&+\dfrac{1}{b_H+c_H}\left|\dfrac{1}{p_H^{T^{\epsilon_1}_2}}-\dfrac{1}{p_H^{T^{\epsilon_1}_1}}\right|\\[10pt]
        \overset{(\Delta_1)}{\geq}& - l_r\big(\|\mb{r}^{T^{\epsilon_1}_2}-\mb{p}^{T^{\epsilon_1}_2}\|_2+\|\mb{r}^{T^{\epsilon_1}_1}-\mb{p}^{T^{\epsilon_1}_1}\|_2\big) +\dfrac{\big|p_H^{T^{\epsilon_1}_2}-p_H^{T^{\epsilon_1}_1}\big|}{(b_H+c_H)\cdot \big(p_H^{T^{\epsilon_1}_2}\cdot p_H^{T^{\epsilon_1}_1}\big)}\\[3pt]
        \overset{(\Delta_2)}{\geq}& - l_r\big(\|\mb{r}^{T^{\epsilon_1}_2}-\mb{p}^{T^{\epsilon_1}_2}\|_2+\|\mb{r}^{T^{\epsilon_1}_1}-\mb{p}^{T^{\epsilon_1}_1}\|_2\big) +\dfrac{\big|\big(\epsilon_1 - \eta^{T^{\epsilon_1}_1} M_G\big)-\eta^{T^{\epsilon_1}_1} M_G\big|}{(b_H+c_H)\cdot \overline{p}^2}.
\end{aligned}
\end{equation}
In inequality $(\Delta_1)$, we use the fact that $\operatorname{sign} \big(p_H^{\star\star} - p_H^{T^{\epsilon_1}_1}\big) \cdot G_H\big(\mb{p}^{T^{\epsilon_1}_1}, \mb{r}^{T^{\epsilon_1}_1}\big)> 0$ since $T_1^{\epsilon_1}$ is a jumping period in $\mca{T}_{\epsilon_1}$.
Then, inequality $(\Delta_2)$ is implied by the properties that $\operatorname{sign} \big(p_H^{\star\star} - p_H^{T^{\epsilon_1}_2}\big) = \operatorname{sign} \big(p_H^{\star\star} - p_H^{T^{\epsilon_1}_1}\big)$, $\big|p_H^{\star\star} - p_H^{T^{\epsilon_1}_2}\big|/(b_H+c_H)\geq \epsilon_1 - \eta^{T^{\epsilon_1}_1} M_G$, and $\big|p_H^{\star\star} - p_H^{T^{\epsilon_1}_1}\big|/(b_H+c_H)\leq \eta^{T^{\epsilon_1}_1} M_G$.
Since both the step-size $\eta^t$ and the difference $\|\mb{r}^t-\mb{p}^t\|_2$ decrease to $0$ as $t\rightarrow \infty$, by choosing a sufficiently large jumping period $T_1^{\epsilon_1}$ from $ \mca{T}_{\epsilon_1}$, the right-hand side of Eq. \eqref{eq: caseN13} is guaranteed to be non-negative, i.e., $\operatorname{sign}\big(p_H^{\star\star} - p_H^{T^{\epsilon_1}_2}\big)\cdot G_H\big(\mb{p}^{T^{\epsilon_1}_2},\mb{r}^{T^{\epsilon_1}_2}\big)\geq 0$.
This completes the proof of scenario 1.

    \item Suppose $\mb{p}^{T^{\epsilon_1}_2}\in N_2\cup N_4$. 
    We deduce that
    \begin{equation}
    \begin{aligned}
          &\operatorname{sign}\big(p_H^{\star\star} - p_H^{T^{\epsilon_1}_2}\big)\cdot G_H\big(\mb{p}^{T^{\epsilon_1}_2},\mb{r}^{T^{\epsilon_1}_2}\big) \\[3pt]
          \geq\ \  &\operatorname{sign} \big(p_H^{\star\star} - p_H^{T^{\epsilon_1}_2}\big) \cdot G_H\big(\mb{p}^{T^{\epsilon_1}_2}, \mb{p}^{T^{\epsilon_1}_2}\big) - l_r\|\mb{r}^{T^{\epsilon_1}_2}-\mb{p}^{T^{\epsilon_1}_2}\|_2\\[3pt]
          \overset{(\Delta_1)}{\geq}& \operatorname{sign} \big(p_H^{\star\star} - p_H^{T^{\epsilon_1}_2}\big) \cdot G_H\big((p_H^{T^{\epsilon_1}_2} ,p_L^{\star\star}),(p_H^{T^{\epsilon_1}_2} ,p_L^{\star\star})\big) - l_r\|\mb{r}^{T^{\epsilon_1}_2}-\mb{p}^{T^{\epsilon_1}_2}\|_2\\[3pt]
          \overset{(\Delta_2)}{=}&  \mca{G}\big((p_H^{T^{\epsilon_1}_2} ,p_L^{\star\star})\big) - l_r\|\mb{r}^{T^{\epsilon_1}_2}-\mb{p}^{T^{\epsilon_1}_2}\|_2\\[3pt]
          \overset{(\Delta_3)}{\geq}& M_{\widehat{\epsilon}_1} - l_r\|\mb{r}^{T^{\epsilon_1}_2}-\mb{p}^{T^{\epsilon_1}_2}\|_2, \quad \textit{where } \widehat{\epsilon}_1:= \epsilon_1 - \eta^{T^{\epsilon_1}_2}M_G.
    \end{aligned}
    \end{equation}
The inequality $(\Delta_1)$ follows from Lemma \ref{lemma: Ghat}, where $\operatorname{sign}\big(p_H^{\star\star} - p_H\big)\cdot G_H\big(\mb{p},\mb{p}\big)$ is deceasing when $|p_L^{\star\star}-p_L|$ decreases .
The equality $(\Delta_2)$ is due to the definition of $\mca{G}(\mb{p})$ in Eq. \eqref{def: mca_G} and the fact that $\operatorname{sign}(p_L^{\star\star} - p_L^{\star\star}) = 0$. Finally, in the last line $(\Delta_3)$, we leverage the initial assumption of \textbf{Case 2}, which implies that $\varepsilon \big((p_H^{T^{\epsilon_1}_2},  p_L^{\star\star})\big) = \big|p_H^{\star\star} - p_H^{T^{\epsilon_1}_2}\big|/(b_H+c_H)\in \big[\epsilon_1 - \eta^{T^{\epsilon_1}_1} M_G, \epsilon_1\big)$. Then, by Lemma \ref{lemma: gradient_based}, there must exist $M_{\widehat{\epsilon}_1} > 0$ with $\widehat{\epsilon}_1:= \epsilon_1 - \eta^{T^{\epsilon_1}_2}M_G$ such that $\mca{G}\big((p_H^{T^{\epsilon_1}_2} ,p_L^{\star\star})\big) \geq M_{\widehat{\epsilon}_1}$.
As $|\mca{T}_{\epsilon_1}|=\infty$, we can pick a sufficiently large $T^{\epsilon_1}_1\in \mca{T}_{\epsilon_1}$ to guarantee that $l_r \big\|\mb{r}^{t} - \mb{p}^{t}\big\|_2 \leq M_{\widehat{\epsilon}_1}$ for $t \geq T^{\epsilon_1}_1$ by Lemma \ref{lemma: convergence_of_price_paths}.
Consequently, we obtain the desired conclusion that $\operatorname{sign}\big(p_H^{\star\star} - p_H^{T^{\epsilon_1}_2}\big)\cdot G_H\big(\mb{p}^{T^{\epsilon_1}_2},\mb{r}^{T^{\epsilon_1}_2}\big) > 0$ when $\mb{p}^{T^{\epsilon_1}_2}\in N_2\cup N_4$, which completes the proof of scenario 2.
\end{enumerate}
Combining \textbf{Case 1} and \textbf{Case 2}, we have shown that $\big|p_H^{\star\star}-p_H^{T^{\epsilon_1}_2+1} \big|/(b_H+c_H) <\epsilon_1$, which completes the strong induction, i.e., $\big|p_H^{\star\star}-p_H^{t} \big|/(b_H+c_H) <\epsilon_1$ for all $t\geq T^{\epsilon_1}_1$.
Under the assumption that $\left\{\mb{p}^t\right\}_{t\geq 0}$ never visits $N_{\epsilon_0}^1$ if $t \geq T^{\epsilon_0}$, we accordingly have that $\big|p_L^{\star\star} - p_L^{t}\big|\big/(b_L + c_L) > (\epsilon_0-\epsilon_1)$ for all $t\geq T_1^{\epsilon_1}$.
Therefore, following the derivations from Eq. \eqref{eq: L_improve_0} to Eq. \eqref{eq: L_improve}, we deduce that Eq. \eqref{eq: L_improve_general} holds for all $t\geq T_1^{\epsilon_1}$.
Using a telescoping sum, it holds for any period $T > T_1^{\epsilon_1}$ that
\begin{equation}
\dfrac{\big|p_L^{\star\star} - p_L^{T}\big|}{b_L+c_L} \leq \dfrac{\big|p_L^{\star\star} - p_L^{T^{\epsilon_1}_1}\big|}{b_L+c_L} -  \dfrac{M_{\epsilon_0-\epsilon_1}}{2}\sum_{t = T^{\epsilon_1}_1}^{T-1}\eta^{t}.
\end{equation}
Since $M_{\epsilon_0-\epsilon_1}>0$ is a constant and $\lim_{T\rightarrow \infty}\sum_{t = T^{\epsilon_1}_1}^{T-1}\eta^{t} = \infty$, we arrive at the contradiction that $\big|p_L^{\star\star} - p_L^{T}\big|\big/(b_L+c_L)\rightarrow -\infty$ as $T\rightarrow \infty$.
Consequently, our initial assumption is incorrect and the price path $\left\{\mb{p}^t\right\}_{t\geq 0}$ would visit the $\ell_1$-neighborhood $N^1_{\epsilon_0}$ infinitely many times for any ${\epsilon_0} >0$, which completes the proof of \textbf{Part 1}.

\subsection{Proof of Part 2}\label{subsec: app_part2}
    In particular, we show that when $\mb{p}^t\in N_{\epsilon_0}^2$ for some sufficiently large $t$, then it also holds $\mb{p}^{t+1}\in N_{\epsilon_0}^2$.
    The value of $\epsilon_0$ will be specified later in the proof.

    By the update rule of Algorithm \ref{alg: OPGA}, it holds that
    \begin{equation}\label{eq: local_first}
    \begin{aligned}
    \big|p_i^{\star\star}-p_i^{t+1}\big|^2 &=\big|p_i^{\star\star}-\operatorname{Proj}_\mathcal{P} \left(p_i^t + \eta^t D_i^t \right)\big|^2\\[3pt]
    &\leq \big|p_i^{\star\star}- \left(p_i^t + \eta^t D_i^t \right)\big|^2\\[3pt]
    &= \big|\big(p_i^{\star\star}- p_i^t\big) - \eta^t (b_i+c_i)\cdot G_i(\mb{p}^t,\mb{r}^t) \big|^2\\[3pt]
    &= \big|p_i^{\star\star}- p_i^t\big|^2 - 2\eta^t (b_i+c_i)\cdot G_i(\mb{p}^t,\mb{r}^t)\cdot \big( p_i^{\star\star}- p_i^t\big) + \big(\eta^t (b_i+c_i)\cdot G_i(\mb{p}^t,\mb{r}^t)\big)^2\\[10pt]
    &=\big|p_i^{\star\star}- p_i^t\big|^2 - 2\eta^t (b_i+c_i)\cdot G_i(\mb{p}^t,\mb{p}^t)\cdot \big( p_i^{\star\star}- p_i^t\big) + \big(\eta^t (b_i+c_i)\cdot G_i(\mb{p}^t,\mb{r}^t)\big)^2\\[3pt]
    &\quad + 2\eta^t (b_i+c_i)\cdot\big(G_i(\mb{p}^t,\mb{p}^t)-G_i(\mb{p}^t,\mb{r}^t) \big)\cdot \big( p_i^{\star\star}- p_i^t\big)\\[10pt]
    &\leq \big|p_i^{\star\star}- p_i^t\big|^2 - 2\eta^t (b_i+c_i)\cdot G_i(\mb{p}^t,\mb{p}^t)\cdot \big( p_i^{\star\star}- p_i^t\big) + \big(\eta^t M_G(b_i+c_i)\big)^2\\[3pt]
    &\quad+2\eta^t (b_i+c_i)\cdot\big| p_i^{\star\star}- p_i^t\big|\cdot l_r \|\mb{r}^t-\mb{p}^t\|_2,
    \end{aligned}
    \end{equation}
    where we use $|G_i(\mb{p},\mb{r})|\leq M_G$ and the mean value theorem in the last inequality (see Lemma \ref{lemma: constant}).
    Let $\mca{H}(\mb{p})$ be a function defined as
    \begin{equation}\label{eq: define_H}
    \mca{H}(\mb{p}):= (b_H+c_H)\cdot G_H(\mb{p},\mb{p})\cdot (p_H^{\star \star} -p_H) + (b_L+c_L)\cdot G_L(\mb{p},\mb{p})\cdot (p_L^{\star \star} -p_L).
    \end{equation}
    Then, by summing Eq. \eqref{eq: local_first} over both products $i\in\{H,L\}$, we have that
    \begin{equation}\label{eq: local_decrease}
    \begin{aligned}
         \quad\big\|\mb{p}^{\star\star}-\mb{p}^{t+1}\big\|_2^2
         &\leq  \big\|\mb{p}^{\star\star}-\mb{p}^{t}\big\|_2^2-2\eta^t \sum_{i\in \{H,L\}} (b_i+c_i)\cdot G_i(\mb{p}^t,\mb{p}^t)\cdot \big( p_i^{\star\star}- p_i^t\big) \\[3pt]
         &\quad\ + (\eta^t M_G)^2\sum_{i\in \{H,L\}}(b_i+c_i)^2 + 2\eta^t l_r \|\mb{r}^t-\mb{p}^t\|_2\sum_{i\in \{H,L\}}(b_i+c_i)\cdot\big| p_i^{\star\star}- p_i^t\big| \\[3pt]
         &= \big\|\mb{p}^{\star\star}-\mb{p}^{t}\big\|_2^2 - 2\eta^t\mca{H}(\mb{p}^t)+(\eta^t M_G)^2\sum_{i\in \{H,L\}}(b_i+c_i)^2\\[3pt]
         &\quad\ +2\eta^t l_r \|\mb{r}^t-\mb{p}^t\|_2\sum_{i\in \{H,L\}}(b_i+c_i)\cdot\big| p_i^{\star\star}- p_i^t\big|\\[3pt]
         &\leq \big\|\mb{p}^{\star\star}-\mb{p}^{t}\big\|_2^2 -\eta^t\bigg(2\mca{H}(\mb{p}^t)-\eta^tC_1- C_2\|\mb{r}^t-\mb{p}^t\|_2\bigg)
    \end{aligned}
    \end{equation}
    where we denote $C_1:= (M_G)^2\cdot\sum_{i\in \{H,L\}}(b_i+c_i)^2 $ and $C_2 = 2l_r|\overline{p}-\underline{p}|\cdot\sum_{i\in \{H,L\}}(b_i+c_i)$.
    
    By Lemma \ref{lemma: hessian}, there exist $\gamma >0$ and a open set $U_\gamma \ni \mb{p}^{\star \star}$ such that $\mca{H}(\mb{p}) \geq \gamma \cdot \|\mb{p}-\mb{p}^{\star \star}\|_2^2,\ \forall \mb{p}\in U_\gamma$.
    Consider $\epsilon_0>0$ such that the $\ell_2$-neighborhood $N^2_{\epsilon_0}= \big\{\mb{p}\in \mca{P}^2\mid \|\mb{p}-\mb{p}^{\star \star}\|_2<{\epsilon_0} \big\}\subset U_\gamma$.
    Furthermore, let $T_\gamma$ be some period such that 
\begin{equation}\label{eq: tgamma_condition}
\eta^t\left(\eta^tC_1+ \sqrt{2}C_2(\overline{p}-\underline{p})\right)\leq \dfrac{\epsilon_0^2}{4}, \text{ and } \eta^tC_1+ C_2\|\mb{r}^t-\mb{p}^t\|_2\leq \dfrac{\gamma(\epsilon_0)^2}{2},\quad \forall t>T_\gamma.
\end{equation}
The existence of such a number $T_\gamma$ follows from the fact that $\lim_{t\rightarrow \infty} \eta^t =0$ and $\lim_{t\rightarrow \infty} \|\mb{r}^t-\mb{p}^t\|_2 =0$ (see Lemma \ref{lemma: convergence_of_price_paths}).
Below, we discuss two cases depending on the location of $\mb{p}^t$ in $N^2_{\epsilon_0}$.
    
{ \textbf{Case 1:} $\mb{p}^t\in N^2_{{\epsilon_0}/{2}} \subset N^2_{\epsilon_0}$, i.e., $\big\|\mb{p}^{\star\star}-\mb{p}^{t}\big\|_2<\epsilon_0/2$.} 

Since $\mca{H}(\mb{p}) \geq 0$, $\forall \mb{p}\in U_\gamma$ by Lemma \ref{lemma: hessian}, it follows from Eq. \eqref{eq: local_decrease} and Eq. \eqref{eq: tgamma_condition} that
    \begin{equation}\label{eq: part2_case1}
    \begin{aligned}
    \|\mb{p}^{\star\star}-\mb{p}^{t+1}\big\|_2^2 
    &\ \leq \big\|\mb{p}^{\star\star}-\mb{p}^{t}\big\|_2^2 +\eta^t\bigg(\eta^tC_1+ C_2\|\mb{r}^t-\mb{p}^t\|_2\bigg)\\[3pt]
    &\overset{(\Delta)}{\leq} \dfrac{(\epsilon_0)^2}{4} + \eta^t\left(\eta^tC_1+ \sqrt{2}C_2(\overline{p}-\underline{p})\right)\\[3pt]
    &\ \leq \dfrac{(\epsilon_0)^2}{4}+\dfrac{(\epsilon_0)^2}{4}\\[3pt]
    &\ <(\epsilon_0)^2,
    \end{aligned}
    \end{equation}
    where inequality $(\Delta)$ is due to $\|\mb{r}^t-\mb{p}^t\|_2\leq \sqrt{2}(\overline{p}-\underline{p})$.
    Eq. \eqref{eq: part2_case1} implies that $\mb{p}^{t+1}\in N^2_{\epsilon_0}$.

    { \textbf{Case 2:} $\mb{p}^t\in N^2_{\epsilon_0}\backslash N^2_{{\epsilon_0}/{2}}$, i.e., $\big\|\mb{p}^{\star\star}-\mb{p}^{t}\big\|_2\in [\epsilon_0/2,\epsilon_0)$.}
    
    By Lemma \ref{lemma: hessian}, we have that $\mca{H}(\mb{p}^t) \geq \gamma \big\|\mb{p}^{\star\star}-\mb{p}^{t}\big\|_2^2 \geq \gamma(\epsilon_0)^2/4$.
    Thus, again by Eq. \eqref{eq: local_decrease} and Eq. \eqref{eq: tgamma_condition}, we have that
    \begin{equation}
    \begin{aligned}
    \|\mb{p}^{\star\star}-\mb{p}^{t+1}\big\|_2^2&\leq \big\|\mb{p}^{\star\star}-\mb{p}^{t}\big\|_2^2 -\eta^t\bigg(2\mca{H}(\mb{p}^t)-\eta^tC_1- C_2\|\mb{r}^t-\mb{p}^t\|_2\bigg)\\[3pt]
    &\leq \big\|\mb{p}^{\star\star}-\mb{p}^{t}\big\|_2^2 -\eta^t\bigg(\dfrac{\gamma(\epsilon_0)^2}{2}-\eta^tC_1- C_2\|\mb{r}^t-\mb{p}^t\|_2\bigg)\\[3pt]
    &\leq  \big\|\mb{p}^{\star\star}-\mb{p}^{t}\big\|_2^2\\[3pt]
    &\leq (\epsilon_0)^2,
    \end{aligned}
    \end{equation}
    which implies $\mb{p}^{t+1}\in N^2_{\epsilon_0}$. Therefore, we conclude by induction that the price path will stay in the $\ell_2$-neighborhood $N_{\epsilon_0}^2$. This completes the proof of Theorem \ref{thm: gradient_convergence}.
\end{proof}

\section{Proof of Theorem \ref{thm: convergence_rate}}\label{sec: app_prove4.2}
\begin{proof}
Recall the function $\mca{H}(\mb{p})$ defined in Eq. \eqref{eq: define_H}. By Lemma \ref{lemma: hessian}, there exist $\gamma >0$ and a open set $U_\gamma \ni \mb{p}^{\star \star}$ such that $\mca{H}(\mb{p}) \geq \gamma \cdot \|\mb{p}-\mb{p}^{\star \star}\|_2^2$, $\forall \mb{p}\in U_\gamma$.
Consider $\epsilon_0>0$ such that the $\ell_2$-neighborhood $N^2_{\epsilon_0}= \big\{\mb{p}\in \mca{P}^2\mid \|\mb{p}-\mb{p}^{\star \star}\|_2<{\epsilon_0} \big\}\subset U_\gamma$.
Below, we first show that the price path $\{\mb{p}^t\}_{t\geq 0}$ enjoys the sublinear convergence rate in $N^2_{\epsilon_0}$ when $t$ is greater than some constant $T_{\epsilon_0}$.
Then, we will show that this convergence rate also holds for any $t\leq T_{\epsilon_0}$.

By \textbf{Part 2} in the proof of Theorem \ref{thm: gradient_convergence}, there exists $T_{\epsilon_0} >0$ such that $\mb{p}^t\in N^2_{\epsilon_0}$ for every $t\geq T_{\epsilon_0}$. 
Following a similar argument as Eq. \eqref{eq: local_first} and Eq. \eqref{eq: local_decrease}, we have that
\begin{equation}\label{eq: converge_rate}
\begin{aligned}
&\qquad\ \big\|\mb{p}^{\star\star}-\mb{p}^{t+1}\big\|_2^2 \\[3pt]
&\overset{(\Delta_1
)}{\leq} \big\|\mb{p}^{\star\star}-\mb{p}^{t}\big\|_2^2 - 
2\eta^t\mca{H}(\mb{p}^t) + C_1 (\eta^t)^2 + 2\eta^t l_r \|\mb{r}^t-\mb{p}^t\|_2 \sum_{i\in \{H,L\}}(b_i+c_i)\cdot\big| p_i^{\star\star}- p_i^t\big|\\[3pt]
&\overset{(\Delta_2
)}{\leq} 
\big\|\mb{p}^{\star\star}-\mb{p}^{t}\big\|_2^2 - 
2\eta^t \gamma \big\|\mb{p}^{\star\star}-\mb{p}^{t}\big\|_2^2 + C_1 (\eta^t)^2 + 2\eta^t l_r \|\mb{r}^t-\mb{p}^t\|_2 \cdot \hat{k} \big\|\mb{p}^{\star\star}-\mb{p}^{t}\big\|_2\\[3pt]
&\overset{(\Delta_3)}{\leq} 
\big\|\mb{p}^{\star\star}-\mb{p}^{t}\big\|_2^2 - 2 \eta^t \gamma \big\|\mb{p}^{\star\star}-\mb{p}^{t}\big\|_2^2 + C_1(\eta^t)^2 +
\eta^t l_r \hat{k} \left[\dfrac{\gamma}{l_r \hat{k} }\big\| \mb{p}^{\star\star} - \mb{p}^t \big\|_2^2 + \dfrac{l_r \hat{k} }{\gamma}\big\| \mb{r}^{t} - \mb{p}^t \big\|_2^2 \right]\\[3pt]
&\overset{(\Delta_4)}{\leq} \big\|\mb{p}^{\star\star}-\mb{p}^{t}\big\|_2^2 - \eta^t \gamma \big\|\mb{p}^{\star\star}-\mb{p}^t\big\|_2^2 + C_1(\eta^t)^2 + \eta^t k \big\|\mb{r}^t - \mb{p}^t \big\|_2^2.
\end{aligned}
\end{equation}
In step $(\Delta_1)$, $\ell_r$ is the Lipschitz constant defined in Eq. \eqref{eq: M_G_l_r} and the constant $C_1$ is defined in Eq. \eqref{eq: local_decrease}.
In step $(\Delta_2)$, we utilize Lemma \ref{lemma: hessian} and the following inequality 
\begin{equation}
\begin{aligned}
    \sum_{i\in \{H,L\}}(b_i+c_i) \big| p_i^{\star\star}- p_i^t\big| &\leq \max_{i\in\{H, L\}} \{b_i+c_i\} \big\| \mb{p}^{\star\star}- \mb{p}^t\big\|_1 \\[3pt]
    &\leq \sqrt{2} \max_{i \in \{H,L\}}\{b_i+c_i\}  \big\|\mb{p}^{\star\star}-\mb{p}^{t}\big\|_2 \\[3pt]
    &= \hat{k} \big\|\mb{p}^{\star\star}-\mb{p}^{t}\big\|_2,
\end{aligned}
\end{equation}
where we define $\hat{k}:= \sqrt{2} \max_{i\in\{H, L\}} \{b_i+c_i\}$.
Step $(\Delta_3)$ in Eq. \eqref{eq: converge_rate} follows from the inequality of arithmetic and geometric means, i.e., $2xy \leq Ax^2 + (1/A) y^2$ for any constant $A>0$. 
The value of constant $k$ in $(\Delta_4)$ is given by $k:=(l_r \hat{k})^2/\gamma = 2\big(l_r \max_{i \in \{H, L\}}\{b_i+c_i\}\big)^2/\gamma$. 

To upper-bound the right-hand side of Eq. \eqref{eq: converge_rate}, we first focus on the term $\big\|\mb{r}^t - \mb{p}^t\big\|_2^2$ and inductively show that 
\begin{equation}\label{eq: r-p-T_lambda}
    \big\|\mb{r}^t - \mb{p}^t\big\|_2^2 = \mca{O} \left(\dfrac{1}{t^2}\right),\quad \forall t\geq T_\lambda:= \dfrac{\sqrt{\lambda+1}}{\sqrt{\lambda+1}-\sqrt{2\lambda}},
\end{equation}
where $\lambda:= (1+\alpha^2)/2 < 1$.
We use the notation $\mb{D}^t := \big(D_H^t, D_L^t \big) = \big((b_H+c_H)G_H(\mb{p}^t,\mb{r}^t),(b_L+c_L)G_L(\mb{p}^t,\mb{r}^t) \big)$, where we recall that $D_i^t$ is the partial derivative specified in Eq. \eqref{eq: def_D_i} and function $G_i(\cdot,\cdot)$ is defined in Eq. \eqref{eq: G_i_def}.
Then, the term $\big\|\mb{r}^t - \mb{p}^t\big\|_2^2$ can be upper-bounded as follows
\begin{equation}\label{eq: r-p}
\begin{aligned}
    &\qquad  \big\|\mb{r}^t - \mb{p}^t\big\|_2^2 \\[3pt]
    &\ \ = \big\|\alpha \mb{r}^{t-1} + (1-\alpha) \mb{p}^{t-1} - \mb{p}^t\big\|_2^2\\[3pt]
    &\ \ = \big\|\alpha (\mb{r}^{t-1} -\mb{p}^{t-1}) + (\mb{p}^{t-1} - \mb{p}^t)\big\|_2^2\\[3pt]
    &\ \ =  \alpha^2 \big\|\mb{r}^{t-1} - \mb{p}^{t-1} \big\|_2^2+ \big\|\mb{p}^{t-1} - \mb{p}^t \big\|_2^2 + 2\alpha(\mb{r}^{t-1} -\mb{p}^{t-1})^\top(\mb{p}^{t-1} - \mb{p}^t)  \\[3pt]
    &\overset{(\Delta_1)}{\leq} \alpha^2 \big\|\mb{r}^{t-1} - \mb{p}^{t-1} \big\|_2^2 + \big\|\eta^{t-1} \mb{D}^{t-1} \big\|_2^2 + 2\alpha\big\|\mb{r}^{t-1} - \mb{p}^{t-1} \big\|_2\big\|\eta^{t-1} \mb{D}^{t-1} \big\|_2 \\[3pt]
    &\overset{(\Delta_2)}{\leq} \alpha^2 
    \big\|\mb{r}^{t-1} - \mb{p}^{t-1} \big\|_2^2 + \big\|\eta^{t-1} \mb{D}^{t-1} \big\|_2^2 + \dfrac{1-\alpha^2}{2} \big\| \mb{r}^{t-1} - \mb{p}^{t-1} \big\|_2^2 + \dfrac{2\alpha^2}{1-\alpha^2} \big\| \eta^{t-1} \mb{D}^{t-1} \big\|_2^2
    \\[3pt]
     &\ \ = \dfrac{1+\alpha^2}{2}\big\|\mb{r}^{t-1} - \mb{p}^{t-1} \big\|_2^2 + \dfrac{1+\alpha^2}{1-\alpha^2} \big\|\eta^{t-1} \mb{D}^{t-1}\big\|_2^2\\[3pt]
     &\ \ = \lambda\big\|\mb{r}^{t-1} - \mb{p}^{t-1} \big\|_2^2 + \dfrac{1+\alpha^2}{1-\alpha^2}\cdot \left(\sum_{i\in \{H,L\}} (b_i+c_i)^2\left[G_i(\mb{p}^{t-1},\mb{r}^{t-1})\right]^2\right)\cdot (\eta^{t-1})^2\\[3pt]
     &\overset{(\Delta_3)}{\leq}\lambda\big\|\mb{r}^{t-1} - \mb{p}^{t-1} \big\|_2^2 + \underbrace{\left( \dfrac{1+\alpha^2}{1-\alpha^2}\cdot (M_G)^2\sum_{i\in \{H,L\}}(b_i+c_i)^2\right)}_{\lambda_0}\cdot (\eta^{t-1})^2,
\end{aligned}
\end{equation}
where $(\Delta_1)$ holds due to the Cauchy-Schwarz inequality and the property of the projection operator (see Line 5 in Algorithm \ref{alg: OPGA}).
Step $(\Delta_2)$ is derived from the inequality of arithmetic and geometric means.
Lastly, step $(\Delta_3)$ applies the upper bound on function $G_i(\cdot,\cdot)$ in Lemma \ref{lemma: constant} and the definition of $C_1$ in Eq. \eqref{eq: local_decrease}. For the simplicity of notation, we denote the coefficient of $(\eta^{t-1})^2$ in the last line as $\lambda_0$.

Let the step-size be $\eta^t = d_\eta/(t+1)$ for $t\geq 0$, where $d_\eta$ is some constant that will be determined later.
Suppose that there exists a constant $d_{rp}$ such that
\begin{equation}\label{eq: drp_ineq}
    \big\|\mb{r}^{t-1} - \mb{p}^{t-1}\big\|_2^2 \leq \dfrac{d_{rp}}{(t-1)^2},
    \end{equation}
for some $t\geq T_\lambda+1$.
Then, together with Eq. \eqref{eq: r-p}, we have that
\begin{equation}\label{eq: r-p_1}
\begin{aligned}
    \big\|\mb{r}^{t} - \mb{p}^{t}\big\|_2^2 &\leq \lambda \big\|\mb{r}^{t-1} - \mb{p}^{t-1}\big\|_2^2 +  \lambda_0 (\eta^{t-1})^2\\[3pt]
    &\leq \dfrac{\lambda d_{rp}}{(t-1)^2} +
    \dfrac{\lambda_0 (d_\eta)^2}{t^2}\\[3pt]
    &= \dfrac{\lambda d_{rp}}{t^2}\cdot \dfrac{t^2}{(t-1)^2} +  \dfrac{\lambda_0 (d_\eta)^2}{t^2}\\[3pt]
    &\overset{(\Delta)}{\leq} \dfrac{\lambda d_{rp}}{t^2}\cdot \dfrac{\lambda+1}{2\lambda}+  \dfrac{\lambda_0 (d_\eta)^2}{t^2}\\[3pt]
    & = \dfrac{0.5(\lambda+1) \cdot d_{rp} + \lambda_0(d_\eta)^2}{t^2},
\end{aligned}
\end{equation}
where the inequality $(\Delta)$ results from the choice of $T_\lambda$.
Hence, the induction follows if $0.5(\lambda+1) d_{rp} + \lambda_0(d_\eta)^2\leq d_{rp}$, which is further equivalent to 
\begin{equation}\label{eq: d_rp_bound1}
    d_{rp} \geq \dfrac{2\lambda_0(d_\eta)^2}{1-\lambda}.
\end{equation}
Lastly, the base case of the induction requires that $\big\|\mb{r}^{T_\lambda} - \mb{p}^{T_\lambda}\big\|_2^2\leq \dfrac{d_{rp}}{(T_\lambda)^2}$. Therefore, by Eq. \eqref{eq: d_rp_bound1} and the definition of feasible price range $\mca{P}=[\underline{p},\overline{p}]$, it suffices to choose 
\begin{equation}
d_{rp} = \max\left\{ \dfrac{2\lambda_0(d_\eta)^2}{1-\lambda},\ 2(T_\lambda)^2(\overline{p}-\underline{p})^2  \right \},
\end{equation}
where the constants $\lambda_0$ and $T_\lambda$ are respectively defined in Eq. \eqref{eq: r-p}and Eq. \eqref{eq: r-p-T_lambda}.
Note that, under this choice of constant $d_{rp}$, it also holds that
\begin{equation}
\big\|\mb{r}^{t} - \mb{p}^{t}\big\|_2^2 \leq 2(\overline{p}-\underline{p})^2 <\dfrac{2(T_\lambda)^2(\overline{p}-\underline{p})^2}{t^2} \leq \dfrac{d_{rp}}{t^2},\quad \forall 1\leq t <T_\lambda.
\end{equation}
Together with Eq. \eqref{eq: r-p-T_lambda}, we derive that
\begin{equation}\label{eq: bound_on_r-p}
\big\|\mb{r}^{t} - \mb{p}^{t}\big\|_2^2 \leq \dfrac{d_{rp}}{t^2},\quad \forall t\geq 1.
\end{equation}

For every $t\geq T_{\epsilon_0}$, we can further upper-bound the right-hand side of Eq. \eqref{eq: converge_rate} by exploiting the upper bound of $\big\|\mb{r}^{t}-\mb{p}^{t}\big\|_2^2$ in Eq. \eqref{eq: bound_on_r-p} and the choice of $\eta_t = d_\eta/(t+1)$:
\begin{equation}\label{eq: define_C_3}
\begin{aligned}
    \big\|\mb{p}^{\star\star}-\mb{p}^{t+1}\big\|_2^2 
    &\leq \left(1- \dfrac{\gamma d_\eta}{t+1}\right)  \big\|\mb{p}^{\star\star}-\mb{p}^t\big\|_2^2 + \dfrac{C_1(d_\eta)^2}{(t+1)^2} +\dfrac{d_\eta k\cdot d_{rp}}{(t+1)t^2}\\[3pt]
    &\leq \left(1- \dfrac{\gamma d_\eta}{t+1}\right)  \big\|\mb{p}^{\star\star}-\mb{p}^t\big\|_2^2 + \dfrac{1}{t(t+1)}\underbrace{\left({C_1(d_\eta)^2} + \dfrac{d_\eta kd_{rp}}{T_{\epsilon_0}}\right)}_{C_3}.
\end{aligned}
\end{equation}
Now, we inductively show that 
\begin{equation}\label{eq: price_path_induction}
    \big\|\mb{p}^{\star\star} - \mb{p}^{t}\big\|_2^2 = \mca{O}\left(\dfrac{1}{t}\right),\quad \forall t\geq T_{\epsilon_0}.
\end{equation}
Suppose there exists a constant $d_p$ such that for a fixed period $t\geq T_{\epsilon_0}$
\begin{equation}
    \big\|\mb{p}^{\star\star} - \mb{p}^t\big\|_2^2 \leq \dfrac{d_p}{t}.
\end{equation}
To establish the induction, it suffices for the following inequality to hold
\begin{equation}
\begin{aligned}
&\big\|\mb{p}^{\star\star} - \mb{p}^{t+1}\big\|_2^2 \leq \left(1- \dfrac{\gamma d_\eta}{t+1}\right)  \dfrac{d_p}{t} + \dfrac{C_3}{t(t+1)}\leq \dfrac{d_p}{t+1},
\end{aligned}
\end{equation}
which is further equivalent to
\begin{equation}
(\gamma d_\eta -1)d_p \geq C_3.
\end{equation}
To satisfy the base case of the induction, we can select $d_p$ such that $d_p\geq T_{\epsilon_0}\cdot \big\|\mb{p}^{\star\star} - \mb{p}^{T_{\epsilon_0}}\big\|_2^2$.
In summary, one possible set of constants $(d_\eta,d_{rp},d_p)$ that satisfies all the requirements can be
\begin{equation}
d_\eta = \dfrac{2}{\gamma},\quad d_{rp} = \max\left\{ \dfrac{2\lambda_0(d_\eta)^2}{1-\lambda},\ 2(T_\lambda)^2(\overline{p}-\underline{p})^2  \right \},\quad d_p = \max\{C_3,2T_{\epsilon_0}(\overline{p}-\underline{p})^2\},
\end{equation}
where the constants $\lambda_0$, $T_\lambda$, and $C_3$ are respectively defined in Eq. \eqref{eq: r-p}, Eq. \eqref{eq: r-p-T_lambda}, and Eq. \eqref{eq: define_C_3}.
Now, for any period $1\leq t <T_{\epsilon_0}$, it follows that
\begin{equation}
\big\|\mb{p}^{\star\star} - \mb{p}^{t}\big\|_2^2 \leq 2(\overline{p}-\underline{p})^2 < \dfrac{2T_{\epsilon_0}(\overline{p}-\underline{p})^2}{t}\leq \dfrac{d_p}{t}.
\end{equation}
Together with Eq. \eqref{eq: price_path_induction}, this proves the convergence rate for the price path, i.e.,
\begin{equation}\label{eq: price_path_convergence}
\big\|\mb{p}^{\star\star} - \mb{p}^{t}\big\|_2^2 \leq  \dfrac{d_p}{t},\quad \forall t\geq 1.
\end{equation}
Finally, the convergence rate of the reference price path can be deduced from the following triangular inequality:
\begin{equation}
\begin{aligned}
\big\|\mb{p}^{\star\star} - \mb{r}^t\big\|_2^2 &\ = \big\|\mb{p}^{\star\star} -\mb{p}^t + \mb{p}^t -\mb{r}^t\big\|_2^2\\[3pt]
&\ \leq 2\big\|\mb{p}^{\star\star} -\mb{p}^t\big\|_2^2 + 2\big \|\mb{p}^t -\mb{r}^t \big\|_2^2\\[3pt]
&\overset{(\Delta)}{\leq} \dfrac{2d_p}{t} + \dfrac{2d_{rp}}{t^2}\\[3pt]
&\ \leq \dfrac{2d_p  +2d_{rp}}{t},\quad \forall t\geq 1,
\end{aligned}
\end{equation}
where $(\Delta)$ follows from Eq. \eqref{eq: bound_on_r-p} and Eq. \eqref{eq: price_path_convergence}.
Therefore, it suffices to choose $d_r := 2d_p  +2d_{rp}$, and this completes the proof.
\end{proof}

\section{Supporting lemmas}\label{sec: app_lemmas}
\begin{lemma}[Convergence of price to reference price]\label{lemma: convergence_of_price_paths} 
Let $\{\mb{p}^t\}_{t\geq 0}$ and $\{\mb{r}^t\}_{t\geq 0}$ be the price path and reference path generated by Algorithm \ref{alg: OPGA} with non-increasing step-sizes $\{\eta^t\}_{t \geq 0}$ such that $\lim_{t\rightarrow \infty} \eta^t = 0$.
Then, their difference $\{\mb{r}^t-\mb{p}^t\}_{t\geq 0}$ converges to $0$ as $t$ goes to infinity.
\end{lemma}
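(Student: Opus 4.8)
The plan is to reduce the statement to a scalar recursion of geometric-contraction type with a vanishing forcing term. Writing $\mb{a}^t := \mb{r}^t - \mb{p}^t$, I would first rewrite the reference-price update as $\mb{r}^{t+1} = \alpha \mb{r}^t + (1-\alpha)\mb{p}^t = \mb{p}^t + \alpha(\mb{r}^t - \mb{p}^t)$, so that the difference at the next period decomposes cleanly:
\begin{equation}
\mb{r}^{t+1} - \mb{p}^{t+1} = \alpha(\mb{r}^t - \mb{p}^t) - (\mb{p}^{t+1} - \mb{p}^t).
\end{equation}
Taking $\ell_2$-norms and applying the triangle inequality gives $\|\mb{a}^{t+1}\|_2 \le \alpha\|\mb{a}^t\|_2 + \|\mb{p}^{t+1}-\mb{p}^t\|_2$. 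This decouples the argument into (i) a uniform bound on the per-step price movement and (ii) the analysis of the resulting recursion.

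For step (i), I would exploit that $\mb{p}^t \in \mca{P}^2$ implies $\operatorname{Proj}_{\mca P}(\mb{p}^t) = \mb{p}^t$, so that nonexpansiveness of the Euclidean projection onto the convex interval $\mca P$ yields
\begin{equation}
\|\mb{p}^{t+1} - \mb{p}^t\|_2 = \big\|\operatorname{Proj}_{\mca P}(\mb{p}^t + \eta^t \mb{D}^t) - \operatorname{Proj}_{\mca P}(\mb{p}^t)\big\|_2 \le \eta^t \|\mb{D}^t\|_2.
\end{equation}
It then remains to note that the log-revenue derivatives $D_i^t = 1/p_i^t + (b_i+c_i)d_i(\mb{p}^t,\mb{r}^t) - (b_i+c_i)$ are uniformly bounded: the price lower bound $\underline p > 0$ controls $1/p_i^t \le 1/\underline p$ and the logit share obeys $d_i \in (0,1)$, so $\|\mb{D}^t\|_2 \le M$ for an explicit constant $M$ depending only on $\underline p$ and the parameters $(b_i,c_i)$ (this is exactly the boundedness recorded in Lemma \ref{lemma: constant}). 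Combining the two displays gives the recursion $\|\mb{a}^{t+1}\|_2 \le \alpha\|\mb{a}^t\|_2 + M\eta^t$.

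For step (ii), I would unroll the recursion to obtain $\|\mb{a}^t\|_2 \le \alpha^t\|\mb{a}^0\|_2 + M\sum_{k=0}^{t-1}\alpha^{t-1-k}\eta^k$ and argue that both terms vanish. The prefactor $\alpha^t\|\mb{a}^0\|_2 \to 0$ since $\alpha < 1$, and for the convolution term I would use the standard splitting: given $\epsilon>0$, pick $N$ with $\eta^k \le \epsilon$ for all $k \ge N$ (possible because $\eta^t \to 0$), split the sum at $N$, bound the tail by $\epsilon/(1-\alpha)$ and the head by $\eta^0\,\alpha^{t-N}/(1-\alpha) \to 0$ (using that $\{\eta^t\}$ is non-increasing), thereby concluding $\limsup_t \|\mb{a}^t\|_2 \le M\epsilon/(1-\alpha)$ for every $\epsilon$, i.e. $\|\mb{a}^t\|_2 \to 0$. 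There is no deep obstacle here; the only point demanding care is the degenerate regime $\alpha = 1$, in which the reference price is frozen and the difference need not vanish, so the argument genuinely relies on $\alpha < 1$ (the meaningful range for reference effects). This is also precisely the place where the hypothesis $\eta^t \to 0$ — rather than mere summability — enters, since it is what drives the vanishing-convolution estimate.
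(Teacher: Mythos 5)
Your proposal is correct and follows essentially the same route as the paper: the same decomposition $\mb{r}^{t+1}-\mb{p}^{t+1}=\alpha(\mb{r}^t-\mb{p}^t)-(\mb{p}^{t+1}-\mb{p}^t)$, the same projection-nonexpansiveness plus boundedness of $\mb{D}^t$ (Lemma \ref{lemma: constant}) to control the per-step price movement, and the same contraction-plus-vanishing-forcing argument (the paper unrolls only from a period $T_\eta$ after which each forcing term is at most $\eta$, which is just a repackaging of your split-the-convolution estimate). Your explicit remark that the argument degenerates at $\alpha=1$ is a fair observation that the paper leaves implicit (its bound $\eta/(1-\alpha)$ also requires $\alpha<1$), but it does not change the substance.
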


\begin{proof}
First, we recall that $D^t_i = (b_i+c_i)\cdot G_i(\mb{p}^t,\mb{r}^t)$, where  $G_i(\mb{p},\mb{r})$ is the scaled partial derivative defined in \eqref{eq: G_i_def}.
Thus, it follows from Lemma \ref{lemma: constant} that $\left|D^t_i \right|\leq (b_i+c_i)M_G$.
Since $\{\eta^t\}_{t \geq 0}$
    is a non-increasing sequence with $\lim_{t\rightarrow \infty} \eta^t = 0$, for any constant $\eta > 0$, there exists $T_\eta \in  \mathbb{N}$ such that $\left|\eta^t D_i^t\right| \leq \eta$ for every $t \geq T_\eta$ and for every $i \in \{H, L\}$.
Therefore, it holds that
\begin{equation}\label{eq: price_update_upper}
\left|p_i^{t+1} - p_i^t\right| = \left|\operatorname{Proj}_\mathcal{P} \left(p_i^t + \eta^t D_i^t \right)- p_i^t \right|\\
\leq \left|\eta^t D_i^t\right|\leq \eta,\quad \forall t\geq T_\eta,
\end{equation}
where the first inequality is due to the property of the projection operator.
Then, by the reference price update, we have for $t \geq T_\eta$ and for $i \in \{H, L\}$ that
    \begin{equation}\label{eq: ub_5}
    \begin{aligned}
        \left|r_i^{t+1} - p_i^{t+1}\right| &= \left|\alpha r_i^t + (1-\alpha)p_i^t - p_i^{t+1}\right|\\
        &=\left|\alpha\left(r_i^t - p_i^t \right) + \left(p_i^{t+1} - p_i^t\right) \right|\\
        &\leq \alpha \left|r_i^t - p_i^t\right| + \left|p_i^{t+1} - p_i^t\right|\\
        &\leq \alpha \left|r_i^t - p_i^t\right| + \eta,
        \end{aligned}
    \end{equation}
where the last line follows from the upper bound in Eq. \eqref{eq: price_update_upper}.
Applying Eq. \eqref{eq: ub_5} recursively from $t$ to $T_\eta$, we further derive that 
    \begin{equation}
    \begin{aligned}
        \Big|r_i^{t+1} - p_i^{t+1}\Big| &\leq \alpha^{t+1 - T_\eta} \cdot \Big|r_i^{T_\eta} - p_i^{T_\eta}\Big| + \eta \sum_{\tau = T_\eta}^t \alpha^{\tau - T_\eta} \\
        &\leq \alpha^{t+1 - T_\eta} \cdot (\overline{p} - \underline{p}) + \dfrac{\eta}{1-\alpha}, \quad \forall i \in \{H, L\}.
    \end{aligned}
    \end{equation}
    Since $\eta$ can be arbitrarily close to $0$, we have that $\left| r_i^t - p_i^t \right|\rightarrow 0$ as $t\rightarrow \infty$, which completes the proof of the convergence.
\end{proof}

\begin{lemma}\label{lemma: gradient_based}
    Define the function $\mca{G}(\mb{p})$ as
    \begin{equation}\label{eq: gradient_based_1}
        \mca{G}(\mb{p}):=\operatorname{sign}(p_H^{\star \star} - p_H) \cdot G_H(\mb{p}, \mb{p}) + \operatorname{sign}(p_L^{\star \star} - p_L) \cdot G_L(\mb{p}, \mb{p}),
    \end{equation}
    where $G_i(\mb{p}, \mb{r})$ is the scaled partial derivative introduced in Eq. \eqref{eq: G_i_def}. 
    Then, it always holds that $ \mca{G}(\mb{p})> 0,  \forall \mb{p} \in \mca{P}^2 \backslash \{\mb{p}^{\star \star}\}$, where $\mb{p}^{\star\star} = \left(p_H^{\star\star}, p_L^{\star\star}\right)$ denotes the unique SNE, and the function $\operatorname{sign}(\cdot)$ is defined in Eq. \eqref{eq: sign}.
    
    Furthermore, for every $\epsilon > 0$, there exists $M_{\epsilon} >0$ such that $\mca{G}(\mb{p}) \geq M_{\epsilon}$ if $\varepsilon(\mb{p}) \geq \epsilon$, where $\varepsilon(\mb{p})$ is the weighted $\ell_1$-distance function defined in Eq. \eqref{eq: ell_1_distance}.
\end{lemma}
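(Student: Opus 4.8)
The plan is to prove both claims simultaneously by bounding $\mca{G}(\mb{p})$ from below along the straight segment joining $\mb{p}^{\star\star}$ to $\mb{p}$. Write $g_i(\mb{p}):=G_i(\mb{p},\mb{p})=\frac{1}{(b_i+c_i)p_i}+d_i(\mb{p},\mb{p})-1$. The SNE characterization in Eq.~\eqref{eq: system_sne} gives $\frac{1}{(b_i+c_i)p_i^{\star\star}}=1-d_i(\mb{p}^{\star\star},\mb{p}^{\star\star})$, so that $g_i(\mb{p}^{\star\star})=0$ for $i\in\{H,L\}$; this anchors the segment argument at the SNE.

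First I would compute the two partial derivatives of $g_i$ on the diagonal. Since evaluating at $\mb{r}=\mb{p}$ makes the own deterministic utility equal to $a_i-b_ip_i$, differentiating through both the price and reference slots yields $\partial_{p_i}g_i=-A_i$ with $A_i:=\frac{1}{(b_i+c_i)p_i^2}+b_id_i(1-d_i)>0$, and $\partial_{p_{-i}}g_i=B_i:=b_{-i}d_id_{-i}>0$. The crucial structural identity is then $A_H-B_L=\frac{1}{(b_H+c_H)p_H^2}+b_Hd_H(1-d_H-d_L)$ and symmetrically $A_L-B_H=\frac{1}{(b_L+c_L)p_L^2}+b_Ld_L(1-d_H-d_L)$; because the MNL no-purchase probability satisfies $1-d_H-d_L>0$, each quantity is strictly positive, and in fact bounded below on $\mca{P}^2$ by $c_0:=1/\big(\max_i(b_i+c_i)\,\overline{p}^2\big)>0$. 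This positivity is exactly the characteristic property of the MNL model that substitutes for the missing monotonicity, and it is the heart of the argument.

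Next I would parametrize $\mb{p}(t)=\mb{p}^{\star\star}+t(\mb{p}-\mb{p}^{\star\star})$ for $t\in[0,1]$ (the box $\mca{P}^2$ is convex, so the segment stays feasible), set $s_i:=\operatorname{sign}(p_i^{\star\star}-p_i)$, and study $\Phi(t):=s_Hg_H(\mb{p}(t))+s_Lg_L(\mb{p}(t))$, noting $\Phi(0)=0$ and $\Phi(1)=\mca{G}(\mb{p})$. Writing $\Delta_i:=p_i-p_i^{\star\star}$ and using $s_i\Delta_i=-|\Delta_i|$, a direct expansion gives $\Phi'(t)=A_H|\Delta_H|+A_L|\Delta_L|+s_HB_H\Delta_L+s_LB_L\Delta_H$. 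Bounding the cross terms crudely by $s_iB_i\Delta_j\ge-B_i|\Delta_j|$, which is valid since $|s_i|\le1$ and $B_i\ge0$ and also absorbs the boundary case $s_i=0$, collapses every quadrant into the single estimate $\Phi'(t)\ge(A_H-B_L)|\Delta_H|+(A_L-B_H)|\Delta_L|\ge c_0\big(|\Delta_H|+|\Delta_L|\big)$. Integrating over $[0,1]$ yields $\mca{G}(\mb{p})=\Phi(1)\ge c_0\big(|p_H^{\star\star}-p_H|+|p_L^{\star\star}-p_L|\big)$, which is strictly positive whenever $\mb{p}\ne\mb{p}^{\star\star}$, proving the first claim. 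For the second, $|p_H^{\star\star}-p_H|+|p_L^{\star\star}-p_L|\ge\min_i(b_i+c_i)\,\varepsilon(\mb{p})$, so $\varepsilon(\mb{p})\ge\epsilon$ forces $\mca{G}(\mb{p})\ge c_0\min_i(b_i+c_i)\,\epsilon=:M_\epsilon>0$.

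The main obstacle is the cross-price coupling in the ``same-side'' quadrants, where raising the competitor's price pushes $g_i$ in the adverse direction; the trick that resolves it is recognizing that the relevant combination $A_i-B_{-i}$ is governed by the positive no-purchase share $1-d_H-d_L$, which supplies a uniform, price-independent lower bound that dominates the coupling. A secondary subtlety is that $\mca{G}$ is discontinuous across the lines $p_i=p_i^{\star\star}$, so a naive compactness argument for $M_\epsilon$ would fail; the segment estimate sidesteps this by producing an explicit linear-in-$\varepsilon$ bound rather than invoking continuity.
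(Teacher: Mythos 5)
Your proof is correct, and it takes a genuinely different route from the paper's. The paper argues quadrant by quadrant: in each of $N_1,\dots,N_4$ it freezes the sign pattern, writes out $\pm G_H(\mb{p},\mb{p})\pm G_L(\mb{p},\mb{p})$ explicitly, checks that the resulting expression is coordinate-wise monotone in the favorable directions, and compares against the value $0$ at $\mb{p}^{\star\star}$; the existence of $M_\epsilon$ then requires a separate argument showing that $\min_{\varepsilon(\mb{p})=\epsilon}\mca{G}(\mb{p})$ is monotone in $\epsilon$ by scaling along rays through $\mb{p}^{\star\star}$. You instead integrate along the segment from $\mb{p}^{\star\star}$ to $\mb{p}$ and bound the derivative uniformly over all quadrants at once via $\Phi'(t)\geq (A_H-B_L)|\Delta_H|+(A_L-B_H)|\Delta_L|\geq c_0\|\mb{p}-\mb{p}^{\star\star}\|_1$. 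The underlying structural fact is the same in both proofs — your identity $A_i-B_{-i}=\tfrac{1}{(b_i+c_i)p_i^2}+b_i d_i(1-d_H-d_L)>0$ is exactly what makes the paper's expression $-G_H-G_L$ increasing in each coordinate on $N_1$ (the paper packages it as the derivative of the no-purchase probability $d_0$) — but your treatment of the opposite-sign quadrants by crudely bounding the cross terms with $|s_i|\leq 1$ unifies the four cases, whereas the paper benefits from the fact that in $N_2\cup N_4$ the relevant combination is the even easier $A_i+B_{-i}$. What your approach buys is a quantitatively stronger conclusion: an explicit global linear lower bound $\mca{G}(\mb{p})\geq c_0\|\mb{p}-\mb{p}^{\star\star}\|_1$ and hence an explicit constant $M_\epsilon=c_0\min_i(b_i+c_i)\,\epsilon$, which disposes of the second claim immediately and sidesteps the attainment/semicontinuity care needed in the paper's level-set argument (a subtlety you correctly flag, since $\mca{G}$ jumps across the lines $p_i=p_i^{\star\star}$). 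The only points worth stating explicitly in a write-up are that the segment remains in $\mca{P}^2$ because the box is convex and contains $\mb{p}^{\star\star}$ by the standing assumption after Proposition \ref{prop: SNE}, and that the uniform bound $A_i-B_{-i}\geq c_0$ is applied at every $\mb{p}(t)$ along the segment; both are immediate.
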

\begin{proof}
Firstly, by the first-order condition at the SNE (see Eq. \eqref{eq: system_sne}), we have $G_H(\mb{p}^{\star\star}, \mb{p}^{\star\star})= G_L(\mb{p}^{\star\star}, \mb{p}^{\star\star}) = 0$, which also implies $\mca{G}(\mb{p}^{\star\star}) = 0$.
Then, we recall the definition of four regions $N_1$, $N_2$, $N_3$, and $N_4$ in Eq. \eqref{case_2}.
We show that $\mca{G}(\mb{p}) > 0$ when $\mb{p}$ belongs to either one of the four regions.
\begin{enumerate}[leftmargin = *]
    \item When $\mb{p} \in N_1$, i.e., $\ p_H > p^{\star \star}_H$ and $p_L \geq p_L^{\star \star}$, the function $\mca{G}(\mb{p})$ becomes
    \begin{equation}\label{eq: func_G_1}
    \begin{aligned}
        \mca{G}(\mb{p}) &= -G_H(\mb{p}, \mb{p}) - G_L(\mb{p},\mb{p})\\
        &= -\dfrac{1}{(b_H+c_H)p_H} - \dfrac{1}{(b_L+c_L)p_L} - \big(d_H(\mb{p},\mb{p}) + d_L(\mb{p}, \mb{p})\big)+2\\
        &= -\dfrac{1}{(b_H+c_H)p_H} - \dfrac{1}{(b_L+c_L)p_L} + d_0(\mb{p},\mb{p}) + 1\\
        &= -\dfrac{1}{(b_H+c_H)p_H} - \dfrac{1}{(b_L+c_L)p_L} + \dfrac{1}{1+ \exp(a_H - b_Hp_H) + \exp(a_L -b_Lp_L)} + 1,
    \end{aligned}
    \end{equation}
    where $d_0(\mb{p},\mb{r}) = 1-d_H(\mb{p},\mb{r})-d_L(\mb{p},\mb{r})$ denotes the no purchase probability.
    We observe from Eq. \eqref{eq: func_G_1} that $\mca{G}(\mb{p})$ is strictly increasing in $p_H$ and $p_L$. 
    Together with the fact that $p_H$ and $p_L$ are lowered bounded by $p_H^{\star \star}$ and $p_L^{\star \star}$, respectively, and $\mca{G}(\mb{p}^{\star \star}) = 0$, we verify that $\mca{G}(\mb{p}) > 0$ when $\mb{p}\in N_1$. 
    With the similar approach, we show that when $\mb{p} \in N_3$, i.e., $p_H < p^{\star \star}_H$ and $p_L \leq p_L^{\star \star}$, it also follows that $\mca{G}(\mb{p}) > 0$.
    
    \item When $\mb{p} \in N_2$, i.e., $p_H\leq p_H^{\star \star}$ and $p_L > p_L^{\star \star}$, the function $\mca{G}(\mb{p})$ becomes 
    \begin{equation}\label{eq: func_G_2}
    \begin{aligned}
        \mca{G}(\mb{p}) &= G_H(\mb{p}, \mb{p}) - G_L(\mb{p}, \mb{p})\\
        &= \dfrac{1}{(b_H+c_H)p_H} - \dfrac{1}{(b_L+c_L)p_L}+ d_H(\mb{p},\mb{p})  - d_L(\mb{p}, \mb{p})\\
        &= \dfrac{1}{(b_H+c_H)p_H} - \dfrac{1}{(b_L+c_L)p_L} + \dfrac{\exp(a_H - b_Hp_H) - \exp(a_L - b_Lp_L)}{1+\exp(a_H - b_Hp_H) + \exp(a_L - b_Lp_L)}.
    \end{aligned}
    \end{equation}
    By Eq. \eqref{eq: func_G_2}, we notice that $\mca{G}(\mb{p})$ under region $N_2$ is strictly decreasing in $p_H$ and strictly increasing in $p_L$. Meanwhile, since $p_H \leq p_H^{\star \star}$, $p_L > p_L^{\star \star}$, and $\mca{G}(\mb{p}^{\star \star})=0$, it implies that $\mca{G}(\mb{p}) > 0$ when $\mb{p} \in N_2$. Moreover, by similar reasoning, we show that when $\mb{p} \in N_4$, i.e., $p_H \geq p_H^{\star \star}$ and $p_L < p_L^{\star \star}$, the inequality $\mca{G}(\mb{p}) > 0$ also holds. 
\end{enumerate}

Finally, we are left to establish the existence of $M_\epsilon$.
It suffices to show that 
\begin{equation}
        \min_{\varepsilon(\mb{p}) = \epsilon_1} \mca{G}(\mb{p}) >  \min_{\varepsilon(\mb{p}) = \epsilon_2} \mca{G}(\mb{p}).
\end{equation}
for every $\epsilon_1 > \epsilon_2 > 0$.
Suppose $\mb{p}^{\epsilon_1} 
:= \argmin_{\varepsilon(\mb{p}) = \epsilon_1} \mca{G}(\mb{p})$. Define $\mb{p}^{\epsilon_2} := (\epsilon_2/\epsilon_1) \mb{p}^{\epsilon_1} + (1-\epsilon_2/\epsilon_1)\mb{p}^{\star \star}$, which satisfies that $\varepsilon(\mb{p}^{\epsilon_2}) = \epsilon_2$. Then, we have that
\begin{equation}
\begin{aligned}
    \mca{G}(\mb{p}^{\epsilon_2}) &\ \ = \operatorname{sign}(p_H^{\star \star} - p_H^{\epsilon_2}) \cdot G_H(\mb{p}^{\epsilon_2}, \mb{p}^{\epsilon_2}) + \operatorname{sign}(p_L^{\star \star} - p_L^{\epsilon_2}) \cdot G_L(\mb{p}^{\epsilon_2}, \mb{p}^{\epsilon_2})\\[3pt]
     &\overset{(\Delta_1)}{=}
    \operatorname{sign}\Big(\dfrac{\epsilon_2}{\epsilon_1}(p_H^{\star \star} - p_H^{\epsilon_1})\Big) \cdot G_H(\mb{p}^{\epsilon_2}, \mb{p}^{\epsilon_2}) + 
    \operatorname{sign}\Big(\dfrac{\epsilon_2}{\epsilon_1}(p_L^{\star \star} - p_L^{\epsilon_1})\Big)\cdot G_L(\mb{p}^{\epsilon_2}, \mb{p}^{\epsilon_2})\\[3pt]
    &\ \ =\operatorname{sign}(p_H^{\star \star} - p_H^{\epsilon_1}) \cdot G_H(\mb{p}^{\epsilon_2}, \mb{p}^{\epsilon_2}) + 
    \operatorname{sign}(p_L^{\star \star} - p_L^{\epsilon_1})\cdot G_L(\mb{p}^{\epsilon_2}, \mb{p}^{\epsilon_2})\\
    &\overset{(\Delta_2)}{\leq} \operatorname{sign}(p_H^{\star \star} - p_H^{\epsilon_1}) \cdot G_H(\mb{p}^{\epsilon_1}, \mb{p}^{\epsilon_1}) + 
    \operatorname{sign}(p_L^{\star \star} - p_L^{\epsilon_1})\cdot G_L(\mb{p}^{\epsilon_1}, \mb{p}^{\epsilon_1}) = \mca{G}(\mb{p}^{\epsilon_1}),
\end{aligned}
\end{equation}
where $(\Delta_1)$ follows from substituting $p_i^{\epsilon_2}$ in $\operatorname{sign}(p_i^{\star \star} - p_i^{\epsilon_2})$ with $p_i^{\epsilon_2} = (\epsilon_2/\epsilon_1) p_i^{\epsilon_1} + (1-\epsilon_2/\epsilon_1)p_i^{\star \star}$. 
To see why $(\Delta_2)$ holds, recall that we have shown in Eq. \eqref{eq: func_G_1} and Eq. \eqref{eq: func_G_2} that when two prices are in the same region (see four regions defined in Eq. \eqref{case_2}), the price closer to $\mb{p}^{\star \star}$ in terms of the metric $\varepsilon(\cdot)$ has a greater value of $\mca{G}(\mb{p})$. Since $\mb{p}^{\epsilon_1}$ and $\mb{p}^{\epsilon_2}$ are from the same region and $\epsilon_1 > \epsilon_2$, we conclude that $\min_{\varepsilon(\mb{p}) = \epsilon_1} \mca{G}(\mb{p}) = \mca{G}(\mb{p}^{\epsilon_1}) > \mca{G}(\mb{p}^{\epsilon_2}) \geq \min_{\varepsilon(\mb{p}) = \epsilon_2} \mca{G}(\mb{p})$.
\end{proof}

\begin{lemma}\label{lemma: hessian}
Define function $\mca{H}(\mb{p})$ as follows
\begin{equation}
\mca{H}(\mb{p}):= (b_H+c_H)\cdot G_H(\mb{p},\mb{p})\cdot (p_H^{\star \star} -p_H) + (b_L+c_L)\cdot G_L(\mb{p},\mb{p})\cdot (p_L^{\star \star} -p_L)
\end{equation}
Then, there exist $\gamma >0$ and a open set $U_\gamma \ni \mb{p}^{\star \star}$ such that 
\begin{equation}
\mca{H}(\mb{p}) \geq \gamma \cdot \|\mb{p}-\mb{p}^{\star \star}\|_2^2,\quad \forall \mb{p}\in U_\gamma.
\end{equation}
\end{lemma}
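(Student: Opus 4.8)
The plan is to treat this as a local second-order statement. Observe that $\mca{H}$ is the inner product of $\mb{g}(\mb{p}) := \big((b_H+c_H)G_H(\mb{p},\mb{p}),(b_L+c_L)G_L(\mb{p},\mb{p})\big)$ with $\mb{p}^{\star\star}-\mb{p}$, and that both $\mb{g}(\mb{p}^{\star\star})=\mb{0}$ (the first-order SNE conditions in Eq.~\eqref{eq: system_sne}) and $\mb{p}^{\star\star}-\mb{p}$ vanish at $\mb{p}^{\star\star}$. Hence $\mb{p}^{\star\star}$ is a critical point of $\mca{H}$ with $\mca{H}(\mb{p}^{\star\star})=0$ and $\nabla\mca{H}(\mb{p}^{\star\star})=\mb{0}$. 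First I would invoke Taylor's theorem with the Lagrange remainder to write $\mca{H}(\mb{p})=\tfrac12(\mb{p}-\mb{p}^{\star\star})^\top \nabla^2\mca{H}(\xi)(\mb{p}-\mb{p}^{\star\star})$ for some $\xi$ on the segment $[\mb{p}^{\star\star},\mb{p}]$. Because the MNL demand and $1/p_i$ are smooth on the compact set $\mca{P}^2$ (recall $\underline{p}>0$), $\nabla^2\mca{H}$ is continuous; thus, once I show $\nabla^2\mca{H}(\mb{p}^{\star\star})$ is positive definite with smallest eigenvalue $2\gamma_0>0$, continuity yields an open ball $U_\gamma\ni\mb{p}^{\star\star}$ on which $\nabla^2\mca{H}(\xi)\succeq\gamma_0 I$, and the claim follows with $\gamma:=\gamma_0/2$.

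The core computation is the Hessian at the SNE. Along the diagonal $\mb{r}=\mb{p}$ the utilities reduce to $u_i(p_i,p_i)=a_i-b_ip_i$, so $\tilde{d}_i(\mb{p}):=d_i(\mb{p},\mb{p})$ has partials $\partial \tilde{d}_i/\partial p_i=-b_i\tilde{d}_i(1-\tilde{d}_i)$ and $\partial \tilde{d}_i/\partial p_{-i}=b_{-i}\tilde{d}_i\tilde{d}_{-i}$. Differentiating $g_i=1/p_i+(b_i+c_i)(\tilde{d}_i-1)$ and using $\nabla^2\mca{H}(\mb{p}^{\star\star})=-\big(\nabla\mb{g}(\mb{p}^{\star\star})+\nabla\mb{g}(\mb{p}^{\star\star})^\top\big)$, I obtain a symmetric $2\times 2$ matrix whose diagonal entries are $2A_i$ with $A_i=\tfrac{1}{(p_i^{\star\star})^2}+(b_i+c_i)b_i\,d_i^{\star\star}(1-d_i^{\star\star})$, and whose off-diagonal entry is $2B$ with $B=-\tfrac12 d_H^{\star\star}d_L^{\star\star}\big((b_H+c_H)b_L+(b_L+c_L)b_H\big)$. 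Positive definiteness is then equivalent to $A_H,A_L>0$ (immediate, since every term is nonnegative and $1/(p_i^{\star\star})^2>0$) together with the determinant condition $A_HA_L>B^2$.

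The determinant inequality is the crux, and it is exactly where the MNL structure enters. Writing $\beta_i:=b_i+c_i$ and $q_i:=1-d_i^{\star\star}$, and using the first-order SNE condition $1/(p_i^{\star\star})^2=\beta_i^2q_i^2$, I would discard the nonnegative demand-curvature terms to get $A_i\ge\beta_i^2q_i^2$. Two structural facts then close the argument: the outside (no-purchase) alternative guarantees $d_H^{\star\star}+d_L^{\star\star}<1$, hence $q_H>d_L^{\star\star}$ and $q_L>d_H^{\star\star}$, so that $q_Hq_L>d_H^{\star\star}d_L^{\star\star}>0$; and $c_i>0$ gives $b_i<\beta_i$, hence $2\beta_H\beta_L>\beta_Hb_L+\beta_Lb_H$. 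Chaining these estimates,
\begin{equation}
A_HA_L\ \ge\ \beta_H^2\beta_L^2q_H^2q_L^2\ >\ \beta_H^2\beta_L^2(d_H^{\star\star})^2(d_L^{\star\star})^2\ \ge\ \tfrac14(d_H^{\star\star})^2(d_L^{\star\star})^2\big(\beta_Hb_L+\beta_Lb_H\big)^2\ =\ B^2,
\end{equation}
which establishes $A_HA_L>B^2$ and completes the positive-definiteness of $\nabla^2\mca{H}(\mb{p}^{\star\star})$.

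The main obstacle I anticipate is not this final chain but the bookkeeping needed to compute $\nabla^2\mca{H}(\mb{p}^{\star\star})$ correctly---in particular, carefully handling the competitor cross-derivatives $\partial \tilde{d}_i/\partial p_{-i}$ that generate the coupling term $B$. The key conceptual point that makes everything fall into place is that the extra curvature $1/(p_i^{\star\star})^2$ supplied by the $1/p_i$ term in $g_i$ appears only on the diagonal (it is absent from the off-diagonal entry $B$); together with the no-purchase slack $q_i>d_{-i}^{\star\star}$, it is precisely what lets the diagonal strictly dominate the cross-coupling, yielding a genuinely positive-definite Hessian despite the lack of concavity of the revenue functions.
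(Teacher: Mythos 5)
Your proposal is correct and follows essentially the same route as the paper: show $\mca{H}(\mb{p}^{\star\star})=0$ and $\nabla\mca{H}(\mb{p}^{\star\star})=\mb{0}$, prove $\nabla^2\mca{H}(\mb{p}^{\star\star})\succ 0$ via positive diagonal entries plus a determinant bound resting on the first-order condition $1/p_i^{\star\star}=(b_i+c_i)(1-d_i^{\star\star})$, the no-purchase slack $d_H^{\star\star}+d_L^{\star\star}<1$, and $b_i\le b_i+c_i$, and then conclude by continuity of the Hessian and second-order Taylor expansion. Your symmetrization shortcut $\nabla^2\mca{H}(\mb{p}^{\star\star})=-(\nabla\mb{g}+\nabla\mb{g}^\top)$ and the slightly cleaner chain $A_HA_L\ge\beta_H^2\beta_L^2q_H^2q_L^2>B^2$ are only cosmetic reorganizations of the paper's computation.
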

 
\begin{proof}
According to the partial derivatives in Eq. \eqref{eq: part_der_a} and Eq. \eqref{eq: part_der_b} from Lemma \ref{lemma: constant}, we have 
\begin{equation}\label{eq: part_G_pp}
\begin{aligned}
    \dfrac{\partial G_i(\mb{p},\mb{p})}{\partial p_i} &= -\dfrac{1}{(b_i + c_i)p_i^2} - b_i \cdot d_i(\mb{p},\mb{p}) \cdot \big(1-d_i(\mb{p}, \mb{p})\big);\\[3pt]
    \dfrac{\partial G_i(\mb{p},\mb{p})}{\partial p_{-i}} &= b_{-i} \cdot d_i(\mb{p},\mb{p}) \cdot d_{-i}(\mb{p},\mb{p}).
\end{aligned}
\end{equation}

Then, to compute the gradient $\nabla \mca{H}(\mb{p}) = \left[\partial \mca{H}(\mb{p})/ \partial p_H, \partial \mca{H}(\mb{p})/\partial p_L\right]$, we utilize partial derivatives of $G_i(\mb{p}, \mb{p})$ in Eq. \eqref{eq: part_G_pp} and obtain the partial derivatives of $\mca{H}(\mb{p})$ for $i \in \{H, L\}$: 
\begin{equation}
\begin{aligned}
    \dfrac{\partial \mca{H}(\mb{p})}{\partial p_i} =& (b_i+c_i) \left[\dfrac{\partial G_i(\mb{p}, \mb{p})}{\partial p_i} (p_i^{\star \star} - p_i) - G_i(\mb{p},\mb{p})\right] + (b_{-i} + c_{-i})(p_{-i}^{\star \star} - p_{-i}) \dfrac{\partial G_{-i}(\mb{p},\mb{p})}{\partial p_i}\\[10pt]
    =& -\dfrac{p_i^{\star \star}}{p_i^2} - (b_i+c_i) \big(d_i(\mb{p},\mb{p}) - 1\big) - b_i(b_i+c_i) \cdot d_i(\mb{p},\mb{p})\cdot \big(1-d_i(\mb{p},\mb{p})\big)  (p_i^{\star \star} - p_i)\\[3pt]
    &+b_i(b_{-i} + c_{-i}) \cdot d_i(\mb{p},\mb{p}) \cdot d_{-i}(\mb{p},\mb{p}) \cdot (p_{-i}^{\star \star} - p_{-i}).
\end{aligned}
\end{equation}

From the definition of $G_i(\cdot, \cdot)$ in Eq. \eqref{def: mca_G} and the system equations of first-order condition in Eq. \eqref{eq: system_sne}, we have $G_i(\mb{p}^{\star \star},\mb{p}^{\star \star}) = 1/\big[(b_i+c_i) p_i^{\star \star}\big] + (d_i^{\star \star}-1) = 0$, where $d_i^{\star \star}:=d_i(\mb{p}^{\star\star},\mb{p}^{\star\star})$ denotes the market share of product $i$ at the SNE. Thereby, it follows that $\nabla \mca{H}(\mb{p}^{\star \star})= 0$.

Next, the Hessian matrix $\nabla^2\mca{H}(\mb{p})$ evaluated at $\mb{p}^{\star \star}$ can be computed as
\begin{equation*}
\begin{aligned}
     &\quad\ \ \nabla^2 \mca{H}(\mb{p}^{\star \star})\\[10pt]
     &= \left[\begin{array}{ll}\dfrac{\partial^2 \mca{H}(\mb{p}^{\star \star})}{{\partial {p_H}^2}}, & \dfrac{\partial^2 \mca{H}(\mb{p}^{\star \star})}{{\partial {p_H} \partial p_L}}\vspace{8pt}\\ \dfrac{\partial^2 \mca{H}(\mb{p}^{\star \star})}{\partial p_L \partial p_H}, & \dfrac{\partial^2 \mca{H}(\mb{p}^{\star \star})}{\partial {p_L}^2}\end{array}\right]\\[10pt]
    &= \left[\begin{array}{ll}
    \dfrac{2p_H^{\star \star}}{(p_H^{\star \star})^3} + 2b_H(b_H+c_H)\cdot d_H^{\star \star} \big(1-d_H^{\star \star}\big),& -\big[b_H(b_L+c_L)+b_L(b_H+c_H)\big]\cdot d_H^{\star \star} d_L^{\star \star}\vspace{8pt}\\
    -\big[b_H(b_L+c_L)+b_L(b_H+c_H)\big]\cdot d_H^{\star \star} d_L^{\star \star},& \dfrac{2p_L^{\star \star}}{(p_L^{\star \star})^3} + 2b_L(b_L+c_L)\cdot d_L^{\star \star} \big(1-d_L^{\star \star}\big)
    \end{array}\right]\\[15pt]
    &= \left[\begin{array}{ll}
    2(b_H+c_H)\cdot\big(1-d_H^{\star \star}\big)\cdot\big[(b_H+c_H)-c_H d_H^{\star \star}\big], \hspace{-2mm} &-\big[b_H(b_L+c_L)+b_L(b_H+c_H)\big]\cdot d_H^{\star \star} d_L^{\star \star}\vspace{8pt}\\
    -\big[b_H(b_L+c_L)+b_L(b_H+c_H)\big]\cdot d_H^{\star \star} d_L^{\star \star},  \hspace{-2mm} &2(b_L+c_L)\cdot\big(1-d_L^{\star \star}\big)\cdot\big[(b_L+c_L)-c_L d_L^{\star \star} \big]
    \end{array}\hspace{-1.5mm} \right].
\end{aligned}
\end{equation*}
Note that the last equality results again from substituting in the identity $G_i(\mb{p}^{\star \star},\mb{p}^{\star \star}) = 1/\big[(b_i+c_i)\cdot p_i^{\star \star}\big] + (d_i^{\star \star}-1) = 0$.

The diagonal entries of $\nabla^2 \mca{H}(\mb{p}^{\star \star})$ are clearly positive.
For $i\in \{H,L\}$, define $k_i := b_i/(b_i+c_i)\in [0,1]$.
Then, the determinant of $\nabla^2 \mca{H}(\mb{p}^{\star \star})$ can be computed as follows:
\begin{equation}
\begin{aligned}
&\operatorname{det}\big(\nabla^2 \mca{H}(\mb{p}^{\star \star})\big)\\
=\ \ &4(b_H+c_H)(b_L+c_L)\cdot \big(1-d_H^{\star \star}\big)\big(1-d_L^{\star \star}\big)\cdot \big[(b_H+c_H)-c_H d_H^{\star \star}\big]\big[(b_L+c_L)-c_L d_L^{\star \star}\big]\\
& -\Big(\big[b_H(b_L+c_L)+b_L(b_H+c_H)\big]\cdot d_H^{\star \star} d_L^{\star \star}\Big)^2\\[3pt]
=\ \ & (b_H+c_H)^2(b_L+c_L)^2 \cdot \Big(4 \big(1-d_H^{\star \star}\big)\big(1-d_L^{\star \star}\big) \cdot\big[1-(1-k_H) d_H^{\star \star}\big]\big[1-(1-k_L) d_L^{\star \star}\big] \\
& \qquad \qquad \qquad \qquad \qquad \ -(k_H+k_L)^2 \big(d_H^{\star \star} d_L^{\star \star}\big)^2 \Big)\\[3pt]
\overset{(\Delta_1)}{\geq} &4(b_H+c_H)^2(b_L+c_L)^2 \cdot \Big(\big(1-d_H^{\star \star}\big)\big(1-d_L^{\star \star}\big) \cdot \big[1-(1-k_H) d_H^{\star \star}\big]\big[1-(1-k_L) d_L^{\star \star}\big]- \big(d_H^{\star \star} d_L^{\star \star}\big)^2\Big)\\
\overset{(\Delta_2)}{\geq}& 4(b_H+c_H)^2(b_L+c_L)^2 \cdot d_H^{\star \star} d_L^{\star \star} \Big(\big[1-(1-k_H) d_H^{\star \star}\big]\big[1-(1-k_L)\cdot d_L^{\star \star}\big]- d_H^{\star \star} d_L^{\star \star}\Big)\\
\overset{(\Delta_3)}{\geq} & 4(b_H+c_H)^2(b_L+c_L)^2 \cdot d_H^{\star \star} d_L^{\star \star} \Big[ \big(1-d_H^{\star \star}\big)\big(1- d_L^{\star \star}\big)- d_H^{\star \star} d_L^{\star \star} \Big]\\
= \ \ &4(b_H+c_H)^2(b_L+c_L)^2  \cdot  d_H^{\star \star} d_L^{\star \star}\big(1-d_H^{\star \star}-d_L^{\star \star}\big)\\
\overset{(\Delta_4)}{>} &0,
\end{aligned}
\end{equation}
where inequalities respectively result from the following facts $(\Delta_1)$: $k_H+k_L \leq 2$; $(\Delta_2)$: $1-d_i^{\star \star} > d_{-i}^{\star \star}$ for $i\in \{H,L\}$; $(\Delta_3)$: $1-k_i \leq 1$ for $i\in \{H,L\}$; $(\Delta_4)$: $d_H^{\star \star}+d_L^{\star \star} <1$.
Thus, we conclude that $\nabla^2 \mca{H}(\mb{p}^{\star \star})$ is positive definite. 

By the continuity of $\nabla^2 \mca{H}(\mb{p})$, there exists some constant $\gamma >0$ and a open set $U_\gamma \ni \mb{p}^{\star \star}$ such that $\nabla^2 \mca{H}(\mb{p}) \succeq 2\gamma I_2$, $\forall \mb{p}\in U_\gamma$, where $I_2$ is the $2\times 2$ identity matrix.
Using the second-order Taylor expansion at $\mb{p}^{\star \star}$, for all $\mb{p}\in U_\gamma$, there exists $\widetilde{\mb{p}}\in U_\gamma$ such that 
\begin{equation}
\begin{aligned}
\mca{H}(\mb{p}) &= \mca{H}(\mb{p}^{\star \star}) + \nabla\mca{H}(\mb{p}^{\star \star})\cdot \big(\mb{p}-\mb{p}^{\star \star} \big) + \dfrac{1}{2} \big(\mb{p}-\mb{p}^{\star \star} \big)^\top \cdot \nabla^2 \mca{H}(\widetilde{\mb{p}})\cdot \big(\mb{p}-\mb{p}^{\star \star} \big)\\
&=\dfrac{1}{2} \big(\mb{p}-\mb{p}^{\star \star} \big)^\top \cdot \nabla^2 \mca{H}(\widetilde{\mb{p}})\cdot \big(\mb{p}-\mb{p}^{\star \star} \big)\\[5pt]
&\geq \dfrac{1}{2} \big(\mb{p}-\mb{p}^{\star \star} \big)^\top \cdot 2\gamma I_2 \cdot \big(\mb{p}-\mb{p}^{\star \star} \big)\\
&= \gamma \|\mb{p}-\mb{p}^{\star \star}\|_2^2,
\end{aligned}
\end{equation}
where the second equality arises from that $\mca{H}(\mb{p}^{\star \star})=0$ and $\nabla\mca{H}(\mb{p}^{\star \star}) = 0$.
\end{proof}

\begin{lemma}\label{lemma: Ghat}
For any product $i\in \{H,L\}$, let 
\begin{equation}
\widehat{G}_i(\mb{p}) := \operatorname{sign}\big(p_i^{\star\star} - p_i\big)\cdot G_i\big(\mb{p},\mb{p}\big),
\end{equation}
where $G_i(\mb{p},\mb{r})$ is the scaled partial derivative defined in Eq. \eqref{eq: G_i_def}. Then, $\widehat{G}_i(\mb{p})$ is always increasing as $|p_{i}^{\star\star}-p_{i}|$ increases, and
\begin{enumerate}[leftmargin = *]
    \item when $\mb{p}\in {N_1}\cup N_3$ (see the definition in Eq. \eqref{case_2}), $\widehat{G}_i(\mb{p})$ is decreasing as $|p_{-i}^{\star\star}-p_{-i}|$ increases;
    \item when $\mb{p}\in {N_2}\cup N_4$, $\widehat{G}_i(\mb{p})$ is increasing as $|p_{-i}^{\star\star}-p_{-i}|$ increases. 
\end{enumerate}
\end{lemma}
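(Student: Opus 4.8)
The plan is to reduce both monotonicity assertions to the signs of the two partial derivatives of $G_i(\mb{p},\mb{p})$, using the fact that within each region $N_1,\dots,N_4$ the signs $\operatorname{sign}(p_H^{\star\star}-p_H)$ and $\operatorname{sign}(p_L^{\star\star}-p_L)$ are individually constant. Since $|p_i^{\star\star}-p_i|$ depends only on $p_i$ and $|p_{-i}^{\star\star}-p_{-i}|$ depends only on $p_{-i}$, each assertion concerns the partial monotonicity of $\widehat{G}_i(\mb{p})=\operatorname{sign}(p_i^{\star\star}-p_i)\cdot G_i(\mb{p},\mb{p})$ in a single coordinate. On a fixed region the prefactor $\operatorname{sign}(p_i^{\star\star}-p_i)$ is a constant $\pm 1$, so the direction in which $\widehat{G}_i$ moves is governed by the sign of $\partial G_i(\mb{p},\mb{p})/\partial p_j$ together with whether enlarging $|p_j^{\star\star}-p_j|$ means increasing or decreasing $p_j$.

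First I would record the two derivatives, which are exactly Eq. \eqref{eq: part_G_pp}:
\begin{equation*}
\frac{\partial G_i(\mb{p},\mb{p})}{\partial p_i}=-\frac{1}{(b_i+c_i)p_i^2}-b_i\,d_i(\mb{p},\mb{p})\big(1-d_i(\mb{p},\mb{p})\big)<0,\qquad \frac{\partial G_i(\mb{p},\mb{p})}{\partial p_{-i}}=b_{-i}\,d_i(\mb{p},\mb{p})\,d_{-i}(\mb{p},\mb{p})>0,
\end{equation*}
where the first is strictly negative because $b_i,c_i>0$, $p_i>0$ and $d_i\in(0,1)$, and the second strictly positive because $d_H,d_L>0$.

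For the own-coordinate claim I would treat the two sign cases. If $p_i>p_i^{\star\star}$, then $\widehat{G}_i=-G_i(\mb{p},\mb{p})$ and $|p_i^{\star\star}-p_i|$ grows with $p_i$, so the relevant derivative is $-\partial G_i/\partial p_i>0$; if $p_i<p_i^{\star\star}$, then $\widehat{G}_i=G_i(\mb{p},\mb{p})$ and $|p_i^{\star\star}-p_i|$ grows as $p_i$ decreases, so the relevant derivative is again $-\partial G_i/\partial p_i>0$. Hence $\widehat{G}_i$ increases with $|p_i^{\star\star}-p_i|$ in every region. For the cross-coordinate claim I would run the same bookkeeping region by region. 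In $N_1$ and $N_3$ the two prices lie on the same side of the SNE, so $\operatorname{sign}(p_i^{\star\star}-p_i)$ and the direction that enlarges $|p_{-i}^{\star\star}-p_{-i}|$ combine with $\partial G_i/\partial p_{-i}>0$ to make $\widehat{G}_i$ \emph{decrease}; in $N_2$ and $N_4$ the prices lie on opposite sides of the SNE, which flips exactly one of these signs and makes $\widehat{G}_i$ \emph{increase}. Verifying this for both $i=H$ and $i=L$ in each of the four regions reproduces the two cases in the statement.

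I expect no genuine obstacle here: once the derivative formulas in Eq. \eqref{eq: part_G_pp} are in hand, the argument is pure sign bookkeeping, and the only point demanding care is the consistent interpretation of ``increasing as $|p_{-i}^{\star\star}-p_{-i}|$ increases'' as a one-sided directional statement within each region. The asymmetric strict/non-strict inequalities defining $N_1,\dots,N_4$ in Eq. \eqref{case_2} keep the region assignment unambiguous; on the shared boundaries a vanishing $\operatorname{sign}$ factor simply renders the corresponding monotonicity trivial, so the conclusions are unaffected.
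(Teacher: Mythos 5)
Your proposal is correct and follows essentially the same route as the paper's proof: both reduce the claims to the strict signs $\partial G_i(\mb{p},\mb{p})/\partial p_i<0$ and $\partial G_i(\mb{p},\mb{p})/\partial p_{-i}>0$ from Eq.~\eqref{eq: part_G_pp} and then perform the same region-by-region sign bookkeeping over $N_1,\dots,N_4$. Your explicit verification of the own-coordinate case on both sides of $p_i^{\star\star}$ and your remark about the boundary/vanishing-sign cases are consistent with how the lemma is actually invoked in the paper.
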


\begin{proof}
Without loss of generality, consider the case that product $i =H$ and product $-i = L$.

Apparently, we have $\operatorname{sign}\big(p_H^{\star\star} - p_H\big)\leq 0$ in $N_1\cup N_4$, and $\operatorname{sign}\big(p_H^{\star\star} - p_H\big)\geq 0$ in $N_2\cup N_3$ by definitions (see Eq. \eqref{case_2}).
On the other hand, by Eq. \eqref{eq: part_der_b} in Lemma \ref{lemma: constant}, it holds that $\partial G_H\big(\mb{p},\mb{p}\big)/\partial p_{H} < 0$ and $\partial G_H\big(\mb{p},\mb{p}\big)/\partial p_{L} > 0$, $\forall \mb{p}\in \mca{P}^2$. 
Thus, 
\begin{enumerate}[leftmargin = *]
    \item in $N_1\cup N_4$, $G_H\big(\mb{p},\mb{p}\big)$ is decreasing as  $|p_H^{\star\star}-p_H|$ increases; in $N_2\cup N_3$, $G_H\big(\mb{p},\mb{p}\big)$ is increasing as $|p_H^{\star\star}-p_H|$ increases;
    \item in $N_1\cup N_2$, $G_H\big(\mb{p},\mb{p}\big)$ is increasing as  $|p_L^{\star\star}-p_L|$ increases; conversely, $G_H\big(\mb{p},\mb{p}\big)$ is decreasing as $|p_L^{\star\star}-p_L|$ increases in $N_3\cup N_4$.
\end{enumerate}

The final results directly follows by combining the above pieces together.
\end{proof}

\begin{lemma}\label{lemma: constant}
    Let $G_i(\mb{p}, \mb{r})$ be the scaled partial derivative defined in Eq. \eqref{eq: G_i_def}, then partial derivatives of $G_i(\mb{p}, \mb{r})$ with respect to $\mb{p}$ and $\mb{r}$ are given as
    \begin{subequations}
    \begin{align}
        &\dfrac{\partial G_i(\mb{p}, \mb{r})}{\partial p_i} = - \dfrac{1}{(b_i+c_i)p_i^2} - (b_i + c_i) \cdot d_i(\mb{p}, \mb{r}) \cdot \big(1-d_i(\mb{p}, \mb{r})\big);\label{eq: part_der_a}\\[3pt]
        &\dfrac{\partial G_i(\mb{p}, \mb{r})}{\partial p_{-i}} = (b_{-i} + c_{-i})\cdot d_i(\mb{p}, \mb{r}) \cdot d_{-i}(\mb{p}, \mb{r});\label{eq: part_der_b}\\[3pt]
        &\dfrac{\partial G_i(\mb{p}, \mb{r})}{\partial r_i} = c_i \cdot d_i(\mb{p}, \mb{r}) \cdot \big(1-d_i(\mb{p}, \mb{r})\big);\label{eq: part_der_c}\\[3pt]
        &\dfrac{\partial G_i(\mb{p}, \mb{r})}{\partial r_{-i}} = -c_{-i} \cdot d_i(\mb{p}, \mb{r}) \cdot d_{-i}(\mb{p}, \mb{r}).\label{eq: part_der_d}
    \end{align}
    \end{subequations}
    
    Meanwhile, $G_i(\mb{p},\mb{r})$ and its gradient are bounded as follows
    \begin{equation}
        \big|G_i(\mb{p}, \mb{r})\big| \leq M_G, \quad
        \big\|\nabla_{\mb{r}} G_i(\mb{p}, \mb{r})\big\|_2 \leq l_r,
        \quad \forall \mb{p}, \mb{r} \in \mca{P}^2 \text{ and } \forall i \in \{H, L\},
    \end{equation}
    where the upper bound constant $M_G$ and the Lipschitz constant $l_r$ are defined as
    \begin{equation}\label{eq: M_G_l_r}
    \begin{aligned}
        M_G:= \max\left\{\dfrac{1}{(b_H+c_L)\underline{p}} , \dfrac{1}{(b_L+c_L)\underline{p}} \right\} + 1, \quad l_r:=\dfrac{1}{4}\sqrt{c_H^2 + c_L^2}.
    \end{aligned}
    \end{equation}
\end{lemma}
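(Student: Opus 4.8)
The plan is to treat this as two separate, essentially mechanical tasks: first the closed forms of the four partial derivatives, then the two magnitude bounds. For the derivatives, I would begin by rewriting the deterministic utility in the convenient form $u_i(p_i,r_i) = a_i - (b_i+c_i)p_i + c_i r_i$, so that $\partial u_i/\partial p_i = -(b_i+c_i)$, $\partial u_i/\partial r_i = c_i$, and $u_i$ is independent of $p_{-i}$ and $r_{-i}$. Next I would invoke the standard softmax/logit derivative identities for the demand in Eq.~\eqref{eq: demand}, namely $\partial d_i/\partial u_i = d_i(1-d_i)$ and $\partial d_i/\partial u_{-i} = -d_i\, d_{-i}$. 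Since $G_i(\mb{p},\mb{r}) = 1/\big((b_i+c_i)p_i\big) + d_i(\mb{p},\mb{r}) - 1$, each of the four partials follows from the chain rule: the $1/\big((b_i+c_i)p_i\big)$ term contributes only to $\partial G_i/\partial p_i$ (giving the $-1/\big((b_i+c_i)p_i^2\big)$ summand), while the remaining contributions come from differentiating $d_i$ through $u_i$ or $u_{-i}$ and multiplying by the appropriate utility derivative above. This reproduces Eqs.~\eqref{eq: part_der_a}--\eqref{eq: part_der_d} directly.

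For the bound $|G_i(\mb{p},\mb{r})| \leq M_G$, I would bound the two summands of $G_i$ separately. Because $p_i \in [\underline{p},\overline{p}]$ with $\underline{p}>0$, the first term satisfies $0 < 1/\big((b_i+c_i)p_i\big) \leq 1/\big((b_i+c_i)\underline{p}\big)$, and since $d_i \in (0,1)$ we have $|d_i - 1| < 1$. The triangle inequality then yields $|G_i(\mb{p},\mb{r})| \leq 1/\big((b_i+c_i)\underline{p}\big) + 1$, and taking the maximum over $i \in \{H,L\}$ gives the stated constant $M_G$ in Eq.~\eqref{eq: M_G_l_r}.

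For the gradient bound, I would read off from Eqs.~\eqref{eq: part_der_c}--\eqref{eq: part_der_d} that $\nabla_{\mb{r}} G_i = \big(c_i\, d_i(1-d_i),\, -c_{-i}\, d_i\, d_{-i}\big)$, so that $\|\nabla_{\mb{r}} G_i\|_2^2 = c_i^2\big(d_i(1-d_i)\big)^2 + c_{-i}^2\big(d_i\, d_{-i}\big)^2$. The two elementary facts that close the argument are $d_i(1-d_i) \leq 1/4$ (from $x(1-x)\le 1/4$ on $[0,1]$) and $d_i\, d_{-i} \leq 1/4$. The only point requiring a word of justification—and the closest thing to an obstacle here—is the second inequality: it follows from $d_H + d_L \leq 1$ (the two market shares are disjoint probabilities leaving room for the no-purchase option) combined with AM--GM, $d_i\, d_{-i} \leq \big((d_i+d_{-i})/2\big)^2 \leq 1/4$. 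Substituting both bounds gives $\|\nabla_{\mb{r}} G_i\|_2^2 \leq (c_i^2 + c_{-i}^2)/16 = (c_H^2 + c_L^2)/16$, and taking square roots yields $\|\nabla_{\mb{r}} G_i\|_2 \leq l_r = \tfrac14\sqrt{c_H^2 + c_L^2}$, completing the proof. Overall this lemma is computational rather than conceptual; the main care is simply in tracking signs through the chain rule and in noting that $d_H+d_L\le 1$ licenses the product bound.
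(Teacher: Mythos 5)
Your proposal is correct and follows essentially the same route as the paper: the derivative formulas are obtained by the same chain-rule/logit computation (the paper differentiates the explicit ratio of exponentials directly, you route it through the softmax identities $\partial d_i/\partial u_i = d_i(1-d_i)$ and $\partial d_i/\partial u_{-i} = -d_i d_{-i}$, which is the same calculation organized slightly differently), the bound on $|G_i|$ is the identical triangle-inequality argument, and the gradient bound uses the same fact that a product of two nonnegative numbers summing to at most one is at most $1/4$, which you justify via AM--GM exactly as the paper does implicitly.
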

\begin{proof}
We first verify the partial derivatives from Eq. \eqref{eq: part_der_a}  to Eq. \eqref{eq: part_der_d}:

\begin{equation}\label{eq: partial_G_p}
\begin{aligned}
    \dfrac{\partial G_i(\mb{p}, \mb{r})}{\partial p_i}
    &= \dfrac{1}{(b_i + c_i)p_i^2} + \dfrac{\partial d_i(\mb{p}, \mb{r})}{\partial p_i}\\
    &= \dfrac{1}{(b_i + c_i)p_i^2} - \dfrac{(b_i +c_i) \cdot \exp\big(u_i(p_i, r_i)\big)\cdot \Big(1 + \exp\big(u_{-i}(p_{-i}, r_{-i})\big)\Big)}{\Big(1+\exp\big(u_i(p_i, r_i)\big) + \exp\big(u_{-i}(p_{-i}, r_{-i})\big)\Big)^2}\\
    &= \dfrac{1}{(b_i + c_i)p_i^2} - (b_i+c_i) \cdot d_i(\mb{p}, \mb{r}) \cdot \big(1-d_i(\mb{p},\mb{r})\big).\\[20pt]
    \dfrac{\partial G_i(\mb{p}, \mb{r})}{\partial p_{-i}} &= \dfrac{\partial d_i(\mb{p}, \mb{r})}{\partial p_{-i}}\\[3pt]
    &= \dfrac{(b_{-i} +c_{-i}) \cdot \exp\big(u_{i}(p_{i}, r_{i})\big)\cdot  \exp\big(u_{-i}(p_{-i}, r_{-i})\big)}{\Big(1+\exp\big(u_i(p_i, r_i)\big) + \exp\big(u_{-i}(p_{-i}, r_{-i})\big)\Big)^2}\\[3pt]
    &= (b_{-i}+c_{-i}) \cdot d_i(\mb{p}, \mb{r}) \cdot d_{-i}(\mb{p},\mb{r}).
\end{aligned}
\end{equation}

Then, the partial derivatives with respect to $\mb{r}$, as shown in Eq. \eqref{eq: part_der_c} and Eq. \eqref{eq: part_der_d}, can be similarly computed.

In the next part, we show that $G_i(\mb{p},\mb{r})$ is bounded for $\mb{p}, \mb{r} \in \mca{P}^2$:
\begin{equation}
\begin{aligned}
    \big|G_i(\mb{p}, \mb{r})\big| &= \left|\dfrac{1}{(b_i+c_i)p_i} + d_i(\mb{p}, \mb{r})-1\right|\\
    &\leq  \left|\dfrac{1}{(b_i+c_i)p_i} \right|+ \big|d_i(\mb{p}, \mb{r})-1\big|\\
    &\leq \dfrac{1}{(b_i+c_i)\underline{p}} + 1. 
\end{aligned}
\end{equation}
Hence, it follows that $\big|G_i(\mb{p}, \mb{r})\big|$ for every $i \in \{H, L\}$ is upper bounded by 
\begin{equation}
     \big|G_i(\mb{p}, \mb{r})\big| \leq \max\left\{\dfrac{1}{(b_H+c_L)\underline{p}} , \dfrac{1}{(b_L+c_L)\underline{p}} \right\} + 1 =: M_G, \quad \forall \mb{p}, \mb{r} \in \mca{P}^2, \forall i \in \{H, L\}.
\end{equation}

Finally, we demonstrate that $\left\|\nabla_{\mb{r}}G_i(\mb{p}, \mb{r})\right\|$ is also bounded $\forall \mb{p}, \mb{r} \in \mca{P}^2$ and $\forall i \in \{H, L\}$. From Eq. \eqref{eq: part_der_c} and Eq. \eqref{eq: part_der_d}, we have
\begin{equation}
\begin{aligned}
    \big\|\nabla_{\mb{r}}G_i(\mb{p}, \mb{r})\big\|_2^2 &= \Big(c_i \cdot d_i(\mb{p}, \mb{r}) \cdot \big(1-d_i(\mb{p}, \mb{r})\big)\Big)^2 +\Big( -c_{-i} \cdot d_i(\mb{p}, \mb{r}) \cdot d_{-i}(\mb{p}, \mb{r})\Big)^2\\
    &= c_i^2 \cdot \Big(d_i(\mb{p},\mb{r}) \cdot \big(1-d_i(\mb{p}, \mb{r})\big)\Big)^2 + c_{-i}^2 \cdot \Big(d_i(\mb{p},\mb{r}) \cdot d_{-i}(\mb{p}, \mb{r})\Big)^2\\
    &\leq \dfrac{1}{16}\left(c_i^2+c_{-i}^2\right),
\end{aligned}
\end{equation}
where the inequality follows from the fact that $x\cdot y \leq1/4$ for any two numbers such that $x, y>0$ and $x+y\leq 1$.
Thus, it follows that $\big\|\nabla_{\mb{r}}G_i(\mb{p}, \mb{r})\big\|_2 \leq (1/4)\sqrt{c_H^2+c_L^2} := l_r$, $\forall \mb{p}, \mb{r} \in \mca{P}^2$ and $\forall i\in\{H, L\}$.
\end{proof}

\section*{Limitaions}\label{sec: limitaion}
This is a theoretical work that concerns with algorithm design in a competitive market, with the primary goal of learning a stable equilibrium.
The consumer demand follows the multinomial logit model, and we assume that their price/reference price sensitivities are positive, i.e., $b_i,c_i >0$.
The feasible price range $\mca{P} = [\underline{p},\overline{p}]$ is assumed to contain the unique SNE, with the lower bound $\underline{p} >0$.
We have justified both assumptions in the main body of the paper (see the discussion below Eq. \eqref{eq: def_Pi} and Proposition \ref{prop: SNE}).

The convergence results in the paper assume the firms take diminishing step-sizes $\{\eta^t\}_{t \geq 0}$ with $\sum_{t=0}^{\infty} \eta^t = \infty$, which is common in the literature of online games (see, e.g., \cite{mertikopoulos2019learning,bravo2018bandit,lin2021optimal}). We also discuss the extension to constant step-sizes in Remark \ref{rmk: constant}.

\end{appendices}

\end{document}